\def\labelbox#1{%
  \hbox{%
    \setbox\z@=\hbox{$\m@th\labelstyle{\,#1\,}$}%
    \setbox\tw@=\hbox{$\m@th\labelstyle\,$}%
    \dimen@=\ht\z@ \advance\dimen@ by \wd\tw@ \ht\z@=\dimen@
    \dimen@=\dp\z@ \advance\dimen@ by \wd\tw@ \dp\z@=\dimen@
    \box\z@
  }%
}
\renewcommand{\emptyset}{\varnothing}
\newcommand*{\clearisbn}{%
  \iffieldundef{doi}
    {}
    {\clearfield{isbn}\clearfield{issn}}}
\definecolor{darkred}{rgb}{0.8,0.1,0.1}
\theoremstyle{plain}
\newtheorem{theo}{Theorem}[section]
\newtheorem{lem}[theo]{Lemma}
\newtheorem{propo}[theo]{Proposition}
\newtheorem{cor}[theo]{Corollary}
\theoremstyle{definition}
\newtheorem{defi}[theo]{Definition}
\newtheorem{assu}[theo]{Assumption}
\newenvironment{ex}
  {\pushQED{\qed}\exx}
  {\popQED\endexx}
\newenvironment{rem}
  {\pushQED{\qed}\remm}
  {\popQED\endremm}
\numberwithin{equation}{section}
\def\nn{\nonumber}
\def\bbK{\mathbb{K}}
\def\bbR{\mathbb{R}}
\def\bbC{\mathbb{C}}
\def\bbZ{\mathbb{Z}}
\def\bbS{\mathbb{S}}
\def\ii{{\,{\rm i}\,}}
\def\Hom{\mathrm{Hom}}
\def\Imm{\mathrm{Im}}
\def\Ker{\mathrm{Ker}}
\def\id{\mathrm{id}}
\def\supp{\mathrm{supp}}
\def\dd{\mathrm{d}}
\def\vol{\mathrm{vol}}
\def\cc{\mathrm{c}}
\def\vc{\mathrm{vc}}
\def\dim{\mathrm{dim}}
\def\1{\mathbf{1}}
\def\oone{\mathds{1}}
\def\op{\mathrm{op}}
\def\ev{\mathrm{ev}}
\def\pr{\mathrm{pr}}
\def\2AQFT{\mathbf{2AQFT}}
\def\Open{\mathbf{Open}}
\def\Ch{\mathbf{Ch}}
\def\dgastAlg{\mathbf{dg}^\ast\mathbf{Alg}_\bbC}
\def\ext{\operatorname{ext}}
\def\AAA{\mathfrak{A}}
\def\BBB{\mathfrak{B}}
\def\LLL{\mathfrak{L}}
\def\BBB{\mathfrak{B}}
\def\CCC{\mathfrak{C}}
\def\FFF{\mathfrak{F}}
\def\Sol{\mathfrak{Sol}}
\def\CCR{\mathfrak{CCR}}
\def\se{{\scriptscriptstyle \searrow}}
\def\sw{{\scriptscriptstyle \swarrow}}
\def\ne{{\scriptscriptstyle \nearrow}}
\def\nw{{\scriptscriptstyle \nwarrow}}
\def\sk{\vspace{1mm}}
\let\@fnsymbol\@alph
\title{%
The linear CS/WZW bulk/boundary system in AQFT
}
\author{%
Marco Benini$^{1,2,a}$, Alastair Grant-Stuart$^{3,b}$\ and\ 
Alexander Schenkel$^{3,c}$\vspace{4mm}\\
{\small ${}^1$ Dipartimento di Matematica, Universit\`a di Genova,}\\
{\small Via Dodecaneso 35, 16146 Genova, Italy.}\vspace{2mm}\\
{\small ${}^2$ INFN, Sezione di Genova,}\\
{\small Via Dodecaneso 33, 16146 Genova, Italy.}\vspace{2mm}\\
{\small ${}^3$ School of Mathematical Sciences, University of Nottingham,}\\
{\small University Park, Nottingham NG7 2RD, United Kingdom.}\vspace{4mm}\\
{\small \begin{tabular}{ll}
Email: & ${}^a$~\texttt{marco.benini@unige.it}\\
& ${}^b$~\texttt{alastair.grant-stuart@nottingham.ac.uk}\\
& ${}^c$~\texttt{alexander.schenkel@nottingham.ac.uk}\vspace{2mm}
\end{tabular}
}
}
\date{June 2023}
\begin{document}

\maketitle

\begin{abstract}
\noindent This paper constructs in the framework of algebraic quantum field theory (AQFT) the linear Chern-Simons/Wess-Zumino-Witten system on a class of $3$-manifolds $M$ whose boundary $\partial M$ is endowed with a Lorentzian metric. It is proven that this AQFT is equivalent to a dimensionally reduced AQFT on a $2$-dimensional manifold $B$, whose restriction to the $1$-dimensional boundary $\partial B$ is weakly equivalent to a chiral free boson.
\end{abstract}

\paragraph*{Keywords:} algebraic quantum field theory, boundary conditions in gauge theory, chiral conformal field theory, Chern-Simons theory, Wess-Zumino-Witten model

\paragraph*{MSC 2020:} 81T70, 81T13

\renewcommand{\baselinestretch}{0.8}\normalsize
\tableofcontents
\renewcommand{\baselinestretch}{1.0}\normalsize

\newpage


\section{\label{sec:intro}Introduction and summary}
Quantum field theories (QFTs) that are defined on manifolds with boundary
are known to exhibit interesting phenomena that arise from a rich
interplay between bulk and boundary degrees of freedom. A prime example
is given by the Chern-Simons/Wess-Zumino-Witten (in short, CS/WZW) bulk/boundary
system that describes a topological QFT in the bulk, a chiral conformal QFT
on the boundary and their non-trivial interplay, leading to interesting 
phenomena such as holography.
This was observed first
in \cite{CSWZW1} and subsequently studied in more detail by many authors, 
see e.g.\ \cite{CSWZW2,CSWZW3,CSWZW4,Mnev,PulmannSeveraValach2021,GRW,CattaneoMnevWernli2023}. One of the main
mechanisms causing an interplay between bulk and boundary degrees
of freedom is gauge invariance of the combined bulk/boundary system.
This leads to various consistency conditions between bulk and boundary fields 
that can be analyzed, for instance, in the BV/BFV formalism of 
Cattaneo, Mnev and Reshetikhin \cite{CMR1,CMR2}.
\sk

The aim of this paper is to study in detail a simple and explicit example
of such gauge-theoretic bulk/boundary system from the point of view
of algebraic QFT (AQFT) \cite{Haag,BFV}, or more precisely
its homotopical refinement \cite{BSWhomotopy,BSreview} that is
better suited to deal with gauge theories. The model
we study is given by (a Lorentz geometric variant of) the linear CS/WZW bulk/boundary system,
which describes linear Chern-Simons theory in the bulk of a $3$-manifold $M$ and 
a chiral free boson on its boundary $\partial M$, together with their interplay.
More precisely, we study Chern-Simons theory with linear 
structure group $G=\mathbb{R}$ on an oriented
$3$-manifold $M$, whose boundary $\partial M$ is endowed with (the conformal
class of) a Lorentzian metric and a time-orientation, in the presence
of a Lorentzian version of the chiral WZW boundary condition on $\partial M$.
A holomorphic variant of this example, where $\partial M$ is endowed
with a complex structure, was studied recently in the context of factorization algebras 
({\`a} la Costello-Gwilliam \cite{CostelloGwilliam,CostelloGwilliam2})
by Gwilliam, Rabinovich and Williams \cite{GRW}. 
\sk

The physics of the CS/WZW bulk/boundary system is already well-understood,
and by restricting to the linear structure group $G=\mathbb{R}$ we only 
study a particularly simple instance of it. Of interest in this work, though,
are the mathematical techniques we use to construct and analyze the model
as an AQFT.
They relate to multiple recent developments in AQFT and differ considerably from 
how \cite{GRW} construct the linear CS/WZW system as a factorization algebra.
Some of the key features are:
\begin{itemize}
\item[(1)] While the factorization algebra for the holomorphic
CS/WZW system is constructed in \cite{GRW} by linear BV quantization of a shifted Poisson structure (the antibracket),
our construction of an AQFT for the Lorentzian variant of this system uses canonical commutation
relation (CCR) quantization of a related \textit{unshifted} Poisson structure. We will determine
the latter by using the novel concept of Green's homotopies from \cite{BMS}, which are
homotopical generalizations of Green's operators, in combination with techniques
from $2$-dimensional chiral conformal Lorentzian geometry \cite{Grant-Stuart}.
In particular, we provide new examples of Green's homotopies for
gauge-theoretic bulk/boundary systems, which contributes novel types of examples to the literature on Green's 
operators in the presence of Lorentzian boundaries, see e.g.\ \cite{Dappiaggi,BDSboundary,Dappiaggi2}.

\item[(2)] Our AQFT for the CS/WZW bulk/boundary system is defined on a suitable category
of open subsets of the $3$-manifold $M$ and it satisfies a homotopical version of 
local constancy, or the time-slice axiom. Using recent strictification theorems
for the homotopy time-slice axiom \cite{BCStimeslice}, we will prove that this AQFT is
equivalent to an AQFT that is defined on a $2$-dimensional manifold $B$, playing a similar role
to a Cauchy surface. This dimensional reduction from $3$ to $2$ dimensions is crucial
to identify the degrees of freedom of the CS/WZW system near the boundary $\partial B$ 
with a chiral free boson, which in the context of AQFT is described on $1$-manifolds (i.e.\ light rays).
\end{itemize}

The outline of the remainder of this paper is as follows:
In Section \ref{sec:CSprelim}, we review the linear Chern-Simons complex
on an oriented $3$-manifold $M$ with boundary $\partial M\neq \emptyset$ from \cite{GRW} and introduce our Lorentz
geometric variant of the chiral WZW boundary condition, which we call the 
(anti-)self-dual boundary condition $\LLL^\pm$, see \eqref{eqn:boundarycondition}. In Section \ref{sec:greens_homotopy},
we define a suitable concept of chiral Green's homotopies for the Chern-Simons
complex and prove their existence under suitable niceness conditions, see Proposition \ref{prop:G_Theta:restricts_to_FL}.
The main tool we use to prove the existence of Green's homotopies
is an extension of the $\pm$-chiral flow $\Theta_\pm$ 
\cite{Grant-Stuart} on the boundary Lorentzian $2$-manifold $\partial M$ to a proper $\bbR$-action 
$\widehat{\Theta}_\pm: \bbR\times M\to M $ on the bulk manifold. One may interpret
this as an arbitrary extension of the concepts of $\pm$-chiral null curves from $\partial M$
into the bulk $M$. Equipped with these Green's homotopies, we construct
in Section \ref{sec:AQFTonM} an AQFT $\AAA_\pm$ for the CS/WZW bulk/boundary system
that assigns differential graded $\ast$-algebras (in short, dg-algebras)
to suitable open subsets $U\subseteq M$. We also prove that this AQFT
satisfies a chiral variant of the Einstein causality axiom and of the homotopy time-slice axiom,
see Proposition \ref{prop:propertiesPoissonfunctor} and Corollary \ref{cor:AQFTbulk}.
\sk

In Section \ref{sec:dimred}, we use the strictification theorems for the homotopy time-slice
axiom from \cite{BCStimeslice} in order to prove that our $3$-dimensional AQFT $\AAA_\pm$
for the CS/WZW bulk/boundary system on $M$ is equivalent to a $2$-dimensional AQFT $\BBB_\pm$
that is defined on the quotient space $B_\pm := M/\! \sim_{\pm}$ associated with
the proper $\bbR$-action $\widehat{\Theta}_\pm: \bbR\times M\to M $, see Corollary \ref{cor:dimreducedAQFT}. 
(Choosing a smooth section of the associated principal $\bbR$-bundle $\widehat{\pi}_\pm : M\to B_\pm$
allows one to think of $B_\pm$ as a kind of ``Cauchy surface'' in $M$.)
Using this phenomenon of dimensional reduction, we also prove that, under suitable topological assumptions on $M$,
the AQFT $\AAA_\pm$ for the CS/WZW bulk/boundary system is insensitive to the choice
of extension of the $\pm$-chiral flow $\Theta_\pm$ on $\partial M$ to a proper $\bbR$-action $\widehat{\Theta}_\pm$
on $M$, see Corollary \ref{cor:independence}. Hence, our AQFT construction is canonical in these cases.
\sk

The aim of Section \ref{sec:analysis} is to analyze the dimensionally reduced AQFT $\BBB_\pm$
that is defined on the quotient $2$-manifold $B_\pm := M/\! \sim_{\pm}$. 
Upon restriction to the interior $\mathrm{Int}B_\pm\subseteq B_\pm$, we find that
this AQFT specializes to the $2$-dimensional part of 
linear Chern-Simons theory, which is given by canonical quantization of the Chern-Simons 
phase space along the Atiyah-Bott Poisson structure \cite{AtiyahBott}.
Furthermore, restricting the AQFT $\BBB_\pm$ to a tubular neighborhood $\partial B_\pm \times[0,1)\subseteq B_\pm$
of the boundary, we obtain an AQFT that is equivalent to the chiral free boson on the $1$-manifold 
$\partial B_\pm$, see Corollary \ref{cor:chiralbosonQFT}. The interplay between
bulk and boundary degrees of freedom is realized in our approach through dg-algebra morphisms
$\BBB_\pm(V)\to\BBB_\pm(V^\prime)$ that are associated with subset inclusions
$V \subseteq V^\prime\subseteq B_\pm$ from interior regions $V\subseteq \mathrm{Int} B_\pm$
into regions $V^\prime\subseteq B_\pm$ that intersect the boundary, 
i.e.\ $V^\prime\cap\partial B_\pm\neq \emptyset$. This allows us to describe certain
Chern-Simons observables in the bulk in terms of equivalent chiral boson observables
on the boundary, which is illustrated in Example \ref{ex:holonomy} by studying
the boundary observable corresponding to the Chern-Simons observable that measures the holonomy
of flat connections. Appendix \ref{app:appendix} contains some technical details that
are required in the main part of the paper.

\paragraph{Notation and conventions for cochain complexes:} Let us fix a field $\bbK$
of characteristic $0$, e.g.\ the real numbers $\bbR$ or the complex numbers $\bbC$. 
A cochain complex $V$ consists a family of $\bbK$-vector spaces $\{V^i\}_{i\in\bbZ}$ and
a family of degree increasing linear maps $\{\dd : V^i\to V^{i+1}\}_{i\in\bbZ}$ (called the differential)
that squares to zero $\dd^2 =0$. A cochain map $f:V\to W$ between two cochain complexes
is a family of linear maps $\{f:V^i\to W^i\}_{i\in\bbZ}$ that commutes with the differentials
$f\circ \dd = \dd\circ f$. We denote by $\Ch_\bbK$ the category of cochain complexes and cochain maps.
\sk

Given any cochain complex $V$ and integer $p\in\bbZ$, we define the $p$-shifted 
cochain complex $V[p]$ by $V[p]^i := V^{i+p}$, for all $i\in\bbZ$, and the differential
$\dd_{[p]}:= (-1)^p\,\dd$.
\sk

To each cochain complex $V$ one can assign its cohomology, which is
the graded vector space defined by $\mathsf{H}^i(V):= \Ker(\dd:V^i\to V^{i+1})\big/\Imm(\dd:V^{i-1}\to V^i)$,
for all $i\in\bbZ$. The assignment of cohomology defines a functor $\mathsf{H}^\bullet$ from $\Ch_\bbK$
to the category of $\bbZ$-graded vector spaces. A morphism $f:V\to W$ in $\Ch_\bbK$ 
is called a quasi-isomorphism if it induces an isomorphism $\mathsf{H}^\bullet(f)
: \mathsf{H}^\bullet(V)\to \mathsf{H}^\bullet(W)$ in cohomology.
\sk

The category $\Ch_\bbK$ is closed symmetric monoidal. The monoidal product $V\otimes W$ 
of two cochain complexes is given by the family of vector spaces
\begin{subequations}
\begin{flalign}
(V\otimes W)^i \, :=\, \bigoplus_{j\in\bbZ} \left(V^j\otimes W^{i-j}\right)\quad,
\end{flalign}
for all $i\in\bbZ$, and the differential $\dd^\otimes$ that is 
defined by the graded Leibniz rule 
\begin{flalign}
\dd^\otimes(v\otimes w) \,:=\, (\dd v)\otimes w + (-1)^{\vert v\vert }\,v\otimes(\dd w)\quad,
\end{flalign}
\end{subequations}
for all homogeneous $v\in V$ and $w\in W$, where $\vert \cdot\vert\in\bbZ$ denotes the degree.
The monoidal unit is $\bbK$, regarded as a cochain complex concentrated in degree $0$ with trivial differential,
and the symmetric braiding $V\otimes W\to W\otimes V\, ,~v\otimes w\mapsto (-1)^{\vert v\vert\,\vert w\vert}\, w\otimes v$
is given by the Koszul sign rule.
The internal hom $[V,W]$ between two cochain complexes is given by the family of vector spaces
\begin{subequations}
\begin{flalign}
[V,W]^i\,:=\, \prod_{j\in\bbZ} \Hom_{\bbK}\big(V^j,W^{j+i}\big)\quad,
\end{flalign}
for all $i\in\bbZ$, where $\Hom_{\bbK}$ denotes the vector space of linear maps,
and the differential $\partial$ that is defined by the graded commutator 
\begin{flalign} \label{eqn:hom_differential}
\partial L\,:=\,\dd\circ L - (-1)^{\vert L\vert}\, L\circ \dd\quad,
\end{flalign}
\end{subequations}
for all homogeneous $L\in[V,W]$. 
\sk

Observe that a cochain map $f:V\to W$
is the same datum as a $0$-cocycle in the internal hom complex $[V,W]$, i.e.\ $f\in[V,W]^0$ such that
$\partial f =0$. Furthermore, a cochain homotopy between two cochain maps $f,g:V\to W$
is an element $h\in [V,W]^{-1}$ of degree $-1$ such that $\partial h = g-f$.
Higher cochain homotopies admit a similar interpretation in terms of the internal hom complex.


\section{\label{sec:CSprelim}Chern-Simons with (anti-)self-dual boundary condition}
In this section we describe the cochain complex that encodes linear Chern-Simons theory
on an oriented $3$-dimensional manifold $M$ and its $(-1)$-shifted symplectic structure.
The case where the boundary $\partial M =\emptyset$ is empty is quite standard
and details can be found e.g.\ in the book of Costello and Gwilliam \cite[Chapter 4.5]{CostelloGwilliam}.
The case where the boundary $\partial M \neq \emptyset$ is non-empty is more involved
as it requires the choice of a boundary condition to obtain a well-defined
$(-1)$-shifted symplectic structure. These aspects, including particular choices of boundary conditions,
have been studied systematically in the recent work \cite{GRW} of Gwilliam, Rabinovich and Williams,
which serves as our main reference for this section.

\paragraph{Empty boundary $\partial M=\emptyset$:}
Consider an oriented $3$-manifold $M$ without boundary $\partial M=\emptyset$.
The Chern-Simons complex is defined as the $1$-shifted de Rham complex
\begin{flalign}\label{eqn:fieldcomplex}
\FFF(M)\,:=\, \Omega^\bullet(M)[1] \,=\,\Big(
\xymatrix@C=2em{
\stackrel{(-1)}{\Omega^0(M)} \ar[r]^-{-\dd}
& \stackrel{(0)}{\Omega^1(M)}\ar[r]^-{-\dd} 
& \stackrel{(1)}{\Omega^2(M)} \ar[r]^-{-\dd}
& \stackrel{(2)}{\Omega^3(M)}
}
\Big)\quad,
\end{flalign}
where $\dd$ denotes the de Rham differential and round brackets indicate the cohomological degree.
It is important to note that the latter differs by $1$ from the de Rham degree, i.e.\
$\vert\alpha\vert_{\mathrm{dR}} = \vert \alpha\vert + 1$ for all homogeneous $\alpha\in\FFF(M)$.
The physical interpretation of the complex $\FFF(M)$ is as follows: 
An element $A\in \FFF(M)^0 = \Omega^1(M)$ in degree $0$ is a gauge field (principal $\bbR$-connection)
and an element $c\in \FFF(M)^{-1}=\Omega^0(M)$ in degree $-1$ is a ghost field 
(infinitesimal gauge transformation). The elements in positive degrees are the antifields
$A^{\ddagger}\in\FFF(M)^{1}=\Omega^2(M)$ and the antifields for ghosts $c^{\ddagger}\in \FFF(M)^{2}=\Omega^3(M)$.
As typical for the BV formalism, the differential encodes both the action of gauge symmetries
and the equation of motion, which in the case of linear 
Chern-Simons theory is the flatness condition  $\dd A =0$.
\sk

Using the wedge product of differential forms, we can define a cochain map
\begin{flalign}\label{eqn:wedgepairing}
(\,\cdot\,,\,\cdot\,)\,:\, \FFF(M)\otimes \FFF(M) ~\longrightarrow~ \Omega^\bullet(M)[2]~~,\quad
\alpha\otimes \beta ~\longmapsto~(\alpha,\beta)\,=\,  (-1)^{\vert \alpha\vert}\,\alpha\wedge \beta
\end{flalign}
to the $2$-shifted de Rham complex. Observe that this map is graded antisymmetric, i.e.\
\begin{flalign}
(\alpha,\beta) = - (-1)^{\vert \alpha\vert\,\vert \beta\vert}\,(\beta,\alpha)\quad,
\end{flalign}
for all homogeneous $\alpha,\beta\in\FFF(M)$. Since by hypothesis the boundary
$\partial M=\emptyset$ is empty, integration of compactly supported
forms defines a cochain map $\int_M : \Omega_\cc^\bullet(M)[2]\to \bbR[-1]$, which allows
us to define a $(-1)$-shifted symplectic structure 
\begin{flalign}\label{eqn:omega-1}
\omega_{(-1)} \,:\, \xymatrix@C=2.5em{
\FFF_\cc(M)\otimes\FFF_{\cc}(M) \ar[r]^-{(\,\cdot\,,\,\cdot\,)} ~&~ \Omega^\bullet_\cc(M)[2] \ar[r]^-{\int_M} ~&~
\bbR[-1]
}
\end{flalign}
on the subcomplex $\FFF_\cc(M)\subseteq \FFF(M)$ of compactly supported sections.

\paragraph{Non-empty boundary $\partial M\neq\emptyset$:}
Consider an oriented $3$-manifold $M$ with non-empty boundary $\partial M\neq \emptyset$.
As field complex we take again the $1$-shifted de Rham complex \eqref{eqn:fieldcomplex},
but now on a manifold with boundary. The 
$(-1)$-shifted symplectic structure  \eqref{eqn:omega-1} does \textit{not} directly generalize to
the present case because the map $\omega_{(-1)}$ fails to be a cochain map. 
Indeed, as a consequence of Stokes' theorem, one has that
\begin{flalign}\label{eqn:incompatibility}
-\partial \omega_{(-1)}(\alpha\otimes \beta) = 
\omega_{(-1)}\circ\dd^\otimes(\alpha\otimes \beta) = \int_{M}\dd (\alpha,\beta)=(-1)^{\vert \alpha\vert}\,\int_{\partial M} \iota^\ast(\alpha)\wedge\iota^\ast(\beta)\quad,
\end{flalign}
for all homogeneous $\alpha,\beta\in\FFF_\cc(M)$, where $\iota^\ast$ denotes the
pullback of differential forms along the boundary inclusion $\iota : \partial M \to M$.
\sk

One way to resolve this incompatibility between the differential and $\omega_{(-1)}$ 
is to impose a suitable boundary condition on the field complex $\FFF(M)$. A 
particularly interesting approach to boundary conditions, which 
draws inspiration from the intersection theory of Lagrangians
in derived algebraic geometry \cite{DAG}, has been recently
developed in \cite{GRW} and \cite{Rabinovich1,Rabinovich2}.
Without going into the technical details, which are nicely 
explained in these papers, the key idea of this approach
is to introduce a cochain complex $\FFF_\LLL(M)$ of 
\textit{boundary conditioned fields} by forming a fiber product
\begin{flalign}\label{eqn:conditionedfieldspullback}
\begin{gathered}
\xymatrix{
\ar@{-->}[d] \FFF_\LLL(M) \ar@{-->}[r] ~&~ \ar[d]^-{\iota^\ast}\FFF(M)\\
\LLL \ar[r]_-{\subseteq} ~&~ \FFF(\partial M)
}
\end{gathered}
\end{flalign}
in the category of cochain complexes.\footnote{The ordinary fiber
product \eqref{eqn:conditionedfieldspullback} provides a model for 
the homotopy fiber product whenever the right vertical map 
$\iota^\ast$ is a fibration of cochain complexes, which is 
the case in our example given by linear Chern-Simons theory.} 
Let us explain the ingredients 
of this diagram, as well as their interpretation, in the context of our example:
The boundary field complex $\FFF(\partial M)$ is the $1$-shifted 
de Rham complex of the boundary manifold $\partial M$, i.e.\
\begin{flalign}\label{eqn:boundarycomlpex}
\FFF(\partial M)\,:=\, \Omega^\bullet(\partial M)[1] \,=\,\Big(
\xymatrix@C=2em{
\stackrel{(-1)}{\Omega^0(\partial M)} \ar[r]^-{-\dd}
& \stackrel{(0)}{\Omega^1(\partial M)}\ar[r]^-{-\dd} 
& \stackrel{(1)}{\Omega^2(\partial M)} 
}
\Big)\quad,
\end{flalign}
and the cochain map $\iota^\ast$ is given by pullback of differential forms along $\iota :\partial M\to M$.
Using an analogous construction as in \eqref{eqn:omega-1}, together with the fact that 
$\partial M$ is $2$-dimensional, one obtains an unshifted symplectic structure
\begin{flalign}
\omega_{(0)}\,:\, \xymatrix@C=2.5em{
\FFF_\cc(\partial M)\otimes\FFF_{\cc}(\partial M) \ar[r]^-{(\,\cdot\,,\,\cdot\,)} ~&~ 
\Omega^\bullet_\cc(\partial M)[2] \ar[r]^-{\int_{\partial M}} ~&~ \bbR
}
\end{flalign}
on the subcomplex $\FFF_\cc(\partial M)\subseteq \FFF(\partial M)$ of compactly supported sections.
This allows us to rewrite Eqn.~\eqref{eqn:incompatibility} in the following more conceptual form
\begin{flalign}\label{eqn:incompatibility2}
-\partial\omega_{(-1)} = \omega_{(0)}\circ (\iota^\ast\otimes\iota^\ast)\quad,
\end{flalign}
which means that $\omega_{(-1)}$ defines a derived isotropic structure for the cochain map $\iota^\ast$
in \eqref{eqn:conditionedfieldspullback}. As a consequence of the non-degeneracy condition proven in
\cite[Lemma 2.20]{Rabinovich1}, $\omega_{(-1)}$ is further (a field theoretic analogue of)
a derived Lagrangian structure on $\iota^\ast$. 
\sk

The subcomplex $\LLL\subseteq \FFF(\partial M)$ 
in \eqref{eqn:conditionedfieldspullback}, which we call the \textit{boundary condition},
is the choice of a local and strict Lagrangian in $\big(\FFF(\partial M),\omega_{(0)}\big)$.
Local means that $\LLL = \Gamma^\infty(L)$ is given by the sections of a subbundle $L\to\partial M$ 
of the boundary field bundle $\wedge^\bullet T^\ast\partial M[1] \to \partial M$, and
strict Lagrangian means that $\LLL\subseteq \FFF(\partial M)$ is strictly isotropic,
i.e.\ the restriction
\begin{flalign}\label{eqn:LLLisotropic}
\omega_{(0)}\big\vert_{\LLL_\cc\otimes \LLL_\cc}^{~} \,=\, 0
\end{flalign}
of the unshifted symplectic structure to the subcomplex $\LLL_\cc\subseteq \LLL$ of 
compactly supported sections vanishes, and maximal in the sense that the total rank of 
$L$ is half that of $\wedge^\bullet T^\ast\partial M[1] $.
\sk

An explicit model for the cochain complex $\FFF_{\LLL}(M)$ defined by the fiber product in
\eqref{eqn:conditionedfieldspullback} is given by the 
subcomplex $\FFF_{\LLL}(M)\subseteq \FFF(M)$ of the field complex consisting of
all fields $\alpha\in \FFF(M)$ that satisfy the boundary condition $\iota^\ast(\alpha)\in\LLL$.
We denote the individual components of this subcomplex by
\begin{flalign}\label{eqn:fieldcomplex:boundarycondition}
\FFF_{\LLL}(M) \,=\, \Omega_\LLL^\bullet(M)[1] \,=\,\Big(
\xymatrix@C=2em{
\stackrel{(-1)}{\Omega_\LLL^0(M)} \ar[r]^-{-\dd}
& \stackrel{(0)}{\Omega_\LLL^1(M)}\ar[r]^-{-\dd} 
& \stackrel{(1)}{\Omega_\LLL^2(M)} \ar[r]^-{-\dd}
& \stackrel{(2)}{\Omega_\LLL^3(M)}
}
\Big)\quad.
\end{flalign}
Due to the strict isotropy condition \eqref{eqn:LLLisotropic} for $\LLL$, one 
immediately observes that the boundary terms in \eqref{eqn:incompatibility}
and \eqref{eqn:incompatibility2} vanish for fields that satisfy the boundary condition, 
hence $\omega_{(-1)}$ restricts to a $(-1)$-shifted symplectic structure
\begin{flalign}
\omega_{(-1)}\,:\,\FFF_{\LLL,\cc}(M)\otimes \FFF_{\LLL,\cc}(M)~\longrightarrow~\bbR[-1]
\end{flalign}
on the subcomplex $\FFF_{\LLL,\cc}(M)\subseteq \FFF_\LLL(M)$ of compactly supported sections.
\sk

Note that, when the boundary $\partial M =\emptyset$ is empty, 
the boundary field complex $\FFF(\partial M) =0$ is trivial and so is any boundary condition $\LLL=0$.
Hence, the cochain complex $\FFF_\LLL(M)$ and its $(-1)$-shifted symplectic structure 
$\omega_{(-1)}$ specialize to the ones we have described in the previous paragraph.

\paragraph{Linear observables and $(-1)$-shifted Poisson structure:} We describe
the linear observables for the complex $\FFF_\LLL(M)$ of boundary conditioned fields
by the $1$-shifted cochain complex
\begin{flalign}\label{eqn:linobs}
\FFF_{\LLL,\cc}(M)[1]\,=\,\Omega^\bullet_{\LLL,\cc}(M)[2] \,=\, 
\Big(
\xymatrix@C=2em{
\stackrel{(-2)}{\Omega_{\LLL,\cc}^0(M)} \ar[r]^-{\dd}
& \stackrel{(-1)}{\Omega_{\LLL,\cc}^1(M)}\ar[r]^-{\dd} 
& \stackrel{(0)}{\Omega_{\LLL,\cc}^2(M)} \ar[r]^-{\dd}
& \stackrel{(1)}{\Omega_{\LLL,\cc}^3(M)}
}
\Big)
\end{flalign}
of compactly supported sections. The evaluation of linear observables
on fields is given by the cochain map
\begin{subequations}\label{eqn:ev}
\begin{flalign}
\ev\,:\, \xymatrix@C=2.5em{
\FFF_{\LLL,\cc}(M)[1]\otimes \FFF_\LLL(M)\ar[r]^-{(\,\cdot\,,\,\cdot\,)}~&~\Omega_\cc^\bullet(M)[3]\ar[r]^-{\int_M}~&~\bbR
}
\end{flalign}
that is obtained from \eqref{eqn:wedgepairing} and the integration map. Explicitly,
\begin{flalign}
\ev(\varphi\otimes \alpha) \,= \,(-1)^{\vert \varphi\vert +1} \, \int_M \varphi\wedge\alpha\quad,
\end{flalign}
\end{subequations}
for all homogeneous $\varphi\in \FFF_{\LLL,\cc}(M)[1]$ and $\alpha\in  \FFF_\LLL(M)$.
The complex of linear observables can be endowed with the
$(-1)$-shifted Poisson structure
\begin{flalign}\label{eqn:shiftedPoisson}
\tau_{(-1)} \,:\, 
\xymatrix@C=2.5em{
\FFF_{\LLL,\cc}(M)[1]\otimes \FFF_{\LLL,\cc}(M)[1] \ar[r]^-{\cong}~&~\FFF_{\LLL,\cc}(M)[2]\otimes \FFF_{\LLL,\cc}(M)\ar[r]^-{\omega_{(-1)}}~&~\bbR[1]\quad,
}
\end{flalign}
where the first cochain isomorphism is constructed from the symmetric braiding on the
category of cochain complexes, i.e.\ it involves Koszul signs.
These Koszul signs cancel the minus signs entering $\omega_{(-1)}$ through \eqref{eqn:wedgepairing},
such that
\begin{flalign}
\tau_{(-1)} (\varphi\otimes\psi) = \int_M\varphi\wedge\psi
\end{flalign}
is simply given by the forming the wedge product followed by integration over $M$,
for all linear observables  $\varphi,\psi\in \FFF_{\LLL,\cc}(M)[1]$.

\paragraph{(Anti-)self-dual boundary condition:}
Of particular interest to us are Lorentzian geometric variants of
the so-called chiral and anti-chiral WZW boundary conditions,
which were introduced in \cite[Example 2.3]{GRW} through the choice of a complex structure on $\partial M$ 
and the corresponding Dolbeault decomposition of $\Omega^\bullet(\partial M)$.
To describe a Lorentzian analogue of these boundary conditions, we assume that 
$\partial M$ comes endowed with (the conformal class of) a Lorentzian 
metric $g$ and a time orientation. As orientation on $\partial M$ 
we take the one induced from the orientation of the bulk $M$.
From these data we can define a Hodge operator $\ast : \Omega^k(\partial M)
\to \Omega^{2-k}(\partial M)$, which due to the Lorentzian signature of the metric 
squares to the identity $\ast^2 = \id$ on $1$-forms. This allows us to decompose 
the vector space of $1$-forms
\begin{flalign}
\Omega^1(\partial M) \,= \, \Omega^{+}(\partial M) \oplus \Omega^{-}(\partial M)
\end{flalign}
into the self-dual $\Omega^+(\partial M)$ and anti-self-dual 
$ \Omega^-(\partial M)$ 
forms, i.e.\ $\alpha\in \Omega^{\pm}(\partial M)\subseteq \Omega^1(\partial M)$ 
if and only if $\ast\alpha = \pm\alpha$. Let us note that, since the Hodge operator is
$C^\infty(\partial M)$-linear, this decomposition is also 
a direct sum decomposition of $C^\infty(\partial M)$-modules. In particular, 
$\Omega^\pm(\partial M)\subseteq \Omega^1(\partial M)$ are submodules.
\begin{ex}\label{ex:Minkowski1}
In order to illustrate these concepts, let us consider the $3$-manifold
$M = \bbR^2\times \bbR^{\geq 0}$ with orientation $\mathfrak{o} = [-\dd t\wedge \dd x \wedge \dd r]$
determined by a choice of global coordinates $(t,x,r)$. Let us further endow the boundary 
$\partial M = \bbR^2$ with the standard Minkowski metric
$g = -\dd t^2 + \dd x^2$ and the time-orientation
$\mathfrak{t} = [\partial_t]$. The induced orientation of the boundary is
$\mathfrak{o}_{\partial M} = [\dd t\wedge \dd x]$.
Taking the standard basis $(\dd t, \dd x)$ of $\Omega^1(\partial M)$, one has that
$\ast \dd t = - \dd x$ and $\ast \dd x = - \dd t$.
Introducing the standard light cone coordinates $x^\pm := t\pm x$,
the Hodge operator acts on $\dd x^\pm\in \Omega^1(\partial M)$
according to $\ast(\dd x^\pm) = \mp \dd x^\pm$, i.e.\ $\dd x^-$ defines a module basis for
the self-dual forms $\Omega^+(\partial M)$ and $\dd x^+$ defines one for the anti-self-dual forms $\Omega^-(\partial M)$.
Note the flip of sign. 
\end{ex}

We define the \textit{self-dual boundary condition} as the subcomplex
\begin{subequations}\label{eqn:boundarycondition}
\begin{flalign}
\LLL^{+} \,:=\, \Big(
\xymatrix@C=2em{
\stackrel{(-1)}{0} \ar[r]^-{0}
& \stackrel{(0)}{\Omega^{+}(\partial M)}\ar[r]^-{-\dd} 
& \stackrel{(1)}{\Omega^2(\partial M)} 
}
\Big)\, \subseteq \, \FFF(\partial M)
\end{flalign}
of the boundary field complex \eqref{eqn:boundarycomlpex}.
The \textit{anti-self-dual boundary condition} $\mathfrak{L}^-$ is similarly 
defined with $\Omega^-(\partial M)$ in place of $\Omega^+(\partial M)$, i.e.\
\begin{flalign}
\, \quad
\LLL^{-} \,:=\, \Big(
\xymatrix@C=2em{
\stackrel{(-1)}{0} \ar[r]^-{0}
& \stackrel{(0)}{\Omega^{-}(\partial M)}\ar[r]^-{-\dd} 
& \stackrel{(1)}{\Omega^2(\partial M)} 
}
\Big)\, \subseteq \, \FFF(\partial M)
\quad.
\end{flalign}
\end{subequations}
It is easy to check that both $\mathfrak{L}^+$ and $\mathfrak{L}^-$ are local and strict Lagrangians
in $\big(\FFF(\partial M),\omega_{(0)}\big)$.
\sk

Throughout the whole paper, we impose either $\mathfrak{L}^+$ or $\mathfrak{L}^-$
as boundary condition on the Chern-Simons theory on $M$.\footnote{In 
cases where $\partial M$ has multiple components,
one may choose a more general mixed-type boundary condition
by using self-dual forms on some components of $\partial M$
and anti-self-dual forms on the others.
While much of the discussion below may be adapted to this case,
we give it no further consideration here.}
For a uniform notation,
we denote the chosen boundary condition as $\mathfrak{L}^\pm$.
The corresponding boundary conditioned fields introduced in \eqref{eqn:conditionedfieldspullback}
then satisfy the following boundary conditions:
The pullback to the boundary $\iota^\ast (c) =  0$ of the ghost field
vanishes and the pullback to the boundary
$\iota^\ast(A) \in \Omega^{\pm}(\partial M)$ of the gauge field
is an (anti-)self-dual $1$-form.
Note that the antifields $A^\ddagger$ and $c^\ddagger$
are not restricted by the (anti-)self-dual boundary condition.


\section{\label{sec:greens_homotopy}Construction of Green's homotopies}
The data presented in Section \ref{sec:CSprelim} is sufficient
to quantize linear Chern-Simons theory with the chosen (anti-)self-dual
boundary condition in terms of a factorization algebra. The relevant
construction is given by linear BV quantization of the complex
of linear observables \eqref{eqn:linobs} along its $(-1)$-shifted Poisson structure
\eqref{eqn:shiftedPoisson}, see e.g.\ \cite[Chapter 4]{CostelloGwilliam}
or \cite[Section 3]{GRW} for the details.
\sk

In the present paper we shall take a different approach and study the 
quantization of this theory in terms of an AQFT. This requires additional 
structure, given by so-called \textit{Green's homotopies} \cite{BMS},
which generalize the concept of retarded/advanced Green's operators
to a homotopical context. Using these Green's homotopies, one can
determine as in \cite{BMS} an \textit{unshifted} Poisson structure
on the complex of linear observables \eqref{eqn:linobs}, which then can be
quantized by constructing
canonical commutation relation (CCR) dg-algebras  as in \cite{LinearYM}.
\begin{ex}
Green's homotopies provide a natural homotopical generalization of the usual retarded/advanced 
Green's operators for normally hyperbolic partial differential equations.
To illustrate the main idea, let us briefly sketch the concept of Green's homotopies
in the simplest example given by Klein-Gordon theory on a globally hyperbolic Lorentzian manifold $N$. 
We refer the reader to \cite{BMS} and also \cite[Section 4.1]{LinearYM} for more details.
The field complex of the Klein-Gordon field is given by
\begin{flalign}\label{eqn:KGcomplex}
	\mathfrak{F}^\mathrm{KG}(N)
	\, := \,
	\Big(
	\xymatrix{
	\stackrel{(0)}{\Omega^0(N)}
	\ar[r]^-{\square + m^2}
	&
	\stackrel{(1)}{\Omega^0(N)}
	\Big)}
	\quad,
\end{flalign}
where $\square$ denotes the d'Alembertian and $m^2\geq 0$ is a mass term.
Elements $\Phi\in\Omega^0(N)$ in degree $0$ are interpreted
as fields and elements $\Phi^\ddagger\in\Omega^0(N)$ in degree $1$ as antifields.
Note that the cohomology of this complex is concentrated in degree $0$ and it is given by
the usual solution space of the Klein-Gordon equation
$\Sol(N) = \{\Phi\in \Omega^0(N)\,\vert\,(\square+m^2)\Phi=0\}$.
\sk

A retarded/advanced Green's operator for Klein-Gordon theory is defined as a
linear map $G^\pm : \Omega^0_\mathrm{c}(N) \to \Omega^0(N)$ that satisfies, 
for all compactly supported $\varphi \in  \Omega^0_\mathrm{c} (N)$,
\begin{enumerate}[(i)]
	\item $G^\pm (\square + m^2) \varphi = \varphi$ and $(\square + m^2) G^\pm \varphi = \varphi$,

	\item $\operatorname{supp} (G^\pm \varphi) \subseteq J^\pm_N (\operatorname{supp} \varphi)$, where $J^\pm_N(S) \subseteq N$ denotes the causal future/past of $S \subseteq N$.
\end{enumerate}
The first property can be easily described in the language of cochain complexes:
The retarded/advanced Green's operator determines via
\begin{flalign}
\begin{gathered}
	\xymatrix@C=3em@R=0.3em{
	\text{\footnotesize (-1)} & \text{\footnotesize (0)} & \text{\footnotesize (1)} & \text{\footnotesize (2)} \\
	\ar[dddd] {0} \ar[r] 
	&
	\ar[dddd]_-{j}{\Omega_\mathrm{c}^0(N)}
	\ar[r]^-{\square + m^2} 
	\ar@{-->}[ldddd]
	&
	\ar[dddd]_-{j}{\Omega_\mathrm{c}^0(N)}
	\ar[r] 
	\ar@{-->}[ldddd]|{G^\pm}
	&
	\ar[dddd] {0}
	\ar@{-->}[ldddd]
	\\ \\ \\ \\
	0 \ar[r]
	&
	\Omega^0(N) \ar[r]_-{\square + m^2}
	&
	\Omega^0(N) \ar[r]
	&
	0
	}
\end{gathered}
\end{flalign}
a $(-1)$-cochain 
$G^\pm\in \big[\mathfrak{F}_\cc^\mathrm{KG}(N),\mathfrak{F}^\mathrm{KG}(N)\big]^{-1}$
which defines a cochain homotopy $\partial G^\pm = j$
from the zero map to the cochain map $j : \mathfrak{F}^\mathrm{KG}_\mathrm{c}(N) \hookrightarrow \mathfrak{F}^\mathrm{KG}(N)$ 
that includes compactly supported $0$-forms into all $0$-forms.
The second property of Green's operators states that this homotopy must have 
suitable support properties, which can be formalized in a homotopically meaningful way as in \cite{BMS}.
Taking the difference $G:=G^{+}-G^{-}\in \big[\mathfrak{F}_\cc^\mathrm{KG}(N),\mathfrak{F}^\mathrm{KG}(N)\big]^{-1}$
of the retarded and the advanced Green's homotopy defines a cochain map $G : \mathfrak{F}_\cc^\mathrm{KG}(N)[1]\to
\mathfrak{F}^\mathrm{KG}(N)$, i.e.\ $\partial G = 0$, from which one constructs an unshifted
Poisson structure $\tau_{(0)} : \mathfrak{F}_\cc^\mathrm{KG}(N)[1]\otimes \mathfrak{F}_\cc^\mathrm{KG}(N)[1]\to\bbR$
via integration $\tau_{(0)}(\varphi\otimes \psi) :=\int_N \varphi\,G(\psi)\,\vol_N $, where $\vol_N$ denotes
the volume form.
\sk

The advantage of Green's homotopies is that they exist also in gauge-theoretic examples, 
which are described by complexes that encode also ghosts and their antifields, 
and hence are longer than \eqref{eqn:KGcomplex}, while ordinary Green's operators 
can not exist in a gauge theory (before gauge fixing)
due to degeneracy of the equation of motion.
\end{ex}

Our strategy in this section is as follows:
In Subsection \ref{subsec:prep_for_GH} we use relevant chiral geometry (in the sense of \cite{Grant-Stuart})
of the boundary Lorentzian manifold $\partial M$ to define
a notion of Green's homotopies for the (anti-)self-dual boundary condition
$\mathfrak{L}^\pm$ in \eqref{eqn:boundarycondition}.
A key feature of this chiral geometry is the so-called \textit{chiral flows}. 
In Subsection \ref{subsec:Greens_homotopies_from_flow} we abstract to Green's homotopies 
for the de Rham complex on a manifold that is endowed with a smooth $\bbR$-action (i.e.\ a flow),
and give an explicit construction of such Green's homotopies under suitable niceness conditions on the $\bbR$-action.
Finally, in Subsection \ref{subsec:GH_on_boundary_and_bulk} we specialize this construction to produce compatible
Green's homotopies for the boundary condition $\mathfrak{L}^\pm$  
and the complex $\mathfrak{F}_{\mathfrak{L}^\pm}(M)$ of boundary conditioned fields,
under the hypothesis that the bulk $M$ may be equipped 
with suitable extensions of a chiral flow on $\partial M$.

\subsection{\label{subsec:prep_for_GH}Chiral geometry of the boundary}
To motivate our notion of Green's homotopies for Chern-Simons theory subject to an
(anti-)self-dual boundary condition, we first consider the analogous notion of Green's 
homotopies for the boundary condition complex $\mathfrak{L}^\pm$ itself.
In so doing, we will encounter relevant chiral features of the $2$-dimensional 
boundary spacetime $\partial M$, which will inform our construction in the bulk $M$.
\sk

In this subsection we will work on the $2$-dimensional boundary manifold $\partial M$,
which by our hypotheses is endowed with a Lorentzian metric $g$, an orientation $\mathfrak{o}_{\partial M}$
and a time-orientation $\mathfrak{t}$.
In the following definition we represent the latter two data by
a non-vanishing $2$-form $\omega\in \Omega^2(\partial M)$ and a non-vanishing
time-like vector field $\tau\in\Gamma^\infty(T\partial M)$.
\begin{defi}\label{def:nullpointing}
\begin{itemize}
\item[(a)] A non-zero null vector $0\neq v\in T_p\partial M$ at a point $p\in\partial M$, i.e.\ $g(v,v)=0$, is called
\textit{$\nearrow$-pointing} if $g(\tau,v)<0$ and $\omega(\tau,v)>0$,
\textit{$\nwarrow$-pointing} if $g(\tau,v)<0$ and $\omega(\tau,v)<0$,
\textit{$\swarrow$-pointing} if $g(\tau,v)>0$ and $\omega(\tau,v)<0$, and
\textit{$\searrow$-pointing} if $g(\tau,v)>0$ and $\omega(\tau,v)>0$.

\item[(b)] A null curve $\gamma : [0,1]\to \partial M$ is called \textit{$\nearrow$-pointing}
if all its tangent vectors are $\nearrow$-pointing. The 
other three cases $\nwarrow$, $\swarrow$ and $\searrow$ are defined similarly.
\end{itemize}
\end{defi}

Evidently, any $\nearrow$-pointing curve $\gamma$ produces a $\swarrow$-pointing curve 
$s \mapsto \gamma(1-s)$ by reversing its parameterization, and vice versa.
$\nwarrow$- and $\searrow$-pointing curves are similarly related.
\begin{ex}\label{ex:Minkowski2}
The terminology in Definition \ref{def:nullpointing} is motivated
by the graphical representation of such tangent vectors in the standard Minkowski spacetime 
(see Example \ref{ex:Minkowski1}):
\begin{flalign}
\begin{gathered}
\begin{tikzpicture}[scale=0.4]
\draw[thick,->] (-4,0) -- (4,0);
\draw[thick,->] (0,-4) -- (0,4);
\draw[thick,->] (-3,-3) -- (3,3);
\draw[thick,->] (3,-3) -- (-3,3);
\draw (0,4.5) node{$t$};
\draw (4.5,0) node{$x$};
\draw (3.75,3.5) node{$x^+$};
\draw (-3.25,3.5) node{$x^-$};
\end{tikzpicture}
\end{gathered}
\end{flalign}
Indeed, by a quick calculation one checks that
the tangent vector $\partial_+ := \frac{\partial}{\partial x^+}$ is $\nearrow$-pointing and
the tangent vector $\partial_- := \frac{\partial}{\partial x^-}$ is $\nwarrow$-pointing. 
The additive inverses $-\partial_+$ and $-\partial_-$ are, respectively, $\swarrow$-pointing
and $\searrow$-pointing.
\end{ex}

\begin{rem}
In \cite[Section 4]{Grant-Stuart},
a null curve $\gamma:[0,1]\to\partial M$ that is either 
$\nearrow$-pointing or $\swarrow$-pointing is called \textit{right-chiral},
and one that is either $\nwarrow$-pointing or $\searrow$-pointing is called \textit{left-chiral}.
In the present paper we prefer to use the finer distinction from Definition \ref{def:nullpointing}.
\end{rem}

Our concept of pointings for tangent vectors is related to
(anti-)self-duality of $1$-forms.
\begin{lem}\label{lem:sd/asdannihilation}
A $1$-form $\alpha \in \Omega^1(\partial M)$ is self-dual if and only if, at every point $p\in\partial M$,
the contraction $\alpha(v) = 0 $ vanishes for all $\nearrow$-pointing and all 
$\swarrow$-pointing null vectors $v\in T_p\partial M$. Similarly,
a $1$-form $\alpha \in \Omega^1(\partial M)$ is anti-self-dual
if and only if, at every point $p\in\partial M$, the contraction $\alpha(v) = 0 $
vanishes for all $\nwarrow$-pointing and all $\searrow$-pointing null vectors $v\in T_p\partial M$. 
\end{lem}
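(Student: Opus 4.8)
The plan is to work pointwise, reducing the statement to a two-dimensional linear-algebra fact at each $p \in \partial M$, and then exploit conformal invariance to use the explicit Minkowski light-cone computation of Example \ref{ex:Minkowski1} and Example \ref{ex:Minkowski2}. Since both the Hodge star on $1$-forms and the notion of ``null vector with a given pointing'' depend only on the conformal class of $g$, the time-orientation $\mathfrak{t}$, and the orientation $\mathfrak{o}_{\partial M}$, I may choose, in a neighborhood of $p$, conformal coordinates $(t,x)$ in which $g$ is proportional to the standard Minkowski metric $-\dd t^2 + \dd x^2$, with $\mathfrak{t} = [\partial_t]$ and $\mathfrak{o}_{\partial M} = [\dd t \wedge \dd x]$, exactly as in Example \ref{ex:Minkowski1}. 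It then suffices to prove both biconditionals fibrewise at the single tangent space $T_p\partial M \cong \bbR^{1,1}$.

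First I would record the linear data: writing $x^\pm = t \pm x$, one has $\ast(\dd x^\pm) = \mp\,\dd x^\pm$, so a $1$-form $\alpha = \alpha_+\,\dd x^+ + \alpha_-\,\dd x^-$ is self-dual ($\ast\alpha = \alpha$) iff $\alpha_+ = 0$, and anti-self-dual iff $\alpha_- = 0$. On the vector side, by Example \ref{ex:Minkowski2} the null rays that are $\nearrow$- or $\swarrow$-pointing are precisely the nonzero real multiples of $\partial_+ = \tfrac{\partial}{\partial x^+}$ (positive multiples giving $\nearrow$, negative multiples giving $\swarrow$), and the null rays that are $\nwarrow$- or $\searrow$-pointing are precisely the nonzero real multiples of $\partial_- = \tfrac{\partial}{\partial x^-}$. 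Since $\dd x^\pm(\partial_\mp) = 0$ and $\dd x^\pm(\partial_\pm) = 1$, the pairing $\alpha(v)$ for $v$ a nonzero multiple of $\partial_+$ is a nonzero multiple of $\alpha_+$, and for $v$ a nonzero multiple of $\partial_-$ is a nonzero multiple of $\alpha_-$.

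Assembling this: $\alpha$ is self-dual iff $\alpha_+ = 0$ iff $\alpha(v) = 0$ for every $\nearrow$- or $\swarrow$-pointing null vector $v$ (as these are exactly the nonzero multiples of $\partial_+$); this proves the first biconditional at $p$, and the second is identical with the roles of $\partial_+$ and $\partial_-$, equivalently of $\dd x^-$ and $\dd x^+$, interchanged. Running this argument at every $p \in \partial M$ gives the global statement, because self-duality of $\alpha$ as a $1$-form is the pointwise condition $(\ast\alpha)_p = \alpha_p$ for all $p$.

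The one point requiring care — and the only real ``obstacle'' — is the conformal-invariance reduction: I should check explicitly that rescaling $g \mapsto e^{2\lambda} g$ leaves $\ast|_{\Omega^1}$ unchanged (true in dimension $2$, since $\ast$ on $k$-forms scales by $e^{(n-2k)\lambda}$ and here $n=2$, $k=1$), and that the sign conditions $g(\tau,v) \lessgtr 0$, $\omega(\tau,v) \lessgtr 0$ of Definition \ref{def:nullpointing} are insensitive to the choice of representatives $\tau$ and $\omega$ within the given time-orientation and orientation as well as to the conformal factor (each rescaling is by a strictly positive function, so no sign flips occur). Once this invariance is in hand, the rest is the short bookkeeping above. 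Alternatively, one can avoid coordinates entirely: pick at each $p$ a null coframe $(\theta^+,\theta^-)$ with $\ast\theta^\pm = \mp\theta^\pm$ and dual null frame $(e_+,e_-)$, observe $\theta^\pm(e_\mp)=0$, and check via Definition \ref{def:nullpointing} that $e_+$ spans the $\nearrow/\swarrow$ null directions and $e_-$ the $\nwarrow/\searrow$ null directions; this is the same proof with the model computation replaced by an abstract null-frame normalization.
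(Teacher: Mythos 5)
Your proof is correct and follows essentially the same route as the paper's: reduce by conformal invariance of both the Hodge star on $1$-forms and the pointings of Definition \ref{def:nullpointing} to the Minkowski model, then invoke the explicit light-cone computations of Examples \ref{ex:Minkowski1} and \ref{ex:Minkowski2}. The paper's version is terser but identical in substance; your extra care in verifying the conformal invariance and the pointwise reduction, as well as the coordinate-free null-coframe variant, are sound supplementary detail rather than a different argument.
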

\begin{proof}
Since every $2$-dimensional Lorentzian manifold is locally 
conformally equivalent to the Minkowski spacetime,
and since (anti-)self-duality and the pointings from Definition \ref{def:nullpointing}
depend only on the conformal structure, it is sufficient to prove the claim for
the Minkowski spacetime. The latter follows immediately from our explicit descriptions 
in Examples \ref{ex:Minkowski1} and \ref{ex:Minkowski2}. 
\end{proof}

Making use of the pointings from Definition \ref{def:nullpointing},
we can introduce for each point $p\in\partial M$ the following subset
\begin{subequations}\label{eqn:J_ne}
\begin{flalign} \label{eqn:J_ne:point}
J^{\ne}(p)\,:=\, \left\{q\in\partial M
\,\middle\vert\,
\begin{array}{c}
	q=p \text{ or } \exists \text{ $\nearrow$-pointing curve } \\
	\gamma:[0,1]\to \partial M \text{ s.t.\ }\gamma(0)=p\text{ and } \gamma(1)=q
\end{array}
\right\}\,\subseteq\,\partial M\quad,
\end{flalign}
and similar subsets for the other three cases $\nwarrow$, $\swarrow$ and $\searrow$.
These subsets characterize the four null components of the light cone at $p\in\partial M$.
Given any subset $S\subseteq \partial M$, we further define
\begin{flalign} \label{eqn:J_ne:set}
J^{\ne}(S)\,:=\, \bigcup_{p\in S} J^{\ne}(p)\,\subseteq \, \partial M\quad,
\end{flalign}
\end{subequations}
and similar in the other three cases.
\begin{rem} \label{rem:ne/sw_relation}
Since reversing the parameterization changes a $\nearrow$-pointing curve into a 
$\swarrow$-pointing curve, it follows that $p \in J^\ne(q)$ if and only if $q \in J^\sw(p)$.
In fact, $J^\ne(p)$ and $J^\sw(p)$ arise respectively as the sets of successors and 
predecessors of $p$ under the binary relation $\leq_+$ on $\partial M$ defined by
$p \leq_+ q$ if either $p = q$ or there exists a $\nearrow$-pointing curve in $\partial M$ from $p$ to $q$.
Similarly, $J^\nw(p)$ and $J^\se(p)$ arise respectively as the sets of successors and predecessors
of $p$ under the binary relation $\leq_-$ on $\partial M$ defined by 
$p \leq_- q$ if either $p = q$ or there exists a $\nwarrow$-pointing curve in $\partial M$ from $p$ to $q$.
\end{rem}

In the next definition we introduce the variants of Green's homotopies 
that are relevant for the present paper. See also
Remark \ref{rem:Greenshomotopies} below for some additional comments 
on the relationship to the framework of Green's homotopies
developed in \cite{BMS}.
\begin{defi}\label{def:Greenshomotopies_on_partial_M}
A \textit{$\nearrow$-pointed Green's homotopy} for the boundary condition
$\LLL^\pm$ in \eqref{eqn:boundarycondition} is a $(-1)$-cochain
$G^{\ne} \in [\LLL_\cc^\pm,\LLL^\pm]^{-1}$ in the internal hom complex
from the complex of compactly supported sections $\LLL_\cc^\pm$ to $\LLL^\pm$
that satisfies the following properties:
\begin{enumerate}[(i)]
\item \label{def:Greenshomotopies_on_partial_M:eom}
$\partial G^{\ne} = j$, where $j: \LLL_\cc^\pm \to \LLL^\pm$
is the canonical inclusion cochain map from compactly supported sections to all sections;

\item \label{def:Greenshomotopies_on_partial_M:support}
$\supp\big(G^{\ne}\alpha\big) \subseteq J^{\ne}\big(\supp\, \alpha\big)$, for all
$\alpha\in \LLL^\pm_\cc$.
\end{enumerate}
The definition of Green's homotopies for the other three pointings $\nwarrow$, $\swarrow$ and $\searrow$ is similar.
\end{defi}

\begin{rem}\label{rem:Greenshomotopies}
The following remarks are in order:
\begin{enumerate}
\item There are four types of Green's homotopies referring to the four different components
of the light cone in $2$-dimensional conformal Lorentzian geometry. These refine the two types
of Green's homotopies (retarded and advanced) in standard $m$-dimensional Lorentzian geometry.

\item The definition of Green's homotopies in \cite[Definition 3.5]{BMS} is more general 
as it allows for pseudo-naturality under inclusions $K\subseteq K^\prime$ of compact subsets of 
$\partial M$. From a conceptual perspective, this generality is needed to prove a 
uniqueness theorem for Green's homotopies stating that the \textit{space of Green's homotopies} 
is either empty or contractible, see \cite[Proposition 3.9]{BMS}. This level of generality
is not needed in the present paper since all of our Green's homotopies will be
of the strict type as defined in Definition \ref{def:Greenshomotopies_on_partial_M}.
\end{enumerate}
Later in Proposition \ref{prop:G_Theta:restricts_to_L} we will show that, 
under suitable chiral niceness properties of $\partial M$, 
the self-dual boundary condition $\LLL^+$ admits $\nearrow$- and $\swarrow$-pointed Green's homotopies
and the anti-self-dual boundary condition $\LLL^-$ admits $\nwarrow$- and $\searrow$-pointed ones.
\end{rem}

As a preparation for our construction of Green's homotopies,
let us recall from \cite[Proposition 4.14]{Grant-Stuart} that on each oriented
and time-oriented Lorentzian $2$-manifold $\partial M$ there exists a so-called
\textit{chiral frame} $(n_-,n_+)$ for the tangent bundle $T\partial M$, 
i.e.\ $n_-\in \Gamma^\infty(T\partial M)$ is an everywhere $\nwarrow$-pointing
null vector field and $n_+\in  \Gamma^\infty(T\partial M)$ is an everywhere $\nearrow$-pointing
vector field. The choice of chiral frame is canonical up to rescalings 
$n_-\sim f_-\,n_-$ and $n_+\sim f_+\,n_+$ by positive smooth functions $f_\mp \in C^\infty(\partial M,\bbR^{>0})$.
Furthermore, by \cite[Lemma 4.19]{Grant-Stuart}, one may assume without loss of generality
that both $n_-$ and $n_+$ are complete vector fields. This allows us to define
the \textit{$+$- and $-$-chiral flows}
\begin{flalign}
\Theta_{+}, \Theta_- \,:\, \bbR\times \partial M~\longrightarrow~\partial M
\end{flalign}
as the  global flows of the complete vector fields $n_+$ and $n_-$ respectively.
\begin{ex}\label{ex:Minkowski:flows}
Coming back to the Minkowski spacetime from Examples \ref{ex:Minkowski1} and \ref{ex:Minkowski2},
a complete chiral frame is given by $(\partial_-, \partial_+)$. The associated chiral flows
read in light cone coordinates as $\Theta_+(s,(x^-,x^+)) = (x^-,x^+ +s)$
and $\Theta_{-}(s,(x^-x^+)) = (x^-+s,x^+)$, i.e.\ they are simply translations along the light cone coordinates.
\end{ex}

Chiral flows provide a useful technical tool to handle
the variously-pointed null curves of Definition \ref{def:nullpointing}.
By \cite[Lemma 4.21]{Grant-Stuart},
any $\nearrow$- or $\swarrow$-pointing null curve in $\partial M$ is,
up to reparameterization, an integral curve associated to the $+$-chiral flow $\Theta_+$.
Similarly, the $\nwarrow$- and $\searrow$-pointing null curves are described by the $-$-chiral flow $\Theta_-$.
\begin{rem}\label{rem:canonical_flow}
There exists no canonical choice of the chiral flows $\Theta_+$ and $\Theta_-$ on $\partial M$,
owing to the freedom to rescale $n_+$ and $n_-$ by positive smooth functions. 
However, the orbits of any two $+$-chiral flows $\Theta_+$ and $\Theta_+^\prime$ 
coincide, being the images of inextendable $\nearrow$-pointing null curves in $\partial M$.
Thus, the quotient $\pi_+ : \partial M \to \partial M / \! \sim_+$
is canonical on a given $2$-dimensional spacetime $\partial M$,
for $\sim_+$ the orbit relation of any $+$-chiral flow.
Similarly, the quotient $\pi_- : \partial M \to \partial M /\! \sim_-$ by any $-$-chiral flow is canonical.
\end{rem}

\begin{rem}\label{rem:J_from_flow}
One may describe the sets $J^\ne(p)$ and $J^\sw(p)$ from
\eqref{eqn:J_ne:point} in terms of the $+$-chiral flow only. Indeed, by
\cite[Proposition 4.22]{Grant-Stuart}, a point $q\in\partial M$ lies in $J^\ne(p)$
(or equivalently, $p\in\partial M$ lies in $J^\sw(q)$) if and only if $q = \Theta_+(s,p)$
for some $s \geq 0$. The sets $J^\nw(p)$ and $J^\se(p)$ may be characterized
similarly using the $-$-chiral flow.
\end{rem}

Just as we denote by $\mathfrak{L}^\pm$ our choice of self-dual 
or anti-self-dual boundary condition, we henceforth denote by 
$\Theta_\pm$ a corresponding choice of $+$- or $-$-chiral flow 
on $\partial M$ ($+$-chiral if the chosen boundary condition is self-dual, 
$-$-chiral if the boundary condition is anti-self-dual).

\subsection{\label{subsec:Greens_homotopies_from_flow}Green's homotopies associated with a proper \texorpdfstring{$\bbR$}{R}-action}
Our construction of Green's homotopies for $\mathfrak{L}^\pm$ 
will use only the chiral flow $\Theta_\pm$ on $\partial M$,
without directly referring to any of the other available data, e.g.\
the Lorentzian metric, orientation or time-orientation. 
To simplify notation during this construction,
we temporarily abstract to the $k$-shifted de Rham complex $\Omega^\bullet(N)[k]$ 
of an arbitrary smooth manifold $N$ (with or without boundary)
that is endowed with a smooth $\bbR$-action (i.e.\ flow) $\Theta : \bbR \times N \to N$.
Our general construction can be specialized in the case of $(N, \Theta) = (\partial M, \Theta_\pm)$
to provide appropriately-pointed Green's homotopies for $\mathfrak{L}^\pm \subseteq \Omega^\bullet(\partial M)[1]$.
This abstraction will also be of use later in Proposition \ref{prop:G_Theta:restricts_to_FL},
when we will take $N = M$ to build appropriate Green's homotopies 
for the complex $\mathfrak{F}_{\mathfrak{L}^\pm}(M) \subseteq \Omega^\bullet(M)[1]$ 
of boundary conditioned Chern-Simons fields.
\sk

Given any manifold $N$ that is endowed with an $\bbR$-action 
$\Theta:\bbR\times N\to N$, we can define analogues of the subsets $J^\ne(S)$ from \eqref{eqn:J_ne} 
and their $\nwarrow$, $\swarrow$, $\searrow$ variants.
We define the \textit{$\Theta$-future} of a point $p\in N$ as the subset
\begin{subequations}
\begin{flalign}
J_\Theta^\uparrow(p) \,:=\,
\big\{q \in N \, \big\vert \,
\exists s \geq 0 \text{ s.t. } q = \Theta(s,p)
\big\}\,\subseteq\,N
\end{flalign}
and the \textit{$\Theta$-past} of $p\in N$ as the subset
\begin{flalign}
J_\Theta^\downarrow(p) \,:=\,
\big\{q \in N \, \big\vert \,
\exists s \leq 0 \text{ s.t. } q = \Theta(s,p)
\big\} \,\subseteq\,N\quad.
\end{flalign}
\end{subequations}
Recalling Remark \ref{rem:J_from_flow},
we observe that $J^\uparrow_{\Theta_+}(p) = J^\ne(p)$ and $J^\downarrow_{\Theta_+}(p) = J^\sw(p)$ 
for $\Theta_+$ a $+$-chiral flow on $N = \partial M$, and 
$J^\uparrow_{\Theta_-}(p) = J^\nw(p)$ and $J^\downarrow_{\Theta_-}(p) = J^\se(p)$ for $\Theta_-$ a $-$-chiral flow.
We also define for any subset 
$S \subseteq N$ the $\Theta$-future/past of $S$ to be
\begin{flalign} \label{eqn:Theta-past-future-sets}
J_\Theta^{\uparrow/ \downarrow}(S) \,:=\,
\bigcup_{p \in S} J_\Theta^{\uparrow /\downarrow}(p) \,\subseteq\, N
\quad.
\end{flalign}
With this we can generalize Definition \ref{def:Greenshomotopies_on_partial_M} of Green's homotopies 
for $\mathfrak{L}^\pm$ to the $k$-shifted de Rham complex $\Omega^\bullet(N)[k]$ on $N$.
\begin{defi} \label{def:Greenshomotopies}
Let $N$ be a smooth manifold (with or without boundary) that is 
endowed with an $\bbR$-action $\Theta : \bbR \times N \to N$.
A \textit{forward/backward Green's homotopy} 
for the $k$-shifted de Rham complex $\Omega^\bullet(N)[k]$
is a $(-1)$-cochain $G^{\uparrow/\downarrow}\in \big[ \Omega^\bullet_\cc(N)[k], \Omega^\bullet(N)[k] \big]^{-1}$
in the internal hom complex that satisfies the following properties:
\begin{enumerate}[(i)]
\item \label{def:Greenshomotopies:eom}
$\partial G^{\uparrow /\downarrow} = j$, where $j: \Omega^\bullet_\cc(N)[k] \to \Omega^\bullet(N)[k]$
is the canonical inclusion cochain map from compactly supported forms to all forms;

\item \label{def:Greenshomotopies:support}
$\supp\big(G^{\uparrow/ \downarrow} \alpha \big)\subseteq
J_\Theta^{\uparrow/ \downarrow} \big(\supp \,\alpha\big)$, for all
$\alpha \in \Omega^\bullet_\cc(N)[k]$.
\end{enumerate}
\end{defi}

\begin{rem} \label{rem:Greenshomotopies:index_shifting}
Note that a forward/backward Green's homotopy 
$G^{\uparrow/\downarrow} \in \big[ \Omega^\bullet_\cc(N), \Omega^\bullet(N) \big]^{-1}$
for the \textit{unshifted} de Rham complex induces corresponding forward/backward Green's homotopies 
$G^{\uparrow/\downarrow}_{[k]} := (-1)^k\,G^{\uparrow /\downarrow}\in 
\big[ \Omega^\bullet_\cc(N)[k], \Omega^\bullet(N)[k] \big]^{-1}$ 
for each $k$-shifted de Rham complex. The sign factor
ensures that property \ref{def:Greenshomotopies:eom}, i.e.\ $\partial G^{\uparrow/\downarrow}_{[k]} =j$, holds 
true for the internal hom differential $\partial$ associated with the $k$-shifted
de Rham differential $\dd_{[k]} = (-1)^k\, \dd$.
For this reason, we construct our candidate Green's homotopies below only for the unshifted de Rham complex.
\end{rem}

Let us consider the following span of smooth maps
\begin{flalign} \label{eqn:fibre-integration-span}
\begin{gathered}
\xymatrix{
&\ar[dl]_-{\pr}  \bbR\times N\ar[dr]^-{\Theta}&\\
N&&N
}
\end{gathered}
\end{flalign}
where $\pr$ denotes the projection onto the second factor.
We will construct Green's homotopies in the sense of 
Definition \ref{def:Greenshomotopies} by first pulling forms back along $\Theta$
and then integrating over (suitable subsets of) 
the fibers of $\mathrm{pr}$.
The latter is a trivial bundle with oriented fiber, and so is oriented as a bundle.
To ensure that the composite operation
is well-defined on compactly supported forms, we have to impose
a niceness condition on the $\bbR$-action $\Theta$, namely that it is proper.
Let us start by recalling that, for a smooth fiber bundle $\pi : E \to B$ with fibers of dimension $r$,
fiber integration  $\pi_\ast : \Omega^k_\mathrm{vc}(E) \to \Omega^{k-r}(B)$ is generally only well-defined on
forms $\alpha \in \Omega^k_{\mathrm{vc}}(E)\subseteq \Omega^k(E)$ 
with vertically compact support. This support condition
means that, given any compact subset $K\subseteq B$ in the base,
the subset $\pi^{-1}(K) \cap\, \supp \,\alpha \subseteq E$ is compact in the total space.
For standard references on fiber integration, see e.g.\ \cite[Chapter VII]{GHV}
as well as \cite[Section 6]{BottTu} and \cite[Section 3.4.5]{Nicolaescu}.\footnote{We 
use the fiber-first orientation convention for fiber integration, as in \cite{Nicolaescu} but in contrast to \cite{GHV,BottTu}.
In particular, fiber integration and the de Rham differential commute up to a sign
$\pi_\ast \circ \dd = (-1)^r \,\dd \circ \pi_\ast$
when the fibers have empty boundary.}
\begin{lem} \label{lem:proper-implies-vertically-compact}
Suppose that the $\bbR$-action $\Theta:\bbR\times N\to N$ is proper.
Then, for each compactly supported form  $\alpha \in \Omega^k_\cc(N)$,
the pulled back form $\Theta^\ast \alpha\in \Omega^k(\bbR\times N)$ 
has vertically compact support with respect to the trivial
$\bbR$-bundle $\pr : \bbR \times N \to N$.
\end{lem}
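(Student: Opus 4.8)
The plan is to unwind the definition of ``vertically compact support'' for the trivial bundle $\pr : \bbR \times N \to N$ and reduce the statement to the defining property of a proper action. First I would note that, since $\alpha \in \Omega^k_\cc(N)$ is compactly supported, $\supp\,\alpha \subseteq N$ is compact, and that for any smooth map the pullback satisfies $\supp(\Theta^\ast \alpha) \subseteq \Theta^{-1}(\supp\,\alpha)$. By the definition of vertically compact support recalled just before the lemma, it then suffices to show that for every compact subset $K \subseteq N$ the set
\[
\pr^{-1}(K) \cap \supp(\Theta^\ast \alpha) \,=\, (\bbR \times K) \cap \supp(\Theta^\ast \alpha)
\]
is compact in $\bbR \times N$. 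Since this set is closed (an intersection of two closed sets) and contained in $(\bbR \times K) \cap \Theta^{-1}(\supp\,\alpha)$, it is enough to prove that the latter set is compact.

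Next I would bring in properness of $\Theta$. Recall that an $\bbR$-action $\Theta : \bbR \times N \to N$ is proper precisely when the map
\[
\Phi \,:\, \bbR \times N ~\longrightarrow~ N \times N\,,\qquad (s,p)~\longmapsto~(\Theta(s,p),p)
\]
is a proper map, i.e.\ preimages of compact sets under $\Phi$ are compact (equivalently, for locally compact Hausdorff $N$, the Bourbaki-style condition that $\{s \in \bbR : \Theta(s,\cdot)(K) \cap L \neq \emptyset\}$ is relatively compact for all compact $K,L \subseteq N$). The key observation is the identity
\[
(\bbR \times K) \cap \Theta^{-1}(\supp\,\alpha) \,=\, \Phi^{-1}\big(\supp\,\alpha \times K\big)\,,
\]
which is immediate from the definition of $\Phi$: a pair $(s,p)$ lies in the left-hand side iff $p \in K$ and $\Theta(s,p) \in \supp\,\alpha$, which is exactly the condition $\Phi(s,p) \in \supp\,\alpha \times K$. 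Since $\supp\,\alpha$ and $K$ are both compact, $\supp\,\alpha \times K \subseteq N \times N$ is compact, and properness of $\Phi$ gives that $\Phi^{-1}(\supp\,\alpha \times K)$ is compact. Combining this with the reduction of the previous paragraph, $\pr^{-1}(K) \cap \supp(\Theta^\ast \alpha)$ is a closed subset of a compact set, hence compact, which is what we had to show.

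I do not expect a serious obstacle here; the only points needing a little care are matching conventions — namely using the characterization of a proper group action via properness of $(s,p) \mapsto (\Theta(s,p),p)$ — and noting that nothing in the argument is affected by $N$ possibly having a boundary, since only the point-set topology of $N$ enters. One might also remark in passing that, as each $\Theta(s,\cdot)$ is a diffeomorphism and hence $\Theta$ is a submersion, the inclusion $\supp(\Theta^\ast \alpha) \subseteq \Theta^{-1}(\supp\,\alpha)$ is actually an equality, although only the inclusion is used above.
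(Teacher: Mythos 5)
Your proof is correct and follows essentially the same route as the paper: both reduce to the observation that $\pr^{-1}(K)\cap\Theta^{-1}(\supp\,\alpha)$ is the preimage of the compact set $\supp\,\alpha\times K$ under the shear map, which is proper by hypothesis, and then note that $\pr^{-1}(K)\cap\supp(\Theta^\ast\alpha)$ is a closed subset thereof. The only cosmetic difference is that you take the shear map in the form $(s,p)\mapsto(\Theta(s,p),p)$ whereas the paper uses $(s,p)\mapsto(p,\Theta(s,p))$; these differ by the flip on $N\times N$ and are interchangeably proper.
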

\begin{proof}
Recall the definition of the shear map
\begin{flalign} \label{eqn:shear_map}
\theta \,:\, \bbR \times N \,\longrightarrow\, N \times N~,~~ (s, p) \, \longmapsto\, \big(p, \Theta(s,p)\big)
\end{flalign}
associated with the $\bbR$-action $\Theta$
and observe that
$\theta^{-1}(A \times B)=
\mathrm{pr}^{-1}(A) \cap \Theta^{-1}(B)$
for any two subsets $A,B \subseteq N$.
Consider any compact subset $K \subseteq N$.
Since $\supp (\Theta^{\ast} \alpha) \subseteq \Theta^{-1} \big(\supp\,\alpha\big)$
and supports are closed sets, it follows that
\begin{flalign}
\mathrm{pr}^{-1}(K) \cap \supp(\Theta^{\ast}\alpha)\,\subseteq\,
\mathrm{pr}^{-1}(K) \cap \Theta^{-1}\big(\supp\,\alpha\big)
\,=\, \theta^{-1}\big(K \times \supp\,\alpha\big)
\end{flalign}
is a closed subset. Using now that $\Theta$ is a proper $\bbR$-action,
i.e.\ its shear map $\theta$ is a proper map, it follows
that $\mathrm{pr}^{-1}(K) \cap \supp( \Theta^{\ast}\alpha)\subseteq \theta^{-1}\big(K \times \supp\,\alpha\big)$
is a closed subset of a compact set, and hence is compact.
\end{proof}

To propose a candidate for a forward Green's homotopy, 
we restrict the $\bbR$-action to non-positive parameters
$\bbR^{\leq 0}\subseteq \bbR$
and consider the corresponding restriction
$N \stackrel{\pr}{\longleftarrow} \bbR^{\leq 0}\times N \stackrel{\Theta}{\longrightarrow} N$
of the span \eqref{eqn:fibre-integration-span}.
By an obvious corollary of Lemma \ref{lem:proper-implies-vertically-compact}, 
if $\Theta$ is a proper $\bbR$-action, we obtain a family of linear maps
\begin{flalign} \label{eqn:fibre-integration-dR}
\xymatrix{
\Omega_\cc^k(N)
\ar[r]^-{\Theta^\ast}
~&~
\Omega_{\vc}^k(\bbR^{\leq 0}\times N)
\ar[r]^-{\int_{\bbR^{\leq 0}}}
~&~
\Omega^{k-1}(N)
}
\end{flalign}
by composing the pullback of forms along $\Theta$ with the
fiber integration $\int_{\bbR^{\leq 0}} := \pr_\ast$ along $\pr$.
These components define the $(-1)$-cochain
\begin{flalign}\label{eqn:Gforward}
G^\uparrow_\Theta \,:=\, \int_{\bbR^{\leq 0}}\Theta^\ast \,\in\, \big[ \Omega^\bullet_\cc(N), \Omega^\bullet(N)\big]^{-1}\quad.
\end{flalign}
Restricting to non-negative parameters, we obtain again a restricted 
span $N \stackrel{\pr}{\longleftarrow} \bbR^{\geq 0}\times N \stackrel{\Theta}{\longrightarrow} N$ 
and a family of linear maps
\begin{flalign}\label{eqn:fibre-integration-dR2}
\xymatrix{
\Omega_\cc^k(N)
\ar[r]^-{\Theta^\ast}
~&~
\Omega_{\vc}^k(\bbR^{\geq 0}\times N)
\ar[r]^-{\int_{\bbR^{\geq 0}}}
~&~
\Omega^{k-1}(N)\quad.
}
\end{flalign}
Taking the additive inverses of these components defines the $(-1)$-cochain
\begin{flalign}\label{eqn:Gbackward}
G^\downarrow_\Theta \,:=\, - \int_{\bbR^{\geq 0}}\Theta^\ast \,\in\, \big[ \Omega^\bullet_\cc(N), \Omega^\bullet(N)\big]^{-1}\quad.
\end{flalign}
Beyond well-definedness of \eqref{eqn:Gforward} and \eqref{eqn:Gbackward},
the assumption that $\Theta$ is proper gives
some further technical consequences described in Appendix \ref{app:appendix}.
These enter the proof of the following
\begin{propo}\label{propo:Gexistence}
Suppose that the $\bbR$-action $\Theta:\bbR\times N\to N$ is proper.
Then $G^{\uparrow}_\Theta$ in \eqref{eqn:Gforward} 
defines a forward Green's homotopy and $G^{\downarrow}_\Theta$ in \eqref{eqn:Gbackward} 
defines a backward Green's homotopy.
\end{propo}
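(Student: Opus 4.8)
The proof of Proposition \ref{propo:Gexistence} requires checking the two defining properties of a forward/backward Green's homotopy from Definition \ref{def:Greenshomotopies}, applied to the cochains $G^\uparrow_\Theta = \int_{\bbR^{\leq 0}}\Theta^\ast$ and $G^\downarrow_\Theta = -\int_{\bbR^{\geq 0}}\Theta^\ast$. I will treat the forward case $G^\uparrow_\Theta$ in detail; the backward case follows by an entirely parallel argument, with sign bookkeeping accounting for the extra minus sign and for the reversed orientation of $\bbR^{\geq 0}$ versus $\bbR^{\leq 0}$. Throughout, well-definedness of the fiber integration over the half-line is guaranteed by Lemma \ref{lem:proper-implies-vertically-compact} (more precisely its corollary stated just before \eqref{eqn:fibre-integration-dR}), since properness of $\Theta$ ensures $\Theta^\ast\alpha$ has vertically compact support with respect to $\pr$, and this condition is inherited by the restriction to $\bbR^{\leq 0}\times N$.

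\textbf{Property \ref{def:Greenshomotopies:support} (support condition).} This is the easier of the two. For $\alpha\in\Omega^k_\cc(N)$ we have $\supp(\Theta^\ast\alpha)\subseteq\Theta^{-1}(\supp\,\alpha)$, and fiber integration over $\bbR^{\leq 0}$ can only shrink or preserve supports along the projection $\pr$, so $\supp\big(G^\uparrow_\Theta\alpha\big)\subseteq\pr\big((\bbR^{\leq 0}\times N)\cap\Theta^{-1}(\supp\,\alpha)\big)$. A point $p$ lies in this set precisely when there is some $s\leq 0$ with $\Theta(s,p)\in\supp\,\alpha$, i.e.\ when $p=\Theta(-s,q)$ for some $q\in\supp\,\alpha$ and some $-s\geq 0$, which is exactly the statement $p\in J^\uparrow_\Theta(\supp\,\alpha)$ by the definition of the $\Theta$-future. (Care is needed to get the direction right: I should double-check against the conventions in \eqref{eqn:Theta-past-future-sets} and Remark \ref{rem:J_from_flow} that "restrict to $\bbR^{\leq 0}$" indeed yields the \emph{forward} set $J^\uparrow_\Theta$ and not $J^\downarrow_\Theta$; the signs are arranged so that this is the case.)

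\textbf{Property \ref{def:Greenshomotopies:eom} (the homotopy identity $\partial G^\uparrow_\Theta = j$).} This is the heart of the proof. Unpacking the internal-hom differential, $\partial G^\uparrow_\Theta = \dd\circ G^\uparrow_\Theta + G^\uparrow_\Theta\circ\dd$ (the sign is $+$ since $G^\uparrow_\Theta$ has degree $-1$), and I must show this equals the inclusion $j$. The key input is the interplay of fiber integration with the de Rham differential on a bundle \emph{with boundary fibers}: here the fiber $\bbR^{\leq 0}$ has boundary $\{0\}$, so the naive commutation rule $\pi_\ast\dd = (-1)^r\dd\,\pi_\ast$ (recorded in the footnote after \eqref{eqn:fibre-integration-span}, with $r=1$) picks up a boundary term given by integration over $\{0\}\times N$, i.e.\ by restriction along the inclusion $\iota_0:N\cong\{0\}\times N\hookrightarrow\bbR^{\leq 0}\times N$. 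Concretely I expect a Stokes-type formula of the shape $\dd\int_{\bbR^{\leq 0}}\beta + \int_{\bbR^{\leq 0}}\dd\beta = \pm\,\iota_0^\ast\beta$ for vertically compactly supported $\beta$ (the sign and the exact form to be pinned down, and this is precisely the content deferred to Appendix \ref{app:appendix}). Applying this with $\beta=\Theta^\ast\alpha$ and using $\dd\circ\Theta^\ast = \Theta^\ast\circ\dd$ (naturality of the de Rham differential), the two terms $\dd\circ G^\uparrow_\Theta$ and $G^\uparrow_\Theta\circ\dd$ combine to $\pm\,\iota_0^\ast\Theta^\ast\alpha$. Since $\Theta(0,-)=\id_N$ (the identity axiom of an $\bbR$-action), we get $\iota_0^\ast\Theta^\ast\alpha = (\Theta\circ\iota_0)^\ast\alpha = \alpha$, and the sign works out to $+1$, giving exactly $j(\alpha)=\alpha$.

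\textbf{Main obstacle.} The genuine technical work — and the step I expect to be most delicate — is establishing the Stokes/boundary-term formula for fiber integration over the half-line $\bbR^{\leq 0}$ with the correct sign under the fiber-first orientation convention, together with the vertical-compactness estimates needed to justify all manipulations (in particular that no contribution "escapes to $s=-\infty$"). These are the "further technical consequences" the text attributes to Appendix \ref{app:appendix}, so in the body I would state the needed lemma — a Stokes formula $\int_{\bbR^{\leq 0}}\dd\beta = \iota_0^\ast\beta - \dd\int_{\bbR^{\leq 0}}\beta$ for $\beta\in\Omega^\bullet_\vc(\bbR^{\leq 0}\times N)$, plus the analogous one over $\bbR^{\geq 0}$ with the appropriate sign — cite the appendix for its proof, and then assemble properties \ref{def:Greenshomotopies:eom} and \ref{def:Greenshomotopies:support} from it as above. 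The backward case is then obtained by replacing $\bbR^{\leq 0}$ with $\bbR^{\geq 0}$: the boundary of the fiber is again $\{0\}$ but now with the opposite induced orientation, which flips the sign of the boundary term; this is exactly compensated by the explicit minus sign in the definition \eqref{eqn:Gbackward} of $G^\downarrow_\Theta$, so that again $\partial G^\downarrow_\Theta = j$, while the support condition now involves $s\geq 0$ and hence $J^\downarrow_\Theta$.
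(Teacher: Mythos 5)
Your proof is correct and follows essentially the same strategy as the paper: the homotopy identity \ref{def:Greenshomotopies:eom} is obtained via the fiber-wise Stokes theorem (which the paper cites directly from \cite[Theorem 3.4.54]{Nicolaescu} rather than proving in Appendix \ref{app:appendix}), and the support condition \ref{def:Greenshomotopies:support} by tracking supports through pullback and fiber integration. One small note: the appendix ingredient the paper actually uses in this proof is the closedness of $J^{\uparrow/\downarrow}_\Theta(K)$ for compact $K$ (Corollary \ref{cor:Theta_future_past_of_compact_is_closed}), applied in item \ref{def:Greenshomotopies:support}; your argument there instead relies on vertical compactness of $\Theta^\ast\alpha$ (Lemma \ref{lem:proper-implies-vertically-compact}) to ensure the projected support is already closed, which is an equally valid use of properness.
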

\begin{proof}
We have to verify the two properties \ref{def:Greenshomotopies:eom}  and \ref{def:Greenshomotopies:support} 
from Definition \ref{def:Greenshomotopies}.
\sk

Item \ref{def:Greenshomotopies:eom} is a simple consequence 
of the fiber-wise Stokes theorem, see e.g.\ \cite[Theorem 3.4.54]{Nicolaescu}.
For the $(-1)$-cochain $G^{\uparrow}_\Theta\in \big[ \Omega^\bullet_\cc(N), \Omega^\bullet(N)\big]^{-1}$, 
we compute
\begin{flalign}\label{eqn:partialG}
\partial G^\uparrow_\Theta = 
\dd\, \int_{\mathbb{R}^{\leq 0}} \Theta^\ast + \int_{\mathbb{R}^{\leq 0}} \Theta^\ast\, \dd =
\left( \dd \int_{\mathbb{R}^{\leq 0}} + \int_{\mathbb{R}^{\leq 0}} \dd \right) \,\Theta^\ast =
\int_{\partial \mathbb{R}^{\leq 0}} \Theta^\ast\quad,
\end{flalign}
where in the second step we used that the de Rham differential 
commutes with pullbacks of differential forms and in the third step we used the
fiber-wise Stokes theorem. Since the boundary $\partial \bbR^{\leq 0}=\{0\}$ is a singleton, the
fiber integration $\int_{\partial \mathbb{R}^{\leq 0}}$ is given by 
the pullback of forms along $N\to \bbR^{\leq 0}\times N\, ,~p\mapsto (0,p)$.
Since $\Theta(0,\,\cdot\,) = \id_N$ yields the identity, this
shows that $\partial G^\uparrow_\Theta = j : \Omega^\bullet_\cc(N)\to \Omega^\bullet(N)$
is the canonical inclusion. The proof for $\partial G^\downarrow_\Theta = j$ is similar 
and uses that the boundary $\partial \bbR^{\geq 0}=\{0\}$ is negatively oriented, which accounts for
the minus sign in \eqref{eqn:Gbackward}.
\sk

To prove item \ref{def:Greenshomotopies:support}, 
let us first note that the subset $J^{\uparrow/\downarrow}_\Theta(K)\subseteq N$ 
is closed for each compact subset $K \subseteq N$
because the $\mathbb{R}$-action $\Theta$ is proper, see Corollary \ref{cor:Theta_future_past_of_compact_is_closed}.
Let us take any compactly supported
form $\alpha\in \Omega^k_\cc(N)$ and consider $G^{\uparrow}_\Theta \alpha\in\Omega^{k-1}(N)$.
Recall that $\supp\big(G^{\uparrow}_\Theta \alpha\big)\subseteq N$ is defined
as the closure of the set of points $p\in N$ for which the evaluation
$\big(G^{\uparrow}_\Theta \alpha\big)(p)\neq 0$ is non-vanishing.
It suffices to show that each such point satisfies 
$p\in J^\uparrow_\Theta\big(\supp\,\alpha\big)$ as the latter set is closed.
From $\big(G^{\uparrow}_\Theta \alpha\big)(p) = \big(\int_{\bbR^{\leq 0}}\Theta^\ast \alpha\big)(p)\neq 0$, 
it follows that there exists a parameter $s\leq 0$ such that $\big(\Theta^\ast \alpha\big)(s,p)\neq 0$,
hence one finds a point $(s,p)\in \supp(\Theta^\ast\alpha)\subseteq \Theta^{-1}(\supp\,\alpha)$.
Since $\Theta(s,p) \in \supp\, \alpha$ for non-positive $s \leq 0$, 
we conclude that $p\in J^\uparrow_\Theta\big(\supp\,\alpha\big)$.
By a similar argument one shows that $\supp \big(G^{\downarrow}_\Theta\alpha\big) 
\subseteq J^{\downarrow}_\Theta\big(\supp\, \alpha\big)$.
\end{proof}

\paragraph{$G^{\uparrow / \downarrow}_\Theta$ as restricted fiber integrations:}
Our Green's homotopies $G^{\uparrow / \downarrow}_\Theta \in 
\left[ \Omega^\bullet_\cc(N), \Omega^\bullet(N)\right]^{-1}$
may be understood as integration over ``backward''/``forward'' halves
of the fibers of the quotient $\pi : N \to N /\mathbb{R}$
of $N$ by the $\mathbb{R}$-action $\Theta$.
The latter is a principal $\mathbb{R}$-bundle by Corollary \ref{cor:proper_flow_gives_principal_bundle}
because the $\mathbb{R}$-action $\Theta$ is proper.
We now demonstrate this interpretation using appropriate coordinate expressions.
\sk

Denote $B := N /\mathbb{R}$.
We work in a trivialization $N \cong \mathbb{R} \times B$,
which exists since principal $\mathbb{R}$-bundles are trivializable owing to the contractibility of $\mathbb{R}$.
Forms on the product space decompose
$\Omega^k_\mathrm{c}(\mathbb{R} \times B) =
\Omega^{1,k-1}_\mathrm{c}(\mathbb{R} \times B) \oplus \Omega^{0,k}_\mathrm{c}(\mathbb{R} \times B)$
into forms with one leg along the $\mathbb{R}$-factor and forms with no legs along $\mathbb{R}$.
Pick a global coordinate $\tau$ on the $\mathbb{R}$-factor.
Since trivializations of principal bundles are equivariant,
the $\mathbb{R}$-action $\Theta$ becomes translation along the $\mathbb{R}$-factor
$\Theta(s, (\tau,p)) = (\tau +s, p)$, for $p \in B$ and group element $s \in \mathbb{R}$.
Forms $\phi \in \Omega^{1,k-1}_\mathrm{c}(\mathbb{R} \times B)$ and
$\psi \in \Omega^{0,k}_\mathrm{c}(\mathbb{R} \times B)$ may be written as
\begin{flalign} \label{eqn:forms_split_coordinates}
\phi{(\tau,p)} = f(\tau,p) \, \dd \tau \wedge \widetilde{\phi}(p)
\qquad \text{and} \qquad
\psi{(\tau,p)} = g(\tau,p) \, \widetilde{\psi}(p) \quad,
\end{flalign}
for $\widetilde{\phi} \in \Omega^{k-1}(B)$ and $\widetilde{\psi} \in \Omega^k(B)$ respectively,
with $f,g \in C^\infty_\mathrm{c}(\mathbb{R} \times B)$.
We leave implicit the pullbacks along the projections out of the product $\mathbb{R} \times B$.
It follows that
\begin{flalign}
\nn (\Theta^\ast \phi){(s,(\tau,p))} &= f(\tau + s,p) \, (\dd \tau + \dd s)\wedge \widetilde{\phi}(p)\\
(\Theta^\ast \psi){(s,(\tau,p))} &= g(\tau + s,p) \, \widetilde{\psi}(p) \quad.
\end{flalign}
Since the fiber integrations in $G^{\uparrow}_\Theta = \int_{\mathbb{R}^{\leq 0}} \Theta^*$
and $G^{\downarrow}_\Theta = -\int_{\mathbb{R}^{\geq 0}} \Theta^*$
are along the fiber direction $\dd s$ only,
it is immediate that $G^{\uparrow / \downarrow}_\Theta \psi = 0$
for $\psi \in \Omega^{0,k}_\mathrm{c}(\mathbb{R} \times B)$ of the second type.
For forms $\phi \in \Omega^{1,k-1}_\mathrm{c}(\mathbb{R} \times B)$ of the first type, we find
\begin{subequations}
\begin{flalign}
(G^\uparrow_\Theta \phi){(\tau, p)}
 \,=\, \left(\int_{-\infty}^0
f(\tau + s,p) \,\dd s\right) \, \widetilde{\phi}(p)
\,=\, \left(\int_{-\infty}^\tau f(\tau^\prime,p) \,\dd \tau^\prime\right)\, \widetilde{\phi}(p) \quad,
\label{eqn:G^up_Theta_coordinates}
\intertext{and similarly}
(G^\downarrow_\Theta \phi){(\tau, p)}
 \,=\, -\left(\int_{0}^\infty
f(\tau + s,p)\, \dd s\right)\, \widetilde{\phi}(p)
\,=\, -\left(\int_\tau^\infty f(\tau^\prime,p)\, \dd \tau^\prime\right) \, \widetilde{\phi}(p) \quad.
\label{eqn:G^down_Theta_coordinates}
\end{flalign}
\end{subequations}
Recall that the fiber integration $\pi_\ast : \Omega^\bullet_\mathrm{vc}(N) \to \Omega^{\bullet-1}(B)$
over the bundle $\pi : N \to B$
(canonically oriented as per Lemma \ref{lem:principal_R-bundles_are_oriented})
has a similar expression
\begin{flalign}
(\pi_\ast \phi)(p) \, =\, \left(\int_{-\infty}^\infty
f(\tau^\prime,p) \, \dd \tau^\prime\right)\,  \widetilde{\phi}(p)\quad,
\end{flalign}
for $\phi \in \Omega^{1,k-1}_\mathrm{c}(\mathbb{R} \times B)$ of the first type
and $\pi_\ast \psi = 0$ for $\psi \in \Omega^{0,k}_\mathrm{c}(\mathbb{R} \times B)$ of the second type.
The Green's homotopies $G^{\uparrow / \downarrow}_\Theta$ are thus
appropriately signed \emph{restricted fiber integrations} along the orbits of $\Theta$.
The following is an immediate consequence.
\begin{propo} \label{prop:G_Theta:difference}
The Green's homotopies $G^{\uparrow / \downarrow}_\Theta \in [\Omega^\bullet_\mathrm{c}(N), \Omega^\bullet(N)]^{-1}$
of \eqref{eqn:Gforward} and \eqref{eqn:Gbackward} satisfy
\begin{flalign}
G^\uparrow_\Theta - G^\downarrow_\Theta
= \pi^\ast\, \pi_\ast \,:\, \Omega^\bullet_\mathrm{c} (N) \,\longrightarrow\, \Omega^{\bullet-1}(N)\quad,
\end{flalign}
where $\pi : N \to N/\mathbb{R}$ is the quotient by the proper $\mathbb{R}$-action $\Theta$.
\end{propo}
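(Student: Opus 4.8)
The plan is to work directly with the coordinate expressions \eqref{eqn:G^up_Theta_coordinates} and \eqref{eqn:G^down_Theta_coordinates} obtained just above the statement, which already reduce the claim to an elementary observation about integrals over $\bbR$. First I would fix a trivialization $N \cong \bbR\times B$ with coordinate $\tau$ on the $\bbR$-factor, exactly as in the paragraph preceding the proposition, so that $\Theta$ becomes translation $\tau\mapsto\tau+s$ and the fiber integration $\pi_\ast$ is integration over all of $\bbR$ in the $\tau$-direction. Since all three operators $G^\uparrow_\Theta$, $G^\downarrow_\Theta$ and $\pi^\ast\pi_\ast$ vanish on the subspace $\Omega^{0,k}_\cc(\bbR\times B)$ of forms with no leg along $\bbR$ (the first two by the computation above, the third because $\pi_\ast$ already kills such forms), it suffices to check the identity on a form $\phi\in\Omega^{1,k-1}_\cc(\bbR\times B)$ of the first type, written $\phi(\tau,p)=f(\tau,p)\,\dd\tau\wedge\widetilde\phi(p)$ with $f\in C^\infty_\cc(\bbR\times B)$.

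For such a $\phi$, the three expressions recorded above give, at a point $(\tau,p)$,
\begin{flalign*}
(G^\uparrow_\Theta\phi)(\tau,p) - (G^\downarrow_\Theta\phi)(\tau,p)
&= \left(\int_{-\infty}^{\tau} f(\tau',p)\,\dd\tau'\right)\widetilde\phi(p)
+ \left(\int_{\tau}^{\infty} f(\tau',p)\,\dd\tau'\right)\widetilde\phi(p)\\
&= \left(\int_{-\infty}^{\infty} f(\tau',p)\,\dd\tau'\right)\widetilde\phi(p)
= (\pi_\ast\phi)(p) = (\pi^\ast\pi_\ast\phi)(\tau,p)\quad,
\end{flalign*}
where the additivity of the integral over the splitting $\bbR = (-\infty,\tau]\cup[\tau,\infty)$ is the only nontrivial input, and the last equality uses that $\pi^\ast$ is pullback along $\pi$, i.e.\ it leaves the fiber-constant form $\pi_\ast\phi$ untouched in this trivialization. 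Both sides of the asserted identity are linear in the argument and the decomposition $\Omega^k_\cc(\bbR\times B)=\Omega^{1,k-1}_\cc\oplus\Omega^{0,k}_\cc$ is a direct sum, so this settles the equality of the operators $\Omega^\bullet_\cc(N)\to\Omega^{\bullet-1}(N)$.

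The only point deserving care — and the closest thing to an obstacle — is trivialization-independence: the Green's homotopies $G^{\uparrow/\downarrow}_\Theta$ were defined intrinsically in \eqref{eqn:Gforward}–\eqref{eqn:Gbackward} via the span \eqref{eqn:fibre-integration-span}, so one should note that the coordinate formulas \eqref{eqn:G^up_Theta_coordinates}–\eqref{eqn:G^down_Theta_coordinates} were themselves derived intrinsically in the preceding paragraph, and likewise $\pi^\ast\pi_\ast$ is defined before any choice of trivialization is made; hence it is legitimate to verify the operator identity in one convenient trivialization, which we may always find by contractibility of $\bbR$. I would therefore phrase the proof as: ``By the coordinate computations in the paragraph above, in any trivialization $N\cong\bbR\times B$ one has \dots'', then display the one-line integral identity, and conclude. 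This is precisely the ``immediate consequence'' the text advertises, so no further machinery is needed.
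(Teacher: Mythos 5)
Your proof is correct and is exactly the argument the paper has in mind: the paper states the proposition as an ``immediate consequence'' of the coordinate expressions \eqref{eqn:G^up_Theta_coordinates}, \eqref{eqn:G^down_Theta_coordinates} and the formula for $\pi_\ast$ derived just above, and you simply spell out the one-line additivity-of-integrals step plus the (correct) observation that all three operators vanish on $\Omega^{0,k}_\cc$. Your remark on trivialization-independence is a fair point of care, though unnecessary to belabor since the identity being checked is between trivialization-independent operators.
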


\subsection{\label{subsec:GH_on_boundary_and_bulk}Boundary and bulk Green's homotopies}
We now specialize the Green's homotopies from Subsection
\ref{subsec:Greens_homotopies_from_flow}, first to the self-dual and anti-self-dual boundary condition complexes
$\mathfrak{L}^+$ and $\mathfrak{L}^-$ and finally to the complex $\mathfrak{F}_{\mathfrak{L}^\pm}(M)$ 
of boundary conditioned fields of linear Chern-Simons theory on $M$.

\paragraph{Boundary Green's homotopies:} For our construction of the Green's homotopies 
\eqref{eqn:Gforward} and \eqref{eqn:Gbackward} for the de Rham complex,
it was crucial to assume that the $\bbR$-action is proper. 
Since the chiral flows $\Theta_+, \Theta_- : \bbR\times \partial M\to \partial M$ 
on a $2$-dimensional oriented and time-oriented Lorentzian manifold $\partial M$
are in general not proper, we have to assume the following strong chiral niceness condition.
\begin{assu}\label{assu:properchiralflows}
The boundary Lorentzian manifold $\partial M$ admits a proper chiral flow $\Theta_\pm$.
\end{assu}

\begin{rem} \label{rem:proper-flow-Cauchy-slice}
If $\Theta_\pm$ is proper, then the associated quotient map 
$\pi_\pm : \partial M \to \partial M /\!\sim_\pm $ has the structure of a
smooth principal $\bbR$-bundle, see Corollary \ref{cor:proper_flow_gives_principal_bundle}.
Because $\bbR$ is contractible, the quotient map $\pi_\pm$ admits smooth global sections,
which can be used to define a chiral analogue of Cauchy surfaces in $\partial M$ \cite[Section 4.2]{Grant-Stuart}.
Properness of the chosen chiral flow $\Theta_\pm$ may therefore be understood as a chiral analogue of global hyperbolicity.
\end{rem}

\begin{ex}
If $\partial M$ is globally hyperbolic (in the usual sense),
then any $+$- or $-$-chiral flow on it is proper.
Hence, Assumption \ref{assu:properchiralflows} holds true
whenever $\partial M$ is globally hyperbolic.
This gives a large class of examples satisfying our hypotheses.
\end{ex}

In order to prove that the Green's homotopies from Proposition 
\ref{propo:Gexistence} restrict to the boundary condition $\LLL^\pm\subseteq \Omega^\bullet(\partial M)[1]$
we require the following technical lemma.
\begin{lem} \label{lem:invariant_self-dual_form}
Suppose that $\Theta_+ : \bbR \times \partial M \to \partial M$ is a proper $+$-chiral flow.
Then there exists a nowhere-vanishing self-dual form
$\beta^+ \in \Omega^+(\partial M)$ that is invariant under $\Theta_+$
in the sense that
\begin{flalign}
\Theta_+^\ast\beta^+ \,= \,\pr^\ast \beta^+
\,\in\, \Omega^1(\bbR\times \partial M)\quad,
\end{flalign}
where $\pr : \bbR \times \partial M\to \partial M$ denotes the projection map.
\sk

Similarly, if $\Theta_- : \bbR \times \partial M \to \partial M$ is a 
proper $-$-chiral flow then there exists a nowhere-vanishing anti-self-dual form
$\beta^-\in\Omega^-(\partial M)$ satisfying $\Theta_-^\ast\beta^- =\pr^\ast\beta^-$.
\end{lem}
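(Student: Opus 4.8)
The plan is to construct the invariant self-dual form $\beta^+$ by an averaging procedure along the orbits of the proper flow $\Theta_+$. Since $\Theta_+$ is proper, the quotient $\pi_+ : \partial M \to B_+ := \partial M/\!\sim_+$ is a principal $\bbR$-bundle by Corollary \ref{cor:proper_flow_gives_principal_bundle}, and by contractibility of $\bbR$ it is trivializable. Fix a trivialization $\partial M \cong \bbR\times B_+$ in which $\Theta_+$ acts by translation in the $\bbR$-factor and pick the global coordinate $\tau$ on that factor, so that $n_+ = \partial_\tau$ up to the usual rescaling freedom. The naive candidate is $\dd\tau$: it is manifestly $\Theta_+$-invariant, i.e.\ $\Theta_+^\ast(\dd\tau) = \pr^\ast(\dd\tau)$, and nowhere-vanishing. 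The one thing left to check is that $\dd\tau$ is self-dual, i.e.\ $\ast\,\dd\tau = \dd\tau$; but $\partial_\tau = n_+$ is $\nearrow$-pointing, hence $\dd\tau(n_+) = 1 \neq 0$ at every point. By Lemma \ref{lem:sd/asdannihilation}, a $1$-form is self-dual iff it annihilates all $\nearrow$- and $\swarrow$-pointing null vectors; $\dd\tau$ does \emph{not} do this (it pairs nontrivially with $n_+$), so $\dd\tau$ is \emph{not} self-dual in general. Thus a small modification is needed.

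The fix is to take the $\Omega^+(\partial M)$-component of $\dd\tau$ and argue it remains nowhere-vanishing and $\Theta_+$-invariant. Using the Hodge decomposition $\Omega^1(\partial M) = \Omega^+(\partial M)\oplus \Omega^-(\partial M)$, write $\dd\tau = \beta^+ + \beta^-$ with $\beta^\pm = \tfrac12(\dd\tau \pm \ast\,\dd\tau)$. I claim $\beta^+$ is the desired form. First, $\Theta_+$-invariance: since $\Theta_+$ is the flow of $n_+$ and the Hodge operator depends only on the conformal class of $g$, it suffices that $\Theta_+$ preserves the conformal class up to the rescaling freedom — more precisely, one shows $\mathcal{L}_{n_+}\ast = \ast\,\mathcal{L}_{n_+}$ on $1$-forms when $n_+$ is null, equivalently that $\Theta_+^\ast$ commutes with the projection onto $\Omega^+$. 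This is where I expect to invoke the chiral geometry of \cite{Grant-Stuart}: a $\nearrow$-pointing null vector field such as $n_+$ satisfies $\mathcal{L}_{n_+} g = \lambda\, g + (\text{terms vanishing on the } \Omega^+ \text{ direction})$ for some function $\lambda$, because null geodesic flows preserve the conformal structure transversally. Concretely, in light-cone–type coordinates $(x^-,x^+)$ adapted so that $n_+ = \partial_{x^+}$, one has $\ast\,\dd x^+ = -\dd x^+$ and $\ast\,\dd x^- = +\dd x^-$ (cf.\ Example \ref{ex:Minkowski1}), so $\beta^+$ is proportional to $\dd x^-$ and its coefficient is $n_+$-invariant because $\partial_{x^+}$ commutes with the coordinate vector fields. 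Since $\dd x^-$ is nowhere-vanishing and the proportionality coefficient is positive (it is $\tfrac12$ in these coordinates), $\beta^+$ is nowhere-vanishing.

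To globalize: cover $\partial M$ by charts of the above form, in each of which $\beta^+$ is a positive multiple of a local self-dual basis form; these patch because $\Omega^+(\partial M)$ is a genuine line subbundle of $T^\ast\partial M$ and the defining recipe $\beta^+ = \tfrac12(\dd\tau + \ast\,\dd\tau)$ is coordinate-independent once $\tau$ (equivalently the trivialization, equivalently the chiral flow $\Theta_+$) is fixed. The only subtlety I anticipate is verifying the invariance $\Theta_+^\ast\beta^+ = \pr^\ast\beta^+$ cleanly rather than just "up to rescaling": this requires that the chosen trivialization really makes $\Theta_+$ an honest translation (true by equivariance of principal-bundle trivializations) and that $\ast$ is constant along the $\tau$-flow in that trivialization. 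The latter is the genuine geometric input and is precisely \cite[Section 4]{Grant-Stuart}'s statement that the chiral flow preserves the chiral structure. Granting that, $\Theta_+^\ast$ commutes with $\ast$ on $1$-forms and with $\dd$, hence fixes $\beta^+$, completing the proof. The anti-self-dual case is identical with $\Theta_-$, $n_-$ and $\Omega^-$ throughout, using the $-$-chiral coordinate in place of $\tau$.

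\textbf{Main obstacle.} The crux is not the algebra of the Hodge decomposition but establishing that a proper chiral flow preserves the Hodge star on $1$-forms up to nothing (not merely up to conformal rescaling, which would only give invariance of $\beta^+$ up to a positive function — still enough for "nowhere-vanishing" but not for the stated strict invariance). I expect this to follow from choosing the chiral frame vector field $n_+$ within its rescaling class so that it is not just complete but genuinely "unit-parametrized" relative to the trivialization, i.e.\ $n_+ = \partial_\tau$; this is exactly the normalization afforded by \cite[Lemma 4.19]{Grant-Stuart} together with the principal-bundle structure. Everything else is a routine check in adapted light-cone coordinates.
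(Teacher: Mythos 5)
Your construction of $\beta^+$ from $\dd\tau$ fails on two counts. First, the ``manifest invariance'' $\Theta_+^\ast(\dd\tau) = \pr^\ast(\dd\tau)$ does not hold: with $s$ the coordinate on the first factor of $\bbR\times\partial M$, the relation $\tau\circ\Theta_+(s,p)=\tau(p)+s$ gives $\Theta_+^\ast(\dd\tau) = \pr^\ast(\dd\tau) + \dd s$, and the spurious $\dd s$-term is there precisely because $\dd\tau(n_+)=1\neq 0$. For any Lie-invariant $1$-form $\beta$ on $\partial M$ one has $\Theta_+^\ast\beta - \pr^\ast\beta = \beta(n_+)\,\dd s$, so the invariance required by the Lemma forces $\beta(n_+)=0$ — a ``horizontal'' form annihilating the flow direction, which is the opposite of $\dd\tau$. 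Second, the self-dual projection $\tfrac12(\dd\tau + \ast\,\dd\tau)$ may be identically zero. Your arithmetic ``it is $\tfrac12$ in these coordinates'' is incorrect: in Minkowski (Example~\ref{ex:Minkowski1}) with $n_+=\partial_{x^+}$, the natural $\tau$ is $x^+$, and $\ast\,\dd x^+ = -\dd x^+$ gives $\tfrac12(\dd\tau+\ast\,\dd\tau)=0$. More generally, with a section $\{x^+=f(x^-)\}$ one gets $\tau = x^+ - f(x^-)$ and self-dual part $-f'(x^-)\,\dd x^-$, which vanishes wherever $f'=0$; nothing in your set-up rules this out. The root of both problems is that Lemma~\ref{lem:sd/asdannihilation} characterizes self-dual $1$-forms as those annihilating $n_+$, while $\dd\tau$ pairs maximally with $n_+$.

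The paper's proof is shorter and sidesteps entirely the question, which you regard as the crux, of whether $\Theta_+$ commutes with the Hodge star. It takes $\beta^+ := \pi_+^\ast\lambda$ for $\lambda$ a nowhere-vanishing $1$-form on the quotient $\partial M/\!\sim_+$, which exists because the quotient is an orientable $1$-manifold. Since $\pi_+ : \partial M\to\partial M/\!\sim_+$ is a principal $\bbR$-bundle (Corollary~\ref{cor:proper_flow_gives_principal_bundle}) with vertical direction spanned by $n_+$, the pullback $\pi_+^\ast\lambda$ annihilates $n_+$ and hence all $\nearrow$- and $\swarrow$-pointing vectors, so it is self-dual by Lemma~\ref{lem:sd/asdannihilation}; it is nowhere-vanishing because $\pi_+$ is a submersion; and $\Theta_+^\ast\beta^+ = \pr^\ast\beta^+$ is immediate from $\pi_+\circ\Theta_+ = \pi_+\circ\pr$. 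In your trivialization $\partial M\cong\bbR\times B_+$, the correct candidate is thus (locally) $\dd u$ for $u$ a coordinate on the $B_+$-factor, not $\dd\tau$.
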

\begin{proof}
It suffices to prove the statement for $\Theta_+$ and self-dual forms
because the other case is similar.
Using Corollary \ref{cor:proper_flow_gives_principal_bundle},
one has that the quotient $\pi_+ :\partial M \to \partial M/\!\sim_+$
associated with the $\bbR$-action $\Theta_+$ carries the structure of a smooth principal $\bbR$-bundle.
Choosing any nowhere-vanishing $1$-form $\lambda\in \Omega^1(\partial M/\!\sim_+)$ on the base,
which exists because $\partial M/\!\sim_+$ is a $1$-manifold and hence orientable, we define
the $1$-form $\beta^+  := \pi^\ast_+\lambda \in \Omega^1(\partial M)$.
This form satisfies $\Theta_+^\ast\beta^+ \,= \,\pr^\ast\beta^+$ because
the $\bbR$-action $\Theta_+$ preserves the fibers of $\pi_+:\partial M\to \partial M/\!\sim_+$, 
i.e.\ $\pi_+\circ\Theta_+ = \pi_+ \circ \pr$.
By construction, the $1$-form $\beta^+  = \pi^\ast_+\lambda$ annihilates all 
vertical tangent vectors of the bundle $\pi_+:\partial M\to \partial M/\!\sim_+$,
which include all $\nearrow$- and $\swarrow$-pointing tangent vectors on $\partial M$.
Hence, $\beta^+$ is self-dual by Lemma \ref{lem:sd/asdannihilation}.
\end{proof}

\begin{ex}
Consider the $2$-dimensional Minkowski spacetime
with $+$- and $-$-chiral flows given by
$\Theta_+ \left(s, (x^-, x^+) \right) = (x^-, x^+ + s)$ and 
$\Theta_- \left(s, (x^-, x^+) \right) = (x^- + s, x^+)$ in light cone coordinates,
as per Examples \ref{ex:Minkowski1}, \ref{ex:Minkowski2} and \ref{ex:Minkowski:flows}.
Then $\beta^+ = \dd x^-$ is a $\Theta_+$-invariant self-dual form
and $\beta^- = \dd x^+$ is a $\Theta_-$-invariant anti-self-dual form.
As remarked already in Example \ref{ex:Minkowski1}, note the sign flip.
\end{ex}

\begin{propo}\label{prop:G_Theta:restricts_to_L}
Suppose that $\partial M$ has a proper $+$-chiral flow $\Theta_+$.
Then the self-dual boundary condition $\LLL^+$
admits a $\nearrow$-pointed Green's homotopy $G^\ne$ and a $\swarrow$-pointed Green's homotopy
$G^\sw$ in the sense of Definition \ref{def:Greenshomotopies_on_partial_M}.
An explicit choice is given by restricting the Green's homotopies
\begin{flalign}
G^{\uparrow/\downarrow}_{\Theta_+ [1]} \,=\,-G^{\uparrow/\downarrow}_{\Theta_+} \,\in\,\big[\Omega^\bullet_\cc(\partial M)[1], \Omega^\bullet(\partial M)[1]\big]^{-1}
\end{flalign}
from Proposition \ref{propo:Gexistence} and Remark \ref{rem:Greenshomotopies:index_shifting}
to the boundary condition $\LLL^+\subseteq \Omega^\bullet(\partial M)[1]$.
\sk

Similarly, if $\partial M$ has a proper $-$-chiral flow $\Theta_-$
then the anti-self-dual boundary condition $\LLL^-$
admits a $\nwarrow$-pointed Green's homotopy $G^\nw$ and a $\searrow$-pointed Green's homotopy $G^\se$,
with an explicit choice given by restricting
\begin{flalign}
G^{\uparrow/\downarrow}_{\Theta_- [1]} \,=\,-G^{\uparrow/\downarrow}_{\Theta_-} \,\in\,\big[\Omega^\bullet_\cc(\partial M)[1], \Omega^\bullet(\partial M)[1]\big]^{-1}
\end{flalign}
to the boundary condition $\LLL^-\subseteq \Omega^\bullet(\partial M)[1]$.
\end{propo}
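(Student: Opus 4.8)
The plan is to deduce the statement from the general construction of Subsection \ref{subsec:Greens_homotopies_from_flow}, applied to the manifold $N=\partial M$ with the proper $\bbR$-action $\Theta_+$, and then to check the single remaining fact that the resulting operators respect the subcomplex $\LLL^+\subseteq \Omega^\bullet(\partial M)[1]$. By Proposition \ref{propo:Gexistence}, $G^\uparrow_{\Theta_+}$ and $G^\downarrow_{\Theta_+}$ are, respectively, a forward and a backward Green's homotopy for the unshifted de Rham complex on $\partial M$, so by Remark \ref{rem:Greenshomotopies:index_shifting} the rescaled cochains $G^{\uparrow/\downarrow}_{\Theta_+[1]}=-G^{\uparrow/\downarrow}_{\Theta_+}$ are forward/backward Green's homotopies for the $1$-shifted de Rham complex $\FFF(\partial M)=\Omega^\bullet(\partial M)[1]$. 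Since $J^\uparrow_{\Theta_+}(p)=J^\ne(p)$ and $J^\downarrow_{\Theta_+}(p)=J^\sw(p)$ by Remark \ref{rem:J_from_flow}, the support estimate \ref{def:Greenshomotopies:support} of these operators is precisely the $\nearrow$- respectively $\swarrow$-pointedness demanded in \ref{def:Greenshomotopies_on_partial_M:support} of Definition \ref{def:Greenshomotopies_on_partial_M}, and it holds already on all of $\FFF(\partial M)$, hence a fortiori on the subcomplex $\LLL^+$. Moreover \ref{def:Greenshomotopies:eom} gives $\partial G^{\uparrow/\downarrow}_{\Theta_+[1]}=j$ on $\FFF(\partial M)$; since $\LLL^+$ is a subcomplex of $\FFF(\partial M)$ and the inclusion cochain map $j$ restricts to $\LLL^+_\cc\hookrightarrow\LLL^+$, the restricted operators automatically satisfy \ref{def:Greenshomotopies_on_partial_M:eom} \emph{once} we know that they send $\LLL^+_\cc$ into $\LLL^+$.

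It therefore remains only to verify that $G^\uparrow_{\Theta_+}$ and $G^\downarrow_{\Theta_+}$ --- equivalently their sign-reversed versions $G^{\uparrow/\downarrow}_{\Theta_+[1]}$ --- map $\LLL^+_\cc$ into $\LLL^+$. Reading off the components of $\LLL^+$ from \eqref{eqn:boundarycondition}, this amounts to two claims: (a) $G^{\uparrow/\downarrow}_{\Theta_+}$ annihilates compactly supported self-dual $1$-forms (as $\LLL^+$ vanishes in degree $-1$); and (b) $G^{\uparrow/\downarrow}_{\Theta_+}$ maps compactly supported $2$-forms into self-dual $1$-forms (as $\LLL^+$ equals $\Omega^+(\partial M)$ in degree $0$). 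To establish (a) and (b), I would work in a trivialization $\partial M\cong\bbR\times B$ of the principal $\bbR$-bundle $\pi_+:\partial M\to B:=\partial M/\!\sim_+$, which exists by Corollary \ref{cor:proper_flow_gives_principal_bundle} and contractibility of $\bbR$ and in which $\Theta_+$ becomes translation along the $\bbR$-factor, so that the coordinate descriptions \eqref{eqn:forms_split_coordinates}, \eqref{eqn:G^up_Theta_coordinates} and \eqref{eqn:G^down_Theta_coordinates} apply. The essential input is Lemma \ref{lem:invariant_self-dual_form}, and more precisely its proof, which exhibits the $\Theta_+$-invariant nowhere-vanishing self-dual form $\beta^+$ as a pullback $\beta^+=\pi_+^\ast\lambda$ of a nowhere-vanishing $1$-form $\lambda$ on the $1$-manifold $B$; equivalently, $\Theta_+$-invariance forces $\beta^+$ to have no leg along the $\bbR$-factor. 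As $\Omega^+(\partial M)$ is the rank-$1$ $C^\infty(\partial M)$-module generated by $\beta^+$, every self-dual $1$-form is of the ``second type'' $g\,\widetilde{\psi}$ in the notation of \eqref{eqn:forms_split_coordinates}, and such forms are killed by $G^{\uparrow/\downarrow}_{\Theta_+}$ because the relevant fiber integrations run along the $\dd s$-direction only; this proves (a). For (b), every $2$-form on the surface $\partial M\cong\bbR\times B$ is necessarily of the ``first type'' $\omega=f\,\dd\tau\wedge\widetilde{\phi}$ since $B$ is $1$-dimensional, and choosing $\widetilde{\phi}=\lambda$ one has $\omega=f\,\dd\tau\wedge\lambda$ with $f\in C^\infty(\partial M)$, compactly supported whenever $\omega$ is; then \eqref{eqn:G^up_Theta_coordinates} gives $(G^\uparrow_{\Theta_+}\omega)(\tau,p)=\big(\int_{-\infty}^{\tau}f(\tau^\prime,p)\,\dd\tau^\prime\big)\,\lambda(p)$, which is a $C^\infty(\partial M)$-multiple of $\beta^+$ and hence self-dual, while \eqref{eqn:G^down_Theta_coordinates} yields the analogous conclusion for $G^\downarrow_{\Theta_+}$. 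This proves (b), so $G^\nearrow:=G^\uparrow_{\Theta_+[1]}\big|_{\LLL^+}$ and $G^\swarrow:=G^\downarrow_{\Theta_+[1]}\big|_{\LLL^+}$ are well-defined $(-1)$-cochains in $[\LLL^+_\cc,\LLL^+]^{-1}$, which by the first paragraph constitute a $\nearrow$- and a $\swarrow$-pointed Green's homotopy for $\LLL^+$.

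The anti-self-dual case is handled in exactly the same way with $\Theta_-$ in place of $\Theta_+$: one uses the $\Theta_-$-invariant nowhere-vanishing anti-self-dual form $\beta^-$ from Lemma \ref{lem:invariant_self-dual_form} (again the pullback of a nowhere-vanishing $1$-form on $B=\partial M/\!\sim_-$, hence a base form) and the identifications $J^\uparrow_{\Theta_-}(p)=J^\nw(p)$ and $J^\downarrow_{\Theta_-}(p)=J^\se(p)$ from Remark \ref{rem:J_from_flow}, and since $\LLL^-$ carries $\Omega^-(\partial M)$ in degree $0$ with $\beta^-$ generating this rank-$1$ module, the same first/second-type dichotomy applies verbatim. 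I expect the only genuinely nontrivial step to be claim (b) together with the observation --- extracted from the proof of Lemma \ref{lem:invariant_self-dual_form} --- that the invariant (anti-)self-dual form can be taken to be a base form in an equivariant trivialization of $\pi_\pm$; the rest is a matter of tracking cohomological degrees and appealing directly to Proposition \ref{propo:Gexistence}.
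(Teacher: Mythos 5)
Your proposal is correct and follows the same overall strategy as the paper: invoke Proposition~\ref{propo:Gexistence} and Remark~\ref{rem:Greenshomotopies:index_shifting} to get Green's homotopies on $\Omega^\bullet(\partial M)[1]$, and then reduce to checking that $G^{\uparrow/\downarrow}_{\Theta_\pm}$ maps $\LLL^\pm_\cc$ into $\LLL^\pm$, which unwinds into your claims (a) and (b). The difference is in how you verify (a) and (b). The paper stays coordinate-free: writing a boundary-conditioned form as $\alpha = \phi \wedge \beta^+$ with $\phi$ either a function or an anti-self-dual $1$-form, it uses only the $\Theta_+$-invariance $\Theta_+^\ast\beta^+ = \pr^\ast\beta^+$ from the \emph{statement} of Lemma~\ref{lem:invariant_self-dual_form}, together with the projection formula for fiber integration, to pull $\beta^+$ out of the integral. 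You instead pass to an equivariant trivialization $\partial M\cong\bbR\times B$ and invoke the explicit coordinate expressions \eqref{eqn:G^up_Theta_coordinates}--\eqref{eqn:G^down_Theta_coordinates}, which requires extracting from the \emph{proof} of Lemma~\ref{lem:invariant_self-dual_form} the fact that $\beta^+$ can be taken to be a base form $\pi_+^\ast\lambda$ (equivalently: $\Theta_+$-invariant and vanishing on vertical vectors). That extra fact is true and indeed implicit in the paper, so both routes work; yours is more concrete and makes (a) and (b) transparent at the level of first- and second-type forms, while the paper's is shorter and avoids choosing a trivialization. Either way the remaining three cases follow by symmetry exactly as you say.
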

\begin{proof}
The only nontrivial step is to show that the restrictions exist.
We only spell out the proof for the case of $\LLL^+$ and $G^{\uparrow}_{\Theta_+[1]}$ 
since all other cases are similar. Recalling the explicit form of the boundary condition complex 
\eqref{eqn:boundarycondition}, we have to show that,
for every compactly supported self-dual $1$-form $\alpha\in \Omega_\cc^+(\partial M)$,
the $0$-form $-G^{\uparrow}_{\Theta_+}(\alpha) \in\Omega^0(\partial M)$ is zero
and that, for every compactly supported $2$-form $\alpha\in \Omega_\cc^2(\partial M)$,
the $1$-form $-G^{\uparrow}_{\Theta_+}(\alpha) \in\Omega^1(\partial M)$ is self-dual.
Using the $\Theta_+$-invariant module basis from Lemma \ref{lem:invariant_self-dual_form},
we can write $\alpha = \phi \wedge \beta^+$ 
where $\phi \in C^\infty_{\cc} (\partial M)$ is a compactly supported function
in the first case and $\phi \in \Omega^-_{\cc}(\partial M)$ is a compactly supported
anti-self-dual form in the second case. Using \eqref{eqn:Gforward}, we compute
\begin{flalign}
 -G^{\uparrow}_{\Theta_+}(\alpha) &= -\int_{\bbR^{\leq 0}} \Theta_+^\ast(\phi \wedge \beta^+)
=-\int_{\bbR^{\leq 0}} \big(\Theta_+^\ast(\phi) \wedge \pr^\ast(\beta^+)\big)
=-\left(\int_{\bbR^{\leq 0}} \Theta_+^\ast(\phi)\right) \wedge \beta^+ \; \, ,
\end{flalign}
where in the second step we used that $\beta^+$ is $\Theta_+$-invariant (see Lemma \ref{lem:invariant_self-dual_form})
and the last step is an application of the projection formula for fiber integration 
\cite[Equation (3.4.11)]{Nicolaescu}. In the first case $\phi$ is a $0$-form, so its fiber 
integral over $1$-dimensional fibers vanishes. It follows that $-G^\uparrow_{\Theta_+}(\alpha) = 0$.
In the second case $\phi$ is a $1$-form, so its fiber 
integral is a function. Since $\beta^+$ is self-dual, 
it follows that $-G^{\uparrow}_{\Theta_+}(\alpha) \in\Omega^+(\partial M)$.
This proves that $-G^{\uparrow}_{\Theta_+}$ restricts to 
$G^\ne:=-G^{\uparrow}_{\Theta_+}\in [\LLL_\cc^+,\LLL^+]^{-1} $.
\end{proof}

\paragraph{Bulk Green's homotopies:}
We would like to extend the boundary Green's homotopies from Proposition 
\ref{prop:G_Theta:restricts_to_L} to the complex $\FFF_{\LLL^\pm}(M)\subseteq \Omega^\bullet(M)[1]$
of boundary conditioned Chern-Simons fields on $M$. Such extensions
can be obtained from the techniques developed in Subsection \ref{subsec:Greens_homotopies_from_flow},
provided that the following assumption holds true.
\begin{assu}\label{assu:bulkflows}
The bulk manifold $M$ admits a proper $\bbR$-action $\widehat{\Theta}_{\pm} : \bbR\times M\to M$
that extends the chosen chiral flow $\Theta_\pm$ on the boundary $\partial M$,
i.e.\ the diagram
\begin{flalign}\label{eqn:bulkflowcompatibility}
\begin{gathered}
\xymatrix{
\bbR\times M \ar[r]^-{\widehat{\Theta}_{\pm}} & M\\
\ar[u]^-{\id\times\iota}\bbR\times \partial M \ar[r]_-{\Theta_\pm} & \partial M\ar[u]_-{\iota}
}
\end{gathered}
\end{flalign}
commutes, where $\iota:\partial M\to M$ denotes the boundary inclusion.
\end{assu}

\begin{rem}
The extension $\widehat{\Theta}_\pm$ of the chosen chiral flow $\Theta_\pm$
required by Assumption \ref{assu:bulkflows} is clearly \textit{not} unique.
In Section \ref{sec:AQFTonM}, we will construct an AQFT for the Chern-Simons/Wess-Zumino-Witten bulk/boundary system
out of $\widehat{\Theta}_\pm$ and the associated Green's homotopies
from Proposition \ref{prop:G_Theta:restricts_to_FL} below.
This prompts the question: How canonical is our AQFT construction?
\sk

As we will show in Section \ref{sec:dimred},
the AQFT for this bulk/boundary system
is (up to weak equivalence) only sensitive to the base manifold
of the quotient map $\widehat{\pi}_\pm : M\to M/\!\sim_\pm $ associated
with the $\bbR$-action $\widehat{\Theta}_\pm$.
Under certain additional topological assumptions on $M$,
we show in Proposition \ref{prop:diffeomorphic-bases} that any
two such base manifolds arising from different choices of
$\widehat{\Theta}_\pm$ must be diffeomorphic.
From this, it follows that any two proper extensions of
the chiral flow $\Theta_\pm$ to $M$ yield weakly equivalent AQFTs.
\end{rem}

\begin{rem}
Commutativity of \eqref{eqn:bulkflowcompatibility} implies 
in particular that $\partial M$ is a $\widehat{\Theta}_\pm$-invariant subspace of $M$.
It is a general fact of continuous group actions that if $\widehat{\Theta}_\pm$ 
is proper, then its restriction to any invariant subspace is also proper.
Consequently, Assumption \ref{assu:bulkflows} implies Assumption 
\ref{assu:properchiralflows} that the chosen chiral flow $\Theta_\pm$ on the boundary is proper.
\end{rem}

\begin{ex}\label{ex:Minkowski3}
Continuing Examples \ref{ex:Minkowski1}, \ref{ex:Minkowski2} and \ref{ex:Minkowski:flows},
consider $M = \bbR^2\times \bbR^{\geq 0}$ with boundary given by the standard Minkowski spacetime.
Then the proper $\bbR$-action
\begin{flalign}
\bbR\times M\, \longrightarrow\,M~,~~
\big(s,(t,x,r)\big)\,\longmapsto\,
\big(t+\tfrac{s}{2}, x + \epsilon \,\tfrac{s}{2},r\big)
\end{flalign}
provides an extension of the chiral flows $\Theta_+$ and $\Theta_-$ 
from Example \ref{ex:Minkowski:flows} when $\epsilon = +1$ and $\epsilon = -1$ respectively.
\end{ex}

\begin{ex}\label{ex:cylinder}
As a second more interesting example, consider the filled cylinder $M = \bbR\times \mathbb{D}^2$
with $\mathbb{D}^2 := \{(x,y)\in \bbR^2 \,\vert\, x^2 +y^2 \leq 1\}$ the closed unit disk. 
Working in polar coordinates $(r,\phi)\in (0,1]\times \bbR/\bbZ$ on the disk, we endow $M$ with the orientation
$\mathfrak{o} = [\dd t\wedge\dd \phi \wedge \dd r]$
and its boundary cylinder $\partial M $ (defined by $r=1$)
with the flat metric $g = -\dd t^2 + \dd \phi^2$ and time-orientation $\mathfrak{t}=[\partial_t]$.
The proper $\bbR$-action
\begin{flalign}
\bbR\times M\, \longrightarrow\,M~,~~ \big(s,(t,\phi,r)\big)\,\longmapsto\,
\big(t+\tfrac{s}{2}, \phi + \epsilon\, r\,\tfrac{s}{2},r\big)
\end{flalign}
provides an extension of the flow 
$(s,(t,\phi)) \mapsto \big(t+\tfrac{s}{2}, \phi + \epsilon \,\tfrac{s}{2}\big)$
on the flat Lorentzian cylinder. The latter is $+$-chiral when $\epsilon = +1$ 
and $-$-chiral when $\epsilon = -1$.
\end{ex}

\begin{propo}\label{prop:G_Theta:restricts_to_FL}
Suppose that Assumption \ref{assu:bulkflows} holds.
Then the forward/backward Green's homotopies
\begin{flalign}
G^{\uparrow/\downarrow}_{\widehat{\Theta}_\pm[1]}\,=\,-G^{\uparrow/\downarrow}_{\widehat{\Theta}_\pm} \,\in\,\big[\Omega^\bullet_\cc(M)[1], \Omega^\bullet(M)[1]\big]^{-1}
\end{flalign}
from Proposition \ref{propo:Gexistence} and Remark \ref{rem:Greenshomotopies:index_shifting}
restrict to the complex $\FFF_{\LLL^\pm}(M)\subseteq \Omega^\bullet(M)[1]$ of boundary conditioned fields.
These bulk Green's homotopies are compatible
with the boundary ones from Proposition \ref{prop:G_Theta:restricts_to_L} in the sense that
\begin{flalign}\label{eqn:G-bulk-boundary-compatibility}
\iota^\ast\, G^{\uparrow/\downarrow}_{\widehat{\Theta}_\pm[1]} = G^{\uparrow/\downarrow}_{\Theta_\pm[1]}\,\iota^\ast
\quad,
\end{flalign}
where $\iota : \partial M\to M$ denotes the boundary inclusion.
\end{propo}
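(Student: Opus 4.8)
The plan is to first establish the bulk/boundary compatibility identity \eqref{eqn:G-bulk-boundary-compatibility} and then to read off the restriction to $\FFF_{\LLL^\pm}(M)$ as an essentially immediate consequence of Proposition \ref{prop:G_Theta:restricts_to_L}. Before starting, note that Assumption \ref{assu:bulkflows} forces the chosen chiral flow $\Theta_\pm$ on $\partial M$ to be proper, being the restriction of the proper $\bbR$-action $\widehat{\Theta}_\pm$ to the invariant subspace $\partial M$; hence both $G^{\uparrow/\downarrow}_{\widehat{\Theta}_\pm}$ and $G^{\uparrow/\downarrow}_{\Theta_\pm}$ are well-defined by Proposition \ref{propo:Gexistence}, and so are their shifted counterparts $G^{\uparrow/\downarrow}_{\widehat{\Theta}_\pm[1]} = -G^{\uparrow/\downarrow}_{\widehat{\Theta}_\pm}$ and $G^{\uparrow/\downarrow}_{\Theta_\pm[1]} = -G^{\uparrow/\downarrow}_{\Theta_\pm}$ via Remark \ref{rem:Greenshomotopies:index_shifting}. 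Because the common shift sign $-1$ appears on both sides of \eqref{eqn:G-bulk-boundary-compatibility}, it suffices to prove the compatibility at the level of the unshifted de Rham complexes.

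For that compatibility, I would combine two facts. First, commutativity of the square \eqref{eqn:bulkflowcompatibility}, i.e.\ $\widehat{\Theta}_\pm\circ(\id\times\iota) = \iota\circ\Theta_\pm$, pulls back to the identity of cochain maps $(\id\times\iota)^\ast\circ\widehat{\Theta}_\pm^\ast = \Theta_\pm^\ast\circ\iota^\ast$, and this persists after restricting the $\bbR$-factor to $\bbR^{\leq 0}$ or to $\bbR^{\geq 0}$. Second, the square with vertical maps $\id\times\iota : \bbR^{\leq 0}\times\partial M\to\bbR^{\leq 0}\times M$ and $\iota : \partial M\to M$, and with horizontal maps the bundle projections, is a pullback of oriented (trivial) fiber bundles with identical fiber, so the base-change property of fiber integration gives $\iota^\ast\circ\int_{\bbR^{\leq 0}} = \int_{\bbR^{\leq 0}}\circ\,(\id\times\iota)^\ast$, and likewise over $\bbR^{\geq 0}$. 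Concatenating these two facts yields $\iota^\ast\,G^{\uparrow}_{\widehat{\Theta}_\pm} = G^{\uparrow}_{\Theta_\pm}\,\iota^\ast$ and $\iota^\ast\,G^{\downarrow}_{\widehat{\Theta}_\pm} = G^{\downarrow}_{\Theta_\pm}\,\iota^\ast$ on compactly supported forms, hence \eqref{eqn:G-bulk-boundary-compatibility} after reinstating the shift signs. The one technical point to verify along the way is that each fiber integration that appears is applied to a vertically compactly supported form: this holds for $\widehat{\Theta}_\pm^\ast\alpha$ by Lemma \ref{lem:proper-implies-vertically-compact}, and is preserved under $(\id\times\iota)^\ast$ since $\iota$ is a proper embedding, so that $(\id\times\iota)^{-1}$ of a support vertically compact over $M$ is vertically compact over $\partial M$.

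For the restriction statement, I would take any $\alpha\in\FFF_{\LLL^\pm,\cc}(M)$, i.e.\ a compactly supported bulk field with $\iota^\ast\alpha\in\LLL^\pm$; since $\supp(\iota^\ast\alpha)\subseteq\iota^{-1}(\supp\,\alpha)$ is compact, in fact $\iota^\ast\alpha\in\LLL_\cc^\pm$. By the compatibility identity, $\iota^\ast\big(G^{\uparrow/\downarrow}_{\widehat{\Theta}_\pm[1]}\,\alpha\big) = G^{\uparrow/\downarrow}_{\Theta_\pm[1]}\big(\iota^\ast\alpha\big)$, and the right-hand side lies in $\LLL^\pm$ by Proposition \ref{prop:G_Theta:restricts_to_L}. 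Thus $G^{\uparrow/\downarrow}_{\widehat{\Theta}_\pm[1]}\,\alpha$ satisfies the defining boundary condition of $\FFF_{\LLL^\pm}(M)$, which is precisely the claim that $G^{\uparrow/\downarrow}_{\widehat{\Theta}_\pm[1]}$ restricts to a $(-1)$-cochain in $\big[\FFF_{\LLL^\pm,\cc}(M),\FFF_{\LLL^\pm}(M)\big]^{-1}$; the two defining properties of a Green's homotopy ($\partial G = j$ and the support condition) are then inherited verbatim from Proposition \ref{propo:Gexistence}. I expect the main, though modest, obstacle to be the support bookkeeping in the compatibility step — making sure every pullback and fiber integration is legitimate and that the orientation conventions on $\bbR^{\leq 0}\times\partial M$ and $\bbR^{\leq 0}\times M$ match so that no sign is lost in the base-change identity; once that is settled, everything else is formal.
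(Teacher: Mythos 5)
Your proposal is correct and follows essentially the same route as the paper: prove the bulk/boundary compatibility identity \eqref{eqn:G-bulk-boundary-compatibility} by combining naturality of pullback along the commutative square \eqref{eqn:bulkflowcompatibility} with the base-change property of fiber integration for the bundle morphism $\id\times\iota$, and then deduce the restriction to $\FFF_{\LLL^\pm}(M)$ as an immediate consequence of Proposition \ref{prop:G_Theta:restricts_to_L}. The only differences from the paper's proof are presentational (you explicitly cancel the shift signs up front and add a brief note about vertical compactness being preserved under $(\id\times\iota)^\ast$, whereas the paper leaves these implicit and cites \cite[Section 7.12, Proposition VIII]{GHV} for the base-change step).
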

\begin{proof}
Recalling the definition of $\FFF_{\LLL^\pm}(M)\subseteq \Omega^\bullet(M)[1]$
from \eqref{eqn:conditionedfieldspullback}, we observe that 
the compatibility conditions \eqref{eqn:G-bulk-boundary-compatibility}
together with Proposition \ref{prop:G_Theta:restricts_to_L} would imply that the desired restrictions exist. 
Indeed, we would have to show that
$\iota^\ast G^{\uparrow/\downarrow}_{\widehat{\Theta}_\pm[1]}(\alpha)\in \LLL^\pm$,
for all $\alpha\in \FFF_{\LLL^\pm,\cc}(M)\subseteq \Omega_\cc^\bullet(M)[1]$,
i.e.\ $\iota^\ast \alpha \in \LLL^\pm_{\cc}$.
Compatibility gives that
$\iota^\ast G^{\uparrow/\downarrow}_{\widehat{\Theta}_\pm[1]}(\alpha) = 
G^{\uparrow/\downarrow}_{\Theta_\pm[1]}\big(\iota^\ast \alpha\big)$,
which then lies in $\mathfrak{L}^\pm$ by Proposition \ref{prop:G_Theta:restricts_to_L}.
\sk

It thus remains to prove \eqref{eqn:G-bulk-boundary-compatibility}, for which it suffices to consider
the case of $G^{\uparrow}_{\widehat{\Theta}_\pm}$ since the other case of $G^{\downarrow}_{\widehat{\Theta}_\pm}$ is similar.
The boundary inclusion $\iota : \partial M\to M$ extends to a bundle morphism
\begin{flalign}
\begin{gathered}
\xymatrix{
\ar[d]_-{\pr}\bbR^{\leq 0}\times \partial M \ar[r]^-{\id\times\iota} & \bbR^{\leq 0}\times M\ar[d]^-{\pr}\\
\partial M\ar[r]_-{\iota}& M
}
\end{gathered}
\end{flalign}
which is fiber-wise a diffeomorphism. Recalling \eqref{eqn:Gforward}, we compute
\begin{flalign}
\iota^\ast\, G^{\uparrow}_{\widehat{\Theta}_\pm[1]}
= -\iota^\ast\, \int_{\bbR^{\leq 0}}\widehat{\Theta}^\ast_\pm 
= -\int_{\bbR^{\leq 0}}(\id\times\iota)^\ast\,\widehat{\Theta}^\ast_\pm 
= -\int_{\bbR^{\leq 0}}\Theta_\pm^\ast \,\iota^\ast
= G^{\uparrow}_{\Theta_\pm[1]} \, \iota^\ast\quad,
\end{flalign}
where in the second step we used \cite[Section 7.12, Proposition VIII]{GHV} 
and in the third step we used the commutative diagram \eqref{eqn:bulkflowcompatibility}.
\end{proof}


\section{\label{sec:AQFTonM}The AQFT for the bulk/boundary system}
For $M$ an oriented $3$-manifold whose boundary $\partial M$ is endowed with a Lorentzian metric and a time-orientation,
we have defined in \eqref{eqn:boundarycondition} a boundary condition $\LLL^\pm$ for linear Chern-Simons theory on $M$.
This boundary condition is associated to a choice of self-dual or anti-self-dual forms on $\partial M$,
and in turn to a choice of $+$- or $-$-chiral flow $\Theta_\pm$.
In Assumption \ref{assu:bulkflows}, we have assumed the existence, and made a choice,
of a proper $\mathbb{R}$-action $\widehat{\Theta}_\pm : \mathbb{R} \times M \to M$
on the whole of $M$, which restricts to the chosen chiral flow $\Theta_\pm$ on the boundary.
In the preceding section, we demonstrated the existence of Green's homotopies 
associated to the chosen $\mathbb{R}$-action $\widehat{\Theta}_\pm$.
\sk

The goal of this section is to describe a suitable type of AQFT on $M$ that 
quantizes the boundary conditioned Chern-Simons fields
$\FFF_{\LLL^\pm}(M)$ from \eqref{eqn:conditionedfieldspullback} 
for the chosen (anti-)self-dual boundary condition $\LLL^\pm$.

\paragraph{The relevant orthogonal category:} Recall that an AQFT
can be defined in the very broad context where an \textit{orthogonal category} \cite{BSWoperad,BSWhomotopy}
is used to model a category of ``spacetime regions'' with a notion of ``independence'' of such.
The prime example for relativistic AQFT is the category $\mathbf{COpen}(N)$
of all non-empty causally convex open subsets $U\subseteq N$ of an
oriented and time-oriented globally hyperbolic Lorentzian manifold $N$,
with orthogonality relation determined by causal disjointness.
However, this choice is not suitable for our boundary conditioned Chern-Simons theory,
because the bulk $M$ does not come equipped with a Lorentzian metric.
Using our choice of proper $\bbR$-action $\widehat{\Theta}_\pm : \bbR\times M\to M$ 
that extends a chiral flow on the boundary $\partial M$, we propose the following
more suitable category.
\begin{defi}\label{def:orthogonalcategory}
We denote by $\widehat{\Theta}_\pm\text{-}\mathbf{Open}(M)$ the category
whose objects are all non-empty open subsets $U\subseteq M$ that are \textit{$\widehat{\Theta}_\pm$-convex},
in the sense that 
\begin{flalign}
J^{\uparrow}_{\widehat{\Theta}_\pm}(U)\cap J^{\downarrow}_{\widehat{\Theta}_\pm}(U)\subseteq U\quad,
\end{flalign}
and whose morphisms $\iota_U^{U^\prime} : U\to U^\prime$ 
are subset inclusions $U\subseteq U^\prime\subseteq M$. We endow this category with the orthogonality relation 
defined by $(U_1\subseteq U)\perp (U_2\subseteq U)$ if and only if $U_1$ and $U_2$ are 
\textit{$\widehat{\Theta}_\pm$-disjoint}, in the sense that 
\begin{flalign}
\big(J^\uparrow_{\widehat{\Theta}_\pm}(U_1)\cup J^\downarrow_{\widehat{\Theta}_\pm}(U_1)\big)\cap U_2 = \emptyset\quad.
\end{flalign}
\end{defi}

\begin{rem}\label{rem:alternativeconvexitydisjointness}
The concepts of $\widehat{\Theta}_\pm$-convexity and $\widehat{\Theta}_\pm$-disjointness
are analogues of causal convexity and causal disjointness in relativistic AQFT.
It is sometimes useful to rephrase these definitions in the 
following equivalent ways. A subset $U\subseteq M$ is $\widehat{\Theta}_\pm$-convex
if and only if the orbit of each point $p\in U$ under the $\bbR$-action 
$\widehat{\Theta}_\pm : \bbR\times M\to M$ does not exit and re-enter $U\subseteq M$.
Two subsets $(U_1\subseteq U)\perp (U_2\subseteq U)$ are $\widehat{\Theta}_\pm$-disjoint
if and only if the orbit of $U_1$ under the $\bbR$-action $\widehat{\Theta}_\pm : \bbR\times M\to M$
does not intersect $U_2$.
\end{rem}

The concept of Cauchy morphisms from relativistic AQFT also finds an analogue
in our context. Recall that one of the many equivalent characterizations
for a morphism $\iota_U^{U^\prime} : U\to U^\prime$ between two causally convex subsets
$U\subseteq U^\prime\subseteq N$ in an oriented and time-oriented globally hyperbolic 
Lorentzian manifold $N$ to be Cauchy is to require that the Cauchy developments 
$D(U)=D(U^\prime) \subseteq N$ in $N$ agree. This easily generalizes to our framework.
\begin{defi} \label{def:Cauchy_morphism}
A morphism $\iota_U^{U^\prime} : U\to U^\prime$ in $\widehat{\Theta}_\pm\text{-}\mathbf{Open}(M)$
is called \textit{$\widehat{\Theta}_\pm$-Cauchy} if
\begin{flalign} \label{eqn:Cauchy_morphism}
\widehat{\pi}_\pm(U) \, =\,
\widehat{\pi}_\pm (U^\prime) \quad,
\end{flalign}
where $\widehat{\pi}_\pm : M \to M /\! \sim_\pm$ is the quotient of $M$ by $\widehat{\Theta}_\pm$.
\end{defi}

\begin{rem} \label{rem:Cauchy_morphisms_and_sections}
Observe that \eqref{eqn:Cauchy_morphism} says 
equivalently that $\widehat{\pi}_\pm^{-1} \left( \widehat{\pi}_\pm(U) \right) = 
\widehat{\pi}_\pm^{-1} \left( \widehat{\pi}_\pm (U^\prime) \right)$.
Viewing the orbit $\widehat{\pi}_\pm^{-1} \left( \widehat{\pi}_\pm(U) \right)$
of $U$ under the $\bbR$-action $\widehat{\Theta}_\pm$ 
as the analogue in our context of the Cauchy development of $U$, Definition 
\ref{def:Cauchy_morphism} says that $\iota_U^{U^\prime}$ is $\widehat{\Theta}_\pm$-Cauchy 
if the (chiral analogues of) Cauchy developments of $U$ and $U^\prime$ coincide.
\sk

Recalling also Remark \ref{rem:proper-flow-Cauchy-slice},
we may regard smooth global sections of the quotient map $\widehat{\pi}_\pm : M \to M /\!\sim_\pm$
(which is a smooth principal $\mathbb{R}$-bundle by Corollary \ref{cor:proper_flow_gives_principal_bundle})
as appropriate analogues of Cauchy surfaces in our chiral geometric context.
Then $\iota_U^{U^\prime} : U \to U^\prime$ is a $\widehat{\Theta}_\pm$-Cauchy morphism
if and only if $U$ contains (the image of) a smooth section of the 
restriction $\widehat{\pi}_\pm\vert_{U^\prime} : \widehat{\pi}_\pm^{-1}(\widehat{\pi}_\pm(U^\prime)) 
\to \widehat{\pi}_\pm(U^\prime)$.
Informally, this says that a morphism is $\widehat{\Theta}_\pm$-Cauchy
if its source contains (the chiral analogue of) a Cauchy surface for its target.
\end{rem}

An AQFT on this orthogonal category is then defined as a functor
\begin{flalign}
\AAA \,:\, \widehat{\Theta}_\pm\text{-}\mathbf{Open}(M)~\longrightarrow~\dgastAlg
\end{flalign}
to the category of associative and unital differential graded $\ast$-algebras (in short, dg-algebras) 
over $\bbC$ that satisfies the Einstein causality axiom for $\perp$: 
For each orthogonal pair $(U_1\subseteq U)\perp (U_2\subseteq U)$ in $\widehat{\Theta}_\pm\text{-}\mathbf{Open}(M)$, 
the diagram
\begin{flalign}\label{eqn:Einsteincausality}
\begin{gathered}
\xymatrix@C=5em{
\ar[d]_-{\AAA(\iota_{U_1}^U)\otimes\AAA(\iota_{U_2}^U)} \AAA(U_1)\otimes\AAA(U_2)\ar[r]^-{\AAA(\iota_{U_1}^U)\otimes\AAA(\iota_{U_2}^U)} ~&~ \AAA(U)\otimes \AAA(U)\ar[d]^-{\mu_U^\op}\\
\AAA(U)\otimes \AAA(U) \ar[r]_-{\mu_U^{}}~&~ \AAA(U)
}
\end{gathered}
\end{flalign}
of cochain maps commutes, where $\mu_{U}^{(\op)} : \AAA(U)\otimes \AAA(U)\to\AAA(U)$
denotes the (opposite) multiplication on the dg-algebra $\AAA(U)\in\dgastAlg$.
Such AQFT is said to satisfy the time-slice axiom if, for each
$\widehat{\Theta}_\pm$-Cauchy morphism $\iota_U^{U^\prime}: U\to U^\prime$ 
in $\widehat{\Theta}_\pm\text{-}\mathbf{Open}(M)$,
the cochain map underlying the $\dgastAlg$-morphism 
\begin{flalign}\label{eqn:timeslice}
\AAA(\iota_U^{U^\prime}) \,:\, \AAA(U)\,\stackrel{\sim}{\longrightarrow}\, \AAA(U^\prime)
\end{flalign}
is a quasi-isomorphism.

\paragraph{Poisson cochain complexes:} Using the Green's homotopies from 
Proposition \ref{prop:G_Theta:restricts_to_FL}, we define the $(-1)$-cocycle
\begin{subequations}
\begin{flalign}
G_{\widehat{\Theta}_\pm[1]} \,:=\, G_{\widehat{\Theta}_\pm[1]}^{\uparrow} -  G_{\widehat{\Theta}_\pm[1]}^{\downarrow}\,\in\,
\big[\FFF_{\LLL^\pm,\cc}(M),\FFF_{\LLL^\pm}(M)\big]^{-1}~~,\quad \partial G_{\widehat{\Theta}_\pm[1]} \,=\,0\quad,
\end{flalign}
which equivalently can be regarded as a cochain map
\begin{flalign}
G_{\widehat{\Theta}_\pm[1]} \,:\,\FFF_{\LLL^\pm,\cc}(M)[1]~\longrightarrow~\FFF_{\LLL^\pm}(M)~~,\quad
\varphi\,\longmapsto\,G_{\widehat{\Theta}_\pm[1]}(\varphi) \,=\,-\int_{\bbR}\Theta_{\pm}^{\ast}(\varphi) 
\end{flalign}
\end{subequations}
from the $1$-shift of the complex of compactly supported sections. (Recall that 
these are the linear observables from \eqref{eqn:linobs}.) 
The role played by $G_{\widehat{\Theta}_\pm[1]}$ is analogous to the causal propagator 
(or retarded-minus-advanced Green's operator) in relativistic AQFT. In particular, we can define,
for each object $U\in \widehat{\Theta}_\pm\text{-}\mathbf{Open}(M)$, a cochain map
\begin{flalign}\label{eqn:unshiftedpoisson:preversion}
\tau_{(0)}^{U}\,:\, \xymatrix@C=1.3em{
\ar[rr]^-{{\mathrm{ext}_U^M}^{\otimes 2}} \FFF_{\LLL^\pm,\cc}(U)[1]^{\otimes 2} 
~&&~ \ar[rr]^-{\id \otimes G_{\widehat{\Theta}_\pm[1]}} \FFF_{\LLL^\pm,\cc}(M)[1]^{\otimes 2} ~&&~\FFF_{\LLL^\pm,\cc}(M)[1]\otimes \FFF_{\LLL^\pm}(M) \ar[r]^-{\ev}~&~\bbR
}\quad,
\end{flalign}
where $\mathrm{ext}$ denotes extension by zero of compactly supported sections
and $\ev$ was defined in \eqref{eqn:ev} in terms of integration over $M$.
Note that, by construction, this family of cochain maps is natural on the category 
$ \widehat{\Theta}_\pm\text{-}\mathbf{Open}(M)$ in the sense that, 
given any morphism $U\subseteq U^\prime$ in this category, we have that
\begin{flalign}\label{eqn:taunaturality}
\tau_{(0)}^{U^\prime}\circ \big(\mathrm{ext}_{U}^{U^\prime} \otimes \mathrm{ext}_{U}^{U^\prime}\big) \,=\,\tau_{(0)}^{U}\quad.
\end{flalign}
To ease notation, we shall often drop the superscript ${}^U$ and simply write $\tau_{(0)}$
for every member of the family $\{\tau_{(0)}^U\}$ of cochain maps.
\sk

To show that \eqref{eqn:unshiftedpoisson:preversion} defines indeed an 
unshifted linear Poisson structure, i.e.\ $\tau_{(0)}$ is graded antisymmetric, 
we make use of the following result that allows us to rewrite
this map in terms of an integral over the base manifold $B_\pm$ of the quotient map 
$\widehat{\pi}_\pm : M\to M/\!\sim_\pm \,=: B_\pm$.
Recall that the latter is a smooth principal $\bbR$-bundle by 
Corollary \ref{cor:proper_flow_gives_principal_bundle},
and that this induces on $B_\pm$ an orientation that is
appropriately compatible with the orientation of $M$, see
Corollary \ref{cor:principal_R-bundles:total_base_orientations}.
\begin{lem} \label{lem:unshifted_Poisson_on_base}
The cochain map \eqref{eqn:unshiftedpoisson:preversion} can be written as
\begin{flalign}
\tau_{(0)}(\varphi\otimes \psi) \,=\, (-1)^{\vert \varphi\vert } \,\int_{B_\pm} (\widehat{\pi}_{\pm})_\ast(\varphi)\wedge(\widehat{\pi}_{\pm})_\ast(\psi)\quad,
\end{flalign}
for all homogeneous $\varphi,\psi\in  \FFF_{\LLL^\pm,\cc}(U)[1]$.
Here, $(\widehat{\pi}_\pm)_\ast :\FFF_{\LLL^\pm,\cc}(M)[1]\subseteq\Omega_\cc^\bullet(M)[2]\to
\Omega^\bullet_\cc(B_\pm)[1] $ denotes fiber integration along the quotient map
$\widehat{\pi}_\pm : M\to M/\!\sim_\pm \,=: B_\pm$.
\end{lem}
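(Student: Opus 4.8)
The plan is to trade the Green's homotopy $G_{\widehat{\Theta}_\pm[1]}$ for a pullback followed by fiber integration, and then to transport the resulting bulk integral down to the base $B_\pm$ using the projection formula.

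First I would observe that Proposition~\ref{prop:G_Theta:difference}, applied to $N=M$ with the proper $\bbR$-action $\widehat{\Theta}_\pm$ that is available under Assumption~\ref{assu:bulkflows}, gives $G^{\uparrow}_{\widehat{\Theta}_\pm}-G^{\downarrow}_{\widehat{\Theta}_\pm}=\widehat{\pi}_\pm^{\ast}\,(\widehat{\pi}_\pm)_\ast$ on $\Omega^\bullet_\cc(M)$, where $\widehat{\pi}_\pm:M\to B_\pm$ is the quotient, which is a smooth principal $\bbR$-bundle by Corollary~\ref{cor:proper_flow_gives_principal_bundle}. Combining this with the sign $G^{\uparrow/\downarrow}_{\widehat{\Theta}_\pm[1]}=-G^{\uparrow/\downarrow}_{\widehat{\Theta}_\pm}$ produced by the $1$-shift (Remark~\ref{rem:Greenshomotopies:index_shifting}, see also Proposition~\ref{prop:G_Theta:restricts_to_FL}) yields
\[
G_{\widehat{\Theta}_\pm[1]}\,=\,G^{\uparrow}_{\widehat{\Theta}_\pm[1]}-G^{\downarrow}_{\widehat{\Theta}_\pm[1]}\,=\,-\,\widehat{\pi}_\pm^{\ast}\,(\widehat{\pi}_\pm)_\ast
\]
as a cochain map $\FFF_{\LLL^\pm,\cc}(M)[1]\to\FFF_{\LLL^\pm}(M)$, with $(\widehat{\pi}_\pm)_\ast$ the de Rham fiber integration along $\widehat{\pi}_\pm$; that this descends to boundary conditioned fields is precisely the content of Proposition~\ref{prop:G_Theta:restricts_to_FL}.

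Next I would unwind the definition \eqref{eqn:unshiftedpoisson:preversion} of $\tau_{(0)}$. Writing $\varphi,\psi$ also for their extensions by zero $\mathrm{ext}_U^M(\varphi),\mathrm{ext}_U^M(\psi)\in\FFF_{\LLL^\pm,\cc}(M)[1]$, whose degrees are unchanged, and inserting the explicit expression for $\ev$ from \eqref{eqn:ev}, I obtain
\[
\tau_{(0)}(\varphi\otimes\psi)\,=\,\ev\big(\varphi\otimes G_{\widehat{\Theta}_\pm[1]}(\psi)\big)\,=\,(-1)^{\vert\varphi\vert+1}\int_M\varphi\wedge G_{\widehat{\Theta}_\pm[1]}(\psi)\,=\,(-1)^{\vert\varphi\vert}\int_M\varphi\wedge\widehat{\pi}_\pm^{\ast}(\widehat{\pi}_\pm)_\ast(\psi)\quad,
\]
the extra minus sign from the previous step turning $(-1)^{\vert\varphi\vert+1}$ into $(-1)^{\vert\varphi\vert}$. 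The integrand here is compactly supported because its support lies in $\supp\varphi$, so the bulk integral is well-defined.

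Finally I would push this down to $B_\pm$ using two standard properties of fiber integration, both with the fiber-first orientation convention adopted in the paper: (i) for any compactly supported top-degree form $\omega\in\Omega^3_\cc(M)$ one has $\int_M\omega=\int_{B_\pm}(\widehat{\pi}_\pm)_\ast\omega$, which holds because the fibers of $\widehat{\pi}_\pm$ (the $\bbR$-orbits) have empty boundary and $B_\pm$ carries the orientation compatible with that of $M$ from Corollary~\ref{cor:principal_R-bundles:total_base_orientations}; and (ii) the projection formula $(\widehat{\pi}_\pm)_\ast\big(\alpha\wedge\widehat{\pi}_\pm^{\ast}\beta\big)=(\widehat{\pi}_\pm)_\ast(\alpha)\wedge\beta$, which over a one-dimensional fiber carries no additional sign. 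Applying (i) to $\omega=\varphi\wedge\widehat{\pi}_\pm^{\ast}(\widehat{\pi}_\pm)_\ast(\psi)$ and then (ii) gives
\[
(-1)^{\vert\varphi\vert}\int_M\varphi\wedge\widehat{\pi}_\pm^{\ast}(\widehat{\pi}_\pm)_\ast(\psi)\,=\,(-1)^{\vert\varphi\vert}\int_{B_\pm}(\widehat{\pi}_\pm)_\ast(\varphi)\wedge(\widehat{\pi}_\pm)_\ast(\psi)\quad,
\]
which is the claimed identity. The main thing to be careful about is the bookkeeping of the orientation and Koszul signs in (i) and (ii), namely checking that with the chosen conventions nothing survives beyond the stated factor $(-1)^{\vert\varphi\vert}$; one also needs that $(\widehat{\pi}_\pm)_\ast$ is defined on the compactly supported boundary conditioned forms in play, which is immediate from properness of $\widehat{\Theta}_\pm$ as in Lemma~\ref{lem:proper-implies-vertically-compact}.
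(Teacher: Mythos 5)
Your proof is correct and follows essentially the same route as the paper's: both reduce $G_{\widehat{\Theta}_\pm[1]}$ to $-\widehat{\pi}_\pm^\ast(\widehat{\pi}_\pm)_\ast$ via Proposition \ref{prop:G_Theta:difference} and the shift sign, unwind $\tau_{(0)}$ using \eqref{eqn:ev}, and then pass the integral to the base. The paper compresses your last two steps (integration compatibility and projection formula) into a single citation of the ``Fubini theorem for fiber integration'' \cite[Proposition 3.4.48]{Nicolaescu}, but this is the same argument.
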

\begin{proof}
From Proposition \ref{prop:G_Theta:difference},
we have $G_{\widehat{\Theta}_\pm[1]} = -G_{\widehat{\Theta}_\pm} = - (\widehat{\pi}_\pm)^\ast (\widehat{\pi}_\pm)_\ast$.
Then
\begin{flalign}
\tau_{(0)}(\varphi \otimes \psi) & =
(-1)^{\vert \varphi\vert +1} \,\int_M \varphi \wedge G_{\widehat{\Theta}_\pm[1]} (\psi)
= (-1)^{\vert\varphi\vert} \,\int_M \varphi \wedge (\widehat{\pi}_\pm)^\ast (\widehat{\pi}_\pm)_\ast (\psi) \
\nonumber \\
& = (-1)^{\vert\varphi\vert} \, \int_{B_\pm} (\widehat{\pi}_\pm)_\ast (\varphi) \wedge (\widehat{\pi}_\pm)_\ast (\psi)\quad,
\end{flalign}
for all homogeneous $\varphi,\psi\in  \FFF_{\LLL^\pm,\cc}(U)[1]$.
The last equality uses the Fubini theorem for fiber integration, see e.g.\ \cite[Proposition 3.4.48]{Nicolaescu}.
\end{proof}

From this equivalent description, it is immediate
to see that the cochain maps \eqref{eqn:unshiftedpoisson:preversion} are graded antisymmetric, 
hence they define unshifted linear Poisson structures. Using
the naturality observed in \eqref{eqn:taunaturality},
we further obtain that the assignment
$U\mapsto \big(\FFF_{\LLL^\pm,\cc}(U)[1],\tau_{(0)}^{U}\big)$
extends to a functor
\begin{flalign}\label{eqn:Poissonfunctor}
\big(\FFF_{\LLL^\pm,\cc}[1],\tau_{(0)}\big) \,:\,  \widehat{\Theta}_\pm\text{-}\mathbf{Open}(M) \,\longrightarrow\, \mathbf{PoCh}_\bbR
\end{flalign}
taking values in the category of Poisson cochain complexes.
\begin{propo}\label{prop:propertiesPoissonfunctor}
The functor \eqref{eqn:Poissonfunctor} satisfies the following classical analogues 
of the Einstein causality and time-slice axioms:
\begin{itemize}
\item[(a)] For each orthogonal pair $(U_1\subseteq U)\perp (U_2\subseteq U)$ 
in $\widehat{\Theta}_\pm\text{-}\mathbf{Open}(M)$,
\begin{flalign}
\tau^U_{(0)}\circ\big(\mathrm{ext}_{U_1}^{U}\otimes\mathrm{ext}_{U_2}^U\big)\,=\,0
\end{flalign}
vanishes.

\item[(b)] For each $\widehat{\Theta}_\pm$-Cauchy morphism $\iota_U^{U^\prime}: U\to U^\prime$ 
in $\widehat{\Theta}_\pm\text{-}\mathbf{Open}(M)$, the cochain map
\begin{flalign}
\mathrm{ext}_U^{U^\prime}\,:\, \FFF_{\LLL^\pm,\cc}(U)[1]\,\stackrel{\sim}{\longrightarrow}\,\FFF_{\LLL^\pm,\cc}(U^\prime)[1]
\end{flalign}
is a quasi-isomorphism.
\end{itemize}
\end{propo}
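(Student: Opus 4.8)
The plan is to handle the two axioms separately. Item (a) is a short support argument, whereas item (b) requires the ``chiral Cauchy surface'' technology of Subsection~\ref{subsec:Greens_homotopies_from_flow} and is where the real work lies.

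\emph{Item (a).} Unravelling \eqref{eqn:unshiftedpoisson:preversion} together with \eqref{eqn:ev}, for $\varphi\in\FFF_{\LLL^\pm,\cc}(U_1)[1]$ and $\psi\in\FFF_{\LLL^\pm,\cc}(U_2)[1]$ one gets $\tau^U_{(0)}\big(\mathrm{ext}_{U_1}^U\varphi\otimes\mathrm{ext}_{U_2}^U\psi\big)=\pm\int_M \varphi\wedge G_{\widehat{\Theta}_\pm[1]}(\psi)$, where both sections are extended by zero. By the support property of Green's homotopies (Definition~\ref{def:Greenshomotopies} and Proposition~\ref{prop:G_Theta:restricts_to_FL}) applied to $G_{\widehat{\Theta}_\pm[1]}=G^\uparrow_{\widehat{\Theta}_\pm[1]}-G^\downarrow_{\widehat{\Theta}_\pm[1]}$, the support of $G_{\widehat{\Theta}_\pm[1]}(\psi)$ is contained in $J^\uparrow_{\widehat{\Theta}_\pm}(U_2)\cup J^\downarrow_{\widehat{\Theta}_\pm}(U_2)$, i.e.\ in the $\widehat{\Theta}_\pm$-orbit of $U_2$. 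Since $(U_1\subseteq U)\perp(U_2\subseteq U)$ says exactly that this orbit is disjoint from $U_1$ (Remark~\ref{rem:alternativeconvexitydisjointness}), the integrand has support in the empty set and the integral vanishes.

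\emph{Item (b).} Fix a $\widehat{\Theta}_\pm$-Cauchy morphism $\iota_U^{U'}:U\to U'$, so $\widehat{\pi}_\pm(U)=\widehat{\pi}_\pm(U')=:V$. Following Remark~\ref{rem:Cauchy_morphisms_and_sections} I would choose a smooth section $\sigma$ of the principal $\bbR$-bundle $\widehat{\pi}_\pm|_U$ over $V$ and, in a trivialization $\widehat{\pi}_\pm^{-1}(V)\cong V\times\bbR$, a partition of unity $\{\chi_+,\chi_-=1-\chi_+\}$ adapted to it, with $\chi_+\equiv 1$ far in the $\widehat{\Theta}_\pm$-future of $\sigma(V)$, $\chi_+\equiv 0$ far in its past, and $\supp(\dd\chi_+)$ contained in a tube around $\sigma(V)$ that lies inside $U$. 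Using Proposition~\ref{prop:G_Theta:difference}, I would then introduce the operator $P:=\pm\,\dd\chi_+\wedge\widehat{\pi}_\pm^{\,\ast}(\widehat{\pi}_\pm)_\ast(\,\cdot\,)$, which by a direct Leibniz computation (equivalently: because by Proposition~\ref{prop:G_Theta:difference} it differs from the inclusion by the coboundary $\partial(\chi_+G^\uparrow_{\widehat{\Theta}_\pm[1]}+\chi_-G^\downarrow_{\widehat{\Theta}_\pm[1]})$) is a cochain map, and whose support lies in $\supp(\dd\chi_+)\subseteq U$, so it defines cochain maps $P:\FFF_{\LLL^\pm,\cc}(U')[1]\to\FFF_{\LLL^\pm,\cc}(U)[1]$ and $P:\FFF_{\LLL^\pm,\cc}(U)[1]\to\FFF_{\LLL^\pm,\cc}(U)[1]$, provided $P$ preserves the boundary condition. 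To compare $P$ with $\mathrm{ext}_U^{U'}$ I would use the \emph{tail-corrected} operator $\widehat G:=G^\uparrow_{\widehat{\Theta}_\pm[1]}-\chi_+\,\widehat{\pi}_\pm^{\,\ast}(\widehat{\pi}_\pm)_\ast(\,\cdot\,)$: the subtracted term cancels the non-compact ``future tail'' of $G^\uparrow_{\widehat{\Theta}_\pm[1]}$ visible in the coordinate formula \eqref{eqn:G^up_Theta_coordinates}, so that by $\widehat{\Theta}_\pm$-convexity of $U$ and $U'$ the operator $\widehat G$ restricts to a degree $-1$ operator on both $\FFF_{\LLL^\pm,\cc}(U)[1]$ and $\FFF_{\LLL^\pm,\cc}(U')[1]$; a Leibniz computation using $\partial G^\uparrow_{\widehat{\Theta}_\pm[1]}=j$ and $(\widehat{\pi}_\pm)_\ast\,\dd=-\dd\,(\widehat{\pi}_\pm)_\ast$ gives $\partial\widehat G=\mathrm{id}-P$ on each complex. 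Hence $\mathrm{ext}_U^{U'}\circ P\simeq\mathrm{id}$ and $P\circ\mathrm{ext}_U^{U'}\simeq\mathrm{id}$, so $\mathrm{ext}_U^{U'}$ is a cochain homotopy equivalence, in particular a quasi-isomorphism.

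\emph{The main obstacle.} Everything above is essentially bookkeeping except the claim that $P$, equivalently the subtracted term $\chi_+\,\widehat{\pi}_\pm^{\,\ast}(\widehat{\pi}_\pm)_\ast$ (and, via Proposition~\ref{prop:G_Theta:restricts_to_FL}, then also $\widehat G$), preserves the boundary condition $\LLL^\pm$. The only component of $\iota^\ast\big(\chi_+\,\widehat{\pi}_\pm^{\,\ast}(\widehat{\pi}_\pm)_\ast\alpha\big)$ that could violate $\LLL^\pm$ is its de Rham degree-$1$ part, which is $\iota^\ast\chi_+$ times the pullback along the quotient $\partial M\to\partial M/\!\sim_\pm$ of the fiber integral, over the $\Theta_\pm$-orbits, of $\iota^\ast$ of the gauge-field component of $\alpha$; but that component is (anti-)self-dual on $\partial M$, so by Lemma~\ref{lem:sd/asdannihilation} it annihilates the chiral vector field $n_\pm$ tangent to these orbits, whence its orbitwise integral vanishes. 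Thus the degree-$1$ part is zero (in particular (anti-)self-dual), the degree-$0$ part vanishes for degree reasons, and the degree-$2$ part is unconstrained on the $2$-manifold $\partial M$. This is precisely the step where the chiral geometry of $\partial M$, and the matching between the chosen (anti-)self-dual boundary condition and the chosen $\pm$-chiral flow, is indispensable.
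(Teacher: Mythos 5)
Your overall approach matches what the paper does: part (a) is the same short support argument, and part (b) fills in the details of the paper's one‑paragraph sketch, which refers to \cite[Theorem 3.10]{BMSmodels}, to sections of $\widehat{\pi}_\pm$ as Cauchy‑slice substitutes (Remarks~\ref{rem:proper-flow-Cauchy-slice} and~\ref{rem:Cauchy_morphisms_and_sections}), and to Lemma~\ref{lem:intersect-future-compact-past-Cauchy} for the compactness input. Your explicit construction of $P$ and the tail‑corrected homotopy $\widehat G$ via a cutoff $\chi_+$ adapted to a section of $\widehat{\pi}_\pm$ is exactly the mechanism being invoked; modulo the sign ambiguities you acknowledge, the identities $\partial\widehat G=\mathrm{id}-P$ together with the two homotopies $\mathrm{ext}_U^{U^\prime}\circ P\simeq\mathrm{id}$ and $P\circ\mathrm{ext}_U^{U^\prime}\simeq\mathrm{id}$ give the quasi‑isomorphism as you say.

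The boundary‑condition check — which you correctly identify as the crux — has a de~Rham degree mix‑up, and consequently one of the two nontrivial cases is not actually argued. Since $(\widehat{\pi}_\pm)_\ast$ lowers de~Rham degree by one and $\widehat{\pi}_\pm^\ast$ preserves it, $\widehat{\pi}_\pm^\ast(\widehat{\pi}_\pm)_\ast$ sends a $k$‑form to a $(k-1)$‑form. Hence the \emph{degree‑$0$} part of $\iota^\ast\bigl(\chi_+\,\widehat{\pi}_\pm^\ast(\widehat{\pi}_\pm)_\ast\alpha\bigr)$, which must vanish, is produced by the gauge field $A\in\Omega^1$: this is the quantity you actually compute, and it vanishes because $\iota^\ast A\in\Omega^\pm(\partial M)$ annihilates $n_\pm$ by Lemma~\ref{lem:sd/asdannihilation}. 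You label this as the degree‑$1$ part. The genuine degree‑$1$ part, which must be (anti‑)self‑dual rather than zero, is produced by the antifield $A^\ddagger\in\Omega^2$, and your argument does not touch it. What one needs there is the same reasoning as in the proof of Lemma~\ref{lem:invariant_self-dual_form}: $\iota^\ast\chi_+\cdot\pi_\pm^\ast(\pi_\pm)_\ast(\iota^\ast A^\ddagger)$ is a $C^\infty(\partial M)$‑multiple of a pullback along $\pi_\pm:\partial M\to\partial M/\!\sim_\pm$, so it annihilates the fiber direction $n_\pm$ and is therefore (anti‑)self‑dual by Lemma~\ref{lem:sd/asdannihilation}. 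With that correction (and keeping the statement ``degree‑$0$ vanishes for degree reasons'' for the output of the ghost $c\in\Omega^0$, which lands in de~Rham degree $-1$), your proof coincides in substance with the one the paper intends.
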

\begin{proof}
Item (a) follows immediately from the support properties of Green's homotopies
in Definition \ref{def:Greenshomotopies}. (Recall that the map $\ev$ in \eqref{eqn:unshiftedpoisson:preversion} 
is given by integration over $M$, see \eqref{eqn:ev}.)
The proof of item (b) is analogous to the proof of the corresponding
statement in the relativistic case, see e.g.\ \cite[Theorem 3.10]{BMSmodels}.
Where in the relativistic case one uses Cauchy surfaces to construct a quasi-inverse,
we use sections of the quotient map $\widehat{\pi}_\pm : M \to B_\pm$, 
recalling Remarks \ref{rem:proper-flow-Cauchy-slice} and \ref{rem:Cauchy_morphisms_and_sections}.
The compactness property used in the relativistic case may be substituted by 
Lemma \ref{lem:intersect-future-compact-past-Cauchy}.
\end{proof}

\paragraph{The AQFT:} The quantization of our bulk/boundary system
is easy to carry out by using the cochain complex-valued canonical commutation relations functor
$\CCR : \mathbf{PoCh}_\bbR\to \dgastAlg$ from \cite[Section 5]{LinearYM}.
Post-composing the functor \eqref{eqn:Poissonfunctor} with $\CCR$ 
defines the $\dgastAlg$-valued functor
\begin{flalign}\label{eqn:AAApm}
\AAA_\pm \,:=\,\CCR\big(\FFF_{\LLL^\pm,\cc}[1],\tau_{(0)}\big)\,:\, 
\widehat{\Theta}_\pm\text{-}\mathbf{Open}(M) \,\longrightarrow\, \dgastAlg
\end{flalign}
that assigns to $U\in \widehat{\Theta}_\pm\text{-}\mathbf{Open}(M)$ the dg-algebra
\begin{subequations}
\begin{flalign}
\AAA_\pm(U) \,=\, \CCR\big(\FFF_{\LLL^\pm,\cc}(U)[1],\tau_{(0)}^U\big) \,=\,T^\otimes_\bbC \left(\FFF_{\LLL^\pm,\cc}(U)[1]\right)\big/\,\mathcal{I}_{\tau^{U}_{(0)}}\,\in\,\dgastAlg
\end{flalign}
constructed as the complexified tensor algebra over the cochain
complex $\FFF_{\LLL^\pm,\cc}(U)[1]$ of linear observables,
modulo the two-sided differential graded $\ast$-ideal $\mathcal{I}_{\tau^U_{(0)}}$ that is
generated by the canonical commutation relations
\begin{flalign}
\varphi\otimes \psi - (-1)^{\vert \varphi\vert\,\vert\psi\vert}\,\psi\otimes \varphi = \ii\,\tau_{(0)}^U\big(\varphi\otimes \psi\big)\,\oone\quad,
\end{flalign}
\end{subequations}
for all homogeneous generators $\varphi,\psi\in \FFF_{\LLL^\pm,\cc}(U)[1]$. The functorial
structure is given by extending to dg-algebra morphisms 
the extension-by-zero maps $\mathrm{ext}$. As a consequence of Proposition
\ref{prop:propertiesPoissonfunctor} and standard arguments, see e.g.\ \cite[Lemma 6.18]{LinearYM},
it follows that \eqref{eqn:AAApm} satisfies the AQFT axioms. Let us record this result.
\begin{cor}\label{cor:AQFTbulk}
The functor $\AAA_\pm : \widehat{\Theta}_\pm\text{-}\mathbf{Open}(M) \to \dgastAlg$ 
given in \eqref{eqn:AAApm} satisfies the Einstein causality axiom \eqref{eqn:Einsteincausality}
and the time-slice axiom \eqref{eqn:timeslice}.
\end{cor}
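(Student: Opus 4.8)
The plan is to deduce the two AQFT axioms for $\AAA_\pm = \CCR\big(\FFF_{\LLL^\pm,\cc}[1],\tau_{(0)}\big)$ from the corresponding classical statements for the Poisson-complex-valued functor \eqref{eqn:Poissonfunctor}, which were already established in \cref{prop:propertiesPoissonfunctor}. The key point is that the functor $\CCR : \mathbf{PoCh}_\bbR \to \dgastAlg$ of \cite[Section 5]{LinearYM} is a symmetric monoidal functor (it sends the symmetric monoidal structure on Poisson cochain complexes to the one on dg-algebras), and it preserves quasi-isomorphisms. Both properties are standard and are proven in \cite{LinearYM}; see in particular \cite[Lemma 6.18]{LinearYM}, where this same deduction is carried out in the relativistic setting.

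First I would treat Einstein causality. Let $(U_1\subseteq U)\perp(U_2\subseteq U)$ be an orthogonal pair in $\widehat{\Theta}_\pm\text{-}\mathbf{Open}(M)$. By \cref{prop:propertiesPoissonfunctor}~(a), the Poisson structure $\tau^U_{(0)}$ vanishes on the image of $\mathrm{ext}_{U_1}^U\otimes\mathrm{ext}_{U_2}^U$. Since $\CCR$ is built as a quotient of the tensor algebra by the two-sided ideal generated by the canonical commutation relations, vanishing of the Poisson bracket between the two sets of generators $\FFF_{\LLL^\pm,\cc}(U_1)[1]$ and $\FFF_{\LLL^\pm,\cc}(U_2)[1]$ inside $\AAA_\pm(U)$ means precisely that these generators graded-commute in $\AAA_\pm(U)$. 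Unwinding the definitions of the tensor product of dg-algebras and of the opposite multiplication, this is exactly the statement that the diagram \eqref{eqn:Einsteincausality} commutes. I would phrase this as: $\CCR$ being symmetric monoidal turns the classical independence statement of \cref{prop:propertiesPoissonfunctor}~(a) into the Einstein causality diagram, exactly as in \cite[Lemma 6.18]{LinearYM}.

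Next I would treat the time-slice axiom. Let $\iota_U^{U^\prime}: U\to U^\prime$ be a $\widehat{\Theta}_\pm$-Cauchy morphism. By \cref{prop:propertiesPoissonfunctor}~(b), the extension-by-zero cochain map $\mathrm{ext}_U^{U^\prime} : \FFF_{\LLL^\pm,\cc}(U)[1]\to\FFF_{\LLL^\pm,\cc}(U^\prime)[1]$ is a quasi-isomorphism (and, as a morphism in $\mathbf{PoCh}_\bbR$, it is automatically compatible with the Poisson structures by the naturality \eqref{eqn:taunaturality}). Since $\CCR$ sends quasi-isomorphisms of Poisson cochain complexes to quasi-isomorphisms of dg-algebras — this is the homological input of \cite{LinearYM}, which follows because the underlying cochain complex of $\CCR(V,\tau)$ is the symmetric (tensor) algebra on $V$ with the induced differential, and the symmetric algebra functor preserves quasi-isomorphisms over a field of characteristic zero — the induced dg-algebra morphism $\AAA_\pm(\iota_U^{U^\prime}) : \AAA_\pm(U)\to\AAA_\pm(U^\prime)$ is a quasi-isomorphism, which is precisely \eqref{eqn:timeslice}.

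The only genuine subtlety — and the step I expect to require the most care — is the claim that $\CCR$ preserves quasi-isomorphisms, since the differential on $\CCR(V,\tau)$ is not simply the tensor-algebra differential but the one descending to the quotient by the CCR ideal. One resolves this by using that, as a cochain complex, $\CCR(V,\tau)$ is filtered (by symmetric-tensor degree) with associated graded the symmetric algebra $\Sym(V)$ with its canonical differential, so that a comparison of the associated spectral sequences reduces the statement to the elementary fact that $\Sym$ preserves quasi-isomorphisms in characteristic zero. All of this is packaged in \cite[Lemma 6.18]{LinearYM} and its proof, so in the write-up I would simply invoke that reference together with \cref{prop:propertiesPoissonfunctor}, keeping the argument short.
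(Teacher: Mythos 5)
Your proof is correct and takes essentially the same route as the paper, which likewise deduces both axioms by combining Proposition \ref{prop:propertiesPoissonfunctor} with the general $\CCR$ machinery of \cite[Lemma~6.18 and Proposition~5.3]{LinearYM}. The paper states this without further elaboration; your unwinding of the argument (vanishing Poisson pairing implies graded-commutation of generators, hence of the generated subalgebras, for Einstein causality; preservation of quasi-isomorphisms under $\CCR$ via the polynomial-degree filtration with associated graded $\Sym(V)$, for time-slice) is accurate and is exactly what those cited results encode.
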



\section{\label{sec:dimred}Dimensional reduction to the base manifold}
As a consequence of the weak variant of the time-slice axiom \eqref{eqn:timeslice}
in terms of \textit{quasi-isomorphisms}, which is sometimes called the homotopy time-slice axiom,
it is difficult to analyze the AQFT $\AAA_\pm$ constructed in Corollary \ref{cor:AQFTbulk} directly.
The goal of this section is to replace $\mathfrak{A}_\pm$ by a weakly 
equivalent model $\mathfrak{B}_\pm$ living on the base space of the quotient
map $\widehat{\pi}_\pm : M\to M/\!\sim_\pm =:B_\pm$ associated with the proper $\bbR$-action
$\widehat{\Theta}_\pm : \bbR\times M\to M$.
This base space $B_\pm$ is an oriented, smooth $2$-manifold, see 
Corollaries \ref{cor:proper_flow_gives_principal_bundle} and \ref{cor:principal_R-bundles:total_base_orientations},
so one achieves a dimensional reduction from
the $3$-dimensional manifold $M$ to the $2$-dimensional manifold $B_\pm$.
Note that the base manifold $B_\pm:=M/\!\sim_\pm$ has a non-empty boundary $\partial B_\pm = \partial M/\!\sim_\pm$
since $\partial M$ is a $\widehat{\Theta}_\pm$-invariant subspace of $M$ as per \eqref{eqn:G-bulk-boundary-compatibility}.
Hence, our dimensional reduction allows us to analyze boundary features
of our bulk/boundary system by working in a simpler but equivalent $2$-dimensional setup.
\sk

It is important to stress that this dimensional reduction is \textit{not} to the boundary $\partial M$ of $M$,
and is therefore not itself exhibiting a CS/WZW correspondence. (The latter will be studied
in Section \ref{sec:analysis} below.)
The present dimensional reduction should be compared to the description of chiral CFTs by conformal nets.
In that case, a chiral conformal theory on a $2$-dimensional spacetime
(say Minkowski $\mathbb{R}^{1+1}$ or an appropriate compactification thereof)
is equivalently described by data on a $1$-dimensional light ray in that spacetime.
This is possible because the fields in a chiral theory are independent of the 
position along the transverse light ray.
The analogous independence exploited by our dimensional reduction
is exactly that expressed by the time-slice axiom \eqref{eqn:timeslice}.
\sk

To construct this dimensionally reduced model, we begin by replacing $\mathfrak{A}_\pm$ by 
a weakly equivalent model that satisfies
the time-slice axiom strictly in terms of \textit{isomorphisms}. The relevant mathematical
framework for such strictification constructions has been developed in \cite{BCStimeslice},
in the context of model category theory.

\paragraph{Orthogonal localization:}
We describe an explicit model for the localization of the orthogonal category 
$\widehat{\Theta}_\pm\text{-}\mathbf{Open}(M)$ from Section \ref{sec:AQFTonM} at all 
$\widehat{\Theta}_\pm$-Cauchy morphisms. We also will show that this localization is reflective,
hence \cite[Theorem 3.6]{BCStimeslice} provides the desired strictification construction.
Let us start by defining the orthogonal categories that we will identify later as the desired localizations.
\begin{defi}
We denote by $\mathbf{Open}(B_\pm)$ the category whose objects
are all non-empty open subsets $V\subseteq B_\pm$ of the base manifold of the quotient map 
$\widehat{\pi}_\pm : M\to M/\!\sim_\pm =:B_\pm$ and whose morphisms $\iota_V^{V^\prime} : V\to V^\prime$
are subset inclusions $V\subseteq V^\prime\subseteq B_\pm$. We endow this category with
the disjointness orthogonality relation, i.e.\ $(V_1\subseteq V)\perp(V_2\subseteq V)$ if and only
if $V_1\cap V_2 =\emptyset$.
\end{defi}

Taking preimages under $\widehat{\pi}_\pm : M\to B_\pm$ defines a functor
\begin{flalign}\label{eqn:localizationreflector}
\widehat{\pi}_\pm^{-1}\,:\, \mathbf{Open}(B_\pm)\,\longrightarrow \, \widehat{\Theta}_\pm\text{-}\mathbf{Open}(M)
\end{flalign}
to the category introduced in Definition \ref{def:orthogonalcategory} since
the preimage $\widehat{\pi}_\pm^{-1}(V)\subseteq M$ of any non-empty open subset $V\subseteq B_\pm$
is non-empty, open and $\widehat{\Theta}_\pm$-convex. This functor preserves
the orthogonality relations on the respective categories, i.e.\ it defines an orthogonal functor
in the sense of \cite{BSWoperad,BCStimeslice}.
Since the fiber bundle projection $\widehat{\pi}_\pm : M\to B_\pm$ is an open map,
we can also take images and define another functor
\begin{flalign}\label{eqn:localizationfunctor}
\widehat{\pi}_\pm\,:\,\widehat{\Theta}_\pm\text{-}\mathbf{Open}(M)\,\longrightarrow\, \mathbf{Open}(B_\pm)
\end{flalign}
that goes in the opposite direction. Also this functor is easily seen to preserve
the orthogonality relations. We now observe that these two (orthogonal) functors 
define an adjunction $\widehat{\pi}_\pm \dashv \widehat{\pi}_\pm^{-1}$ with counit 
$\epsilon :\widehat{\pi}_\pm \widehat{\pi}_\pm^{-1} \Rightarrow \id $ defined
component-wise by the identity morphisms $\widehat{\pi}_\pm \widehat{\pi}_\pm^{-1}(V) = V$
and unit $\eta: \id \Rightarrow \widehat{\pi}_\pm^{-1}\,\widehat{\pi}_\pm$ defined 
component-wise by the inclusions $U\subseteq \widehat{\pi}_\pm^{-1}(\widehat{\pi}_\pm(U))$.
\begin{propo}
The functor \eqref{eqn:localizationfunctor} defines an orthogonal
localization of $\widehat{\Theta}_\pm\text{-}\mathbf{Open}(M)$ at the set of
all $\widehat{\Theta}_\pm$-Cauchy morphisms. This orthogonal localization
is reflective with reflector \eqref{eqn:localizationreflector}
\end{propo}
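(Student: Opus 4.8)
The plan is to verify the three defining properties of a reflective orthogonal localization in the sense of \cite{BCStimeslice}: first, that the adjunction $\widehat{\pi}_\pm \dashv \widehat{\pi}_\pm^{-1}$ has invertible counit; second, that the unit components are precisely the $\widehat{\Theta}_\pm$-Cauchy morphisms (up to the universal property), so that $\widehat{\pi}_\pm$ inverts all $\widehat{\Theta}_\pm$-Cauchy morphisms and is universal among such functors; and third, that all of this is compatible with the orthogonality relations, i.e.\ we are working with orthogonal functors and an orthogonal adjunction throughout. The counit being an isomorphism is already essentially recorded in the excerpt, since $\widehat{\pi}_\pm \widehat{\pi}_\pm^{-1}(V) = V$ componentwise with the identity as counit; this exhibits $\widehat{\pi}_\pm^{-1}$ as a fully faithful right adjoint, hence the adjunction is a reflection and $\mathbf{Open}(B_\pm)$ is (equivalent to) a reflective subcategory of $\widehat{\Theta}_\pm\text{-}\mathbf{Open}(M)$.

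First I would identify, for a morphism $\iota_U^{U'}$ in $\widehat{\Theta}_\pm\text{-}\mathbf{Open}(M)$, exactly when it is sent to an isomorphism by $\widehat{\pi}_\pm$: this happens if and only if $\widehat{\pi}_\pm(U) = \widehat{\pi}_\pm(U')$, which by Definition \ref{def:Cauchy_morphism} is precisely the condition that $\iota_U^{U'}$ be $\widehat{\Theta}_\pm$-Cauchy. In particular each unit component $\eta_U : U \to \widehat{\pi}_\pm^{-1}(\widehat{\pi}_\pm(U))$ is a $\widehat{\Theta}_\pm$-Cauchy morphism, since applying $\widehat{\pi}_\pm$ to it gives the identity on $\widehat{\pi}_\pm(U)$ (using that $\widehat{\pi}_\pm$ is surjective, so $\widehat{\pi}_\pm(\widehat{\pi}_\pm^{-1}(V)) = V$). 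Next I would invoke the standard categorical fact (as used in \cite[Theorem 3.6]{BCStimeslice}) that a reflective subcategory inclusion $R \dashv L$ exhibits $L$ as the localization at the class of morphisms inverted by $L$, which here coincides with the class $W$ of $\widehat{\Theta}_\pm$-Cauchy morphisms by the identification just made; the key point is that every object $U$ admits a morphism in $W$ to a "local" object, namely $\eta_U$, and morphisms between local objects $\widehat{\pi}_\pm^{-1}(V)$ are in bijection with morphisms $V \to V'$ in $\mathbf{Open}(B_\pm)$ by full faithfulness of $\widehat{\pi}_\pm^{-1}$.

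For the orthogonal enhancement I would check that $\widehat{\pi}_\pm^{-1}$ reflects orthogonality and $\widehat{\pi}_\pm$ preserves it, which is exactly the content of the two observations already made in the excerpt: $\widehat{\pi}_\pm^{-1}(V_1)$ and $\widehat{\pi}_\pm^{-1}(V_2)$ are $\widehat{\Theta}_\pm$-disjoint iff the $\widehat{\Theta}_\pm$-orbit of $\widehat{\pi}_\pm^{-1}(V_1)$ misses $\widehat{\pi}_\pm^{-1}(V_2)$, which (since preimages are already $\widehat{\Theta}_\pm$-saturated) is equivalent to $V_1 \cap V_2 = \emptyset$; and disjointness of images follows from $\widehat{\Theta}_\pm$-disjointness since the orbit of $U_1$ contains $U_1$. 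Together with the fact that $\eta$ and $\epsilon$ are built from subset inclusions, this upgrades the reflection to an orthogonal reflection, and I would then simply cite the definition of orthogonal localization and \cite[Theorem 3.6]{BCStimeslice} to conclude. The main obstacle is not any single hard estimate but rather the bookkeeping: verifying carefully that the class of morphisms inverted by $\widehat{\pi}_\pm$ is \emph{exactly} the $\widehat{\Theta}_\pm$-Cauchy class (the inclusion "inverted $\Rightarrow$ Cauchy" needs surjectivity of $\widehat{\pi}_\pm$ and the characterization of $\widehat{\Theta}_\pm$-convex sets via orbits from Remark \ref{rem:alternativeconvexitydisjointness}), and that the reflective-subcategory-implies-localization argument respects the orthogonal structure as formalized in \cite{BCStimeslice}.
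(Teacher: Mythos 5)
Your argument follows essentially the same route as the paper's: show that the morphisms sent to isomorphisms by $\widehat{\pi}_\pm$ are precisely the $\widehat{\Theta}_\pm$-Cauchy ones (using that the only isomorphisms in $\mathbf{Open}(B_\pm)$ are identities), note that $\widehat{\pi}_\pm^{-1}$ is fully faithful, and conclude by the reflective-localization criterion from \cite{BCStimeslice}. One small correction: the relevant result in \cite{BCStimeslice} is Proposition~3.3 (the criterion for a reflective orthogonal localization), not Theorem~3.6, which is the strictification theorem for the homotopy time-slice axiom and is invoked only afterwards in the paper.
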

\begin{proof}
A morphism $\iota_U^{U^\prime} : U\to U^\prime$ in $\widehat{\Theta}_\pm\text{-}\mathbf{Open}(M)$
is $\widehat{\Theta}_\pm$-Cauchy according to Definition \ref{def:Cauchy_morphism} exactly 
when the functor $\widehat{\pi}_\pm$ sends it to an identity morphism.
Since the right adjoint functor \eqref{eqn:localizationreflector} is fully faithful,
the statement then follows from \cite[Proposition 3.3]{BCStimeslice}
and the fact that the only isomorphisms in $\mathbf{Open}(B_\pm)$ are identities.
\end{proof}

From \cite[Theorem 3.6]{BCStimeslice}, we obtain an
AQFT $\AAA_\pm^\mathrm{st} : \widehat{\Theta}_\pm\text{-}\mathbf{Open}(M)\to\dgastAlg$
that is weakly equivalent to the AQFT $\AAA_\pm$ from Corollary \ref{cor:AQFTbulk} but has the technical advantage
that it satisfies the strict time-slice axiom, i.e.\ to any $\widehat{\Theta}_\pm$-Cauchy morphism
$\iota_U^{U^\prime} : U\to U^\prime$ it assigns an \textit{isomorphism} $\AAA_\pm^{\mathrm{st}}(\iota_U^{U^\prime}):
\AAA_\pm^{\mathrm{st}}(U)\stackrel{\cong}{\longrightarrow} \AAA_\pm^{\mathrm{st}}(U^\prime)$ in contrast to a
quasi-isomorphism. Using also \cite[Corollary 3.7]{BCStimeslice},
one can construct this AQFT very explicitly as the pullback 
$\AAA_\pm^{\mathrm{st}} = (\widehat{\pi}_\pm)^\ast (\widehat{\pi}_\pm^{-1})^\ast(\AAA_\pm)$ 
of the original $\AAA_\pm$ along the two orthogonal functors \eqref{eqn:localizationreflector} 
and \eqref{eqn:localizationfunctor}. 
Observe that the intermediate AQFT $(\widehat{\pi}_\pm^{-1})^\ast(\AAA_\pm) :  \mathbf{Open}(B_\pm) \to \dgastAlg$
in this construction is defined on the base manifold $B_\pm$. By \cite[Theorem 3.6]{BCStimeslice}, this
AQFT carries equivalent information to $\AAA_\pm$, and hence also to its strictification $\AAA^{\mathrm{st}}_\pm$, 
which means that we have achieved a dimensional reduction from the $3$-manifold $M$ to the 
$2$-dimensional base manifold $B_\pm$.

\paragraph{Simplification:} Even though the AQFT
$(\widehat{\pi}_\pm^{-1})^\ast(\AAA_\pm) :  \mathbf{Open}(B_\pm) \to \dgastAlg$
is defined on the category of non-empty opens of the $2$-dimensional manifold $B_\pm$, 
the dg-algebras
\begin{flalign}
\AAA_\pm\big(\widehat{\pi}_\pm^{-1}(V)\big) \,=\,  
\CCR\Big(\FFF_{\LLL^\pm,\cc}\big(\widehat{\pi}_\pm^{-1}(V)\big)[1],\tau_{(0)}\Big)
\end{flalign}
it assigns to objects $(V\subseteq B_\pm)\in  \mathbf{Open}(B_\pm)$ are determined from
$3$-dimensional data on $(\widehat{\pi}_\pm^{-1}(V) \subseteq M)\in \widehat{\Theta}_\pm\text{-}\mathbf{Open}(M)$.
The aim of this paragraph is to construct a weakly equivalent model for the AQFT $(\widehat{\pi}_\pm^{-1})^\ast(\AAA_\pm)$
that is determined only from $2$-dimensional data on $B_\pm$. Inspired by Lemma \ref{lem:unshifted_Poisson_on_base}, 
we consider the fiber integration
\begin{flalign}\label{eqn:fiberintegrationsimplification}
(\widehat{\pi}_\pm)_\ast \,:\,\FFF_{\LLL^\pm,\cc}\big(\widehat{\pi}_\pm^{-1}(V)\big)[1]\subseteq
 \Omega^\bullet_\cc\big(\widehat{\pi}_\pm^{-1}(V)\big)[2]\,\longrightarrow\,
\Omega^\bullet_\cc(V)[1] 
\end{flalign}
along the quotient map $\widehat{\pi}_\pm :\widehat{\pi}_\pm^{-1}(V)\subseteq M \to V\subseteq B_\pm$.
\begin{lem}\label{lem:dimregqiso}
The cochain map \eqref{eqn:fiberintegrationsimplification} factors through the subcomplex
\begin{flalign}\label{eqn:Blinearobservables}
\Omega^\bullet_{\partial,\cc} (V)[1]\,:=\, \Big(
\xymatrix@C=2em{
\stackrel{(-1)}{\Omega_{\partial,\cc}^0(V)} \ar[r]^-{-\dd}
& \stackrel{(0)}{\Omega^1_{\cc}(V)}\ar[r]^-{-\dd} 
& \stackrel{(1)}{\Omega^2_\cc(V)}
}
\Big) \,\subseteq\, \Omega^\bullet_\cc(V)[1] \quad,
\end{flalign}
where $\Omega_{\partial,\cc}^0(V)\subseteq \Omega_\cc^0(V)$ denotes the
subspace of all compactly supported $0$-forms on $V$ whose pullback to the boundary $\partial V$ vanishes. (In the case
of $\partial V=\emptyset$, this boundary condition becomes void.)
Moreover, the corestriction
\begin{flalign}\label{eqn:fiberintegrationcorestriction}
(\widehat{\pi}_\pm)_\ast \,:\,\FFF_{\LLL^\pm,\cc}\big(\widehat{\pi}_\pm^{-1}(V)\big)[1]\,\stackrel{\sim}{\longrightarrow}\,
\Omega^\bullet_{\partial,\cc}(V)[1] 
\end{flalign}
is a quasi-isomorphism
for each $(V\subseteq B_\pm)\in  \mathbf{Open}(B_\pm)$.
\end{lem}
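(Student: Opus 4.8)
The plan is to analyze the fiber integration $(\widehat{\pi}_\pm)_\ast$ in a local trivialization $\widehat{\pi}_\pm^{-1}(V) \cong \mathbb{R} \times V$ of the principal $\mathbb{R}$-bundle, using the coordinate decomposition of forms developed in the paragraph preceding Proposition \ref{prop:G_Theta:difference}. First I would establish the factorization claim \eqref{eqn:Blinearobservables}. The nontrivial part is the boundary condition in degree $-1$: given $\varphi \in \FFF_{\LLL^\pm,\cc}(\widehat{\pi}_\pm^{-1}(V))[1]$ of cohomological degree $-1$, i.e.\ a compactly supported $1$-form $\varphi \in \Omega^1_\cc(\widehat{\pi}_\pm^{-1}(V))$ whose bulk differential $\dd\varphi$ (a $2$-form, degree $0$ in the shifted complex) satisfies the self-dual/anti-self-dual boundary condition, I must show that $(\widehat{\pi}_\pm)_\ast \varphi \in \Omega^0_\cc(V)$ pulls back to zero on $\partial V$. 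Here I would use the compatibility \eqref{eqn:G-bulk-boundary-compatibility}, or more directly the commuting diagram \eqref{eqn:bulkflowcompatibility} which forces $\partial M$ to be $\widehat{\Theta}_\pm$-invariant so that fiber integration along $\widehat{\pi}_\pm$ restricts over $\partial V$ to fiber integration along $\pi_\pm : \partial M \to \partial B_\pm$. The argument of Proposition \ref{prop:G_Theta:restricts_to_L}, expressing $\iota^\ast\varphi$ in the $\Theta_\pm$-invariant module basis $\beta^\pm$ from Lemma \ref{lem:invariant_self-dual_form} and noting the self-dual $0$-form fiber-integrates to zero over the $1$-dimensional fibers, then shows $\iota_{\partial V}^\ast (\widehat{\pi}_\pm)_\ast\varphi = (\pi_\pm)_\ast(\iota^\ast\varphi) = 0$. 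In positive/zero degrees there is no constraint, so the factorization through $\Omega^\bullet_{\partial,\cc}(V)[1]$ holds; that this is a cochain map is immediate from $(\widehat{\pi}_\pm)_\ast \circ \dd = \dd \circ (\widehat{\pi}_\pm)_\ast$ (the fibers have empty boundary, and any sign is absorbed into the $[1]$-shift).

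Next I would prove the quasi-isomorphism claim \eqref{eqn:fiberintegrationcorestriction}. The strategy is to construct an explicit cochain-level section of $(\widehat{\pi}_\pm)_\ast$ using a global section of the principal $\mathbb{R}$-bundle — equivalently, working in the trivialization $\widehat{\pi}_\pm^{-1}(V) \cong \mathbb{R} \times V$ with coordinate $\tau$ on the $\mathbb{R}$-factor — and then show the other composite is cochain-homotopic to the identity. Concretely, pick $\chi \in C^\infty_\cc(\mathbb{R})$ with $\int_{\mathbb{R}} \chi \, \dd\tau = 1$ and define $s : \Omega^\bullet_{\partial,\cc}(V)[1] \to \FFF_{\LLL^\pm,\cc}(\widehat{\pi}_\pm^{-1}(V))[1]$ by $s(\omega) := \chi(\tau)\,\dd\tau \wedge \omega$ (pullbacks along projections left implicit). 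One checks directly that $s$ lands in the boundary-conditioned bulk complex: on $\partial M$ this produces a form whose bulk differential has vanishing pullback to $\partial M$ in the relevant degree — for a $0$-form $\omega$ on $V$ one needs $\iota^\ast \omega = 0$, which is exactly the defining condition of $\Omega^0_{\partial,\cc}(V)$ — so the self-dual/anti-self-dual condition is met (indeed the pullback of the gauge-field component vanishes, which is stronger). Then $(\widehat{\pi}_\pm)_\ast \circ s = \id$ by the normalization of $\chi$, using the coordinate formulas \eqref{eqn:forms_split_coordinates}. For the reverse composite, the standard fiber-integration homotopy formula (à la \cite{BottTu}, Poincaré lemma for $\mathbb{R} \times V$ with compact $\mathbb{R}$-supports) gives a degree $-1$ operator $h$ on $\Omega^\bullet_\cc(\mathbb{R} \times V)$ with $\partial h = \id - s \circ (\widehat{\pi}_\pm)_\ast$; the remaining point is that $h$ preserves the bulk boundary condition, which again follows by inspecting its action in the split coordinates — $h$ integrates the $\dd\tau$-leg of a form and subtracts the $s$-normalized piece, operations that do not create pullback-to-$\partial M$ obstructions.

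The main obstacle I anticipate is the bookkeeping at the boundary: verifying carefully that both the candidate section $s$ and the homotopy $h$ respect the boundary condition defining $\FFF_{\LLL^\pm}(M) \subseteq \Omega^\bullet(M)[1]$, i.e.\ that $\iota^\ast(c) = 0$ and $\iota^\ast(A) \in \Omega^\pm(\partial M)$ for the images. This requires knowing precisely how the trivialization interacts with $\partial M$; the clean way is to use that $\widehat{\Theta}_\pm$-invariance of $\partial M$ lets one choose the trivialization so that $\widehat{\pi}_\pm^{-1}(\partial V) \cong \mathbb{R} \times \partial V$ compatibly, reducing every boundary check to the already-understood $2$-dimensional situation of Proposition \ref{prop:G_Theta:restricts_to_L} and Lemma \ref{lem:invariant_self-dual_form}. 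A secondary subtlety is matching Koszul/shift signs between the de Rham differential on $M$, the shift $[1]$, and the fiber-first orientation convention for $(\widehat{\pi}_\pm)_\ast$; these are routine given the conventions already fixed in Remark \ref{rem:Greenshomotopies:index_shifting} and the footnote on fiber integration.
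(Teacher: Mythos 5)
Your proposal is correct and essentially reproduces the paper's proof: trivialize $\widehat{\pi}_\pm^{-1}(V)\cong\bbR\times V$, split forms by their $\dd\tau$-leg, take the section $\alpha\mapsto\omega\wedge\alpha$ for $\omega\in\Omega^1_\cc(\bbR)$ with unit integral, and run the compactly supported Poincar\'e-lemma homotopy while checking boundary conditions in split coordinates. The one place you deviate is the factorization step, where you route through the bulk/boundary compatibility $\iota^\ast_{\partial V}\circ(\widehat{\pi}_\pm)_\ast=(\pi_\pm)_\ast\circ\iota^\ast$ and the $\widehat{\Theta}_\pm$-invariant basis $\beta^\pm$ from Lemma~\ref{lem:invariant_self-dual_form}, whereas the paper just reads the boundary condition off directly in the trivialization; both are correct and of comparable length, so this is a cosmetic rather than a substantive difference.
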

\begin{proof}
To simplify this proof, we pick a trivialization $\widehat{\pi}_\pm^{-1}(V) \cong \bbR\times V$ 
of the principal $\bbR$-bundle $\widehat{\pi}_\pm :\widehat{\pi}_\pm^{-1}(V)\subseteq M \to V\subseteq B_\pm$ 
and denote by $\tau$ a choice of global coordinate on $\bbR$.
The vector space of (compactly supported) $k$-forms on 
the product manifold $\bbR\times V$ admits a decomposition $\Omega^k_{\cc}(\bbR\times V) = 
\Omega^{1,k-1}_\cc(\bbR\times V) \oplus \Omega^{0,k}_\cc(\bbR\times V)$ into forms that do 
have a leg along $\bbR$ and forms that do not have a leg along $\bbR$.
(See also \eqref{eqn:forms_split_coordinates} and the discussion preceding it.)
Note that
forms of the first type look like $\dd \tau \wedge \phi$, with $\phi\in \Omega^{k-1}_\cc(\bbR\times V)$
a $(k{-}1)$-form with trivial contraction $\iota_{\partial_\tau}\phi =0$ along the vector field $\partial_\tau$.
Forms of the second type are $\psi\in \Omega^{k}_\cc(\bbR\times V)$ satisfying $\iota_{\partial_\tau}\psi =0$.
Recall that the boundary condition $\mathfrak{L}^\pm$ determining 
$\FFF_{\LLL^\pm,\cc}(\bbR\times V)[1]\subseteq
\Omega^\bullet_\cc(\bbR\times V)[2]$ only affects $0$-forms and $1$-forms, cf.\ \eqref{eqn:boundarycondition}.
Concretely, we have that a $0$-form $\psi\in \Omega^{0}_\cc(\bbR\times V)$ satisfies 
the boundary condition if and only if its restriction to $\bbR\times \partial V$ vanishes,
a $1$-form of the first type $\dd \tau \wedge \phi\in  \Omega^{1}_\cc(\bbR\times V)$ satisfies the boundary condition
if and only if the restriction of $\phi\in \Omega^{0}_\cc(\bbR\times V)$ to $\bbR\times \partial V$ vanishes, 
and every $1$-form of the second type $\psi\in \Omega^{1}_\cc(\bbR\times V)$ satisfies the boundary condition
as a consequence of Lemma \ref{lem:sd/asdannihilation}.
\sk

To prove the statement about the factorization, we have to show that,
for every $1$-form $\varphi\in \FFF_{\LLL^\pm,\cc}(\bbR\times V)[1]^{-1}\subseteq \Omega^{1}_\cc(\bbR\times V)$
satisfying the boundary condition, the restriction of 
$(\widehat{\pi}_\pm)_\ast(\varphi) \in \Omega^0_\cc(V)$ to the boundary $\partial V$
vanishes. For $1$-forms of the second type this is trivial since $(\widehat{\pi}_\pm)_\ast(\psi) =0$
and for $1$-forms of the first type $\dd \tau \wedge \phi\in  \Omega^{1}_\cc(\bbR\times V)$
this follows from the fact that $\phi\in \Omega^{0}_\cc(\bbR\times V)$ vanishes on $\bbR\times \partial V$
as a consequence of the boundary condition. 
\sk

To prove that \eqref{eqn:fiberintegrationcorestriction} is a quasi-isomorphism,
we construct an explicit quasi-inverse by adapting the standard proof of the compactly 
supported Poincar{\'e} lemma, see e.g.\ \cite[Proposition 4.6]{BottTu}. 
Let us pick a compactly supported $1$-form $\omega\in \Omega^1_\cc(\bbR)$ with unit integral $\int_\bbR\omega=1$
and define the cochain map
\begin{flalign}
\nn \omega_\ast \,:\, \Omega^\bullet_{\partial,\cc}(V)[1]\,&\longrightarrow\,
\FFF_{\LLL^\pm,\cc}(\bbR\times V)[1]\subseteq \Omega^\bullet_\cc\big(\bbR\times V\big)[2]\quad,\\
\alpha\,&\longmapsto\,\pr_\bbR^\ast(\omega)\wedge \pr_V^\ast(\alpha) =: \omega\wedge\alpha\quad.
\end{flalign}
To simplify notation, we will always suppress pullbacks along the projection maps
$\pr_{\bbR}$ and $\pr_{V}$ on the two factors of $\bbR\times V$. 
The composition $(\widehat{\pi}_\pm)_\ast\circ \omega_\ast =\id$
is the identity since the integral of $\omega\in \Omega^1_\cc(\bbR)$ is $1$.
The other composition is homotopic to the identity, i.e.\ $\id- \omega_\ast\circ (\widehat{\pi}_\pm)_\ast  = \partial K$,
for the cochain homotopy 
$K\in \big[\FFF_{\LLL^\pm,\cc}(\bbR\times V)[1],\FFF_{\LLL^\pm,\cc}(\bbR\times V)[1]\big]^{-1}$ 
defined by
\begin{flalign}
K(\varphi) \,:=\, \int_{-\infty}^\tau\varphi - \left(\int_{-\infty}^\tau\omega\right)\wedge (\widehat{\pi}_\pm)_\ast(\varphi)\quad,
\end{flalign}
for all homogeneous $\varphi\in \FFF_{\LLL^\pm,\cc}(\bbR\times V)[1]$,
where $\int_{-\infty}^\tau$ denote restricted fiber integrations.
It is easy to check that $K$ is compatible with the boundary conditions 
by using their explicit description from the beginning of this proof.
\end{proof}

\begin{rem}\label{rem:fieldcomment}
The quasi-isomorphism $(\widehat{\pi}_\pm)_\ast$ in 
\eqref{eqn:fiberintegrationcorestriction} describes how the
linear observables $\FFF_{\LLL^\pm,\cc}\big(\widehat{\pi}_\pm^{-1}(V)\big)[1]$
of our $3$-dimensional theory are identified with the 
linear observables $\Omega^\bullet_{\partial,\cc}(V)[1]$
of a $2$-dimensional theory. As some readers might prefer thinking in terms
of fields instead of observables, let us dualize this construction
using the evaluation pairings between linear observables and fields.
The $3$-dimensional fields are described by the boundary conditioned Chern-Simons complex 
$\FFF_{\LLL^\pm}\big(\widehat{\pi}_\pm^{-1}(V)\big)$ from
\eqref{eqn:fieldcomplex:boundarycondition} and the $2$-dimensional
fields by the boundary conditioned $1$-shifted de Rham complex $\Omega_{\partial}^\bullet(V)[1]$,
for the boundary condition demanding that the pullback of $0$-forms to the boundary $\partial V$ vanishes.
(Note that there are no support restrictions in the field complexes.)
The evaluation pairings are given by \eqref{eqn:ev} in the $3$-dimensional case 
and the analogous integration pairing in $2$ dimensions. The dual
of \eqref{eqn:fiberintegrationcorestriction} is then given by the pullback map
$(\widehat{\pi}_\pm)^\ast : \Omega_{\partial}^\bullet(V)[1] \to \FFF_{\LLL^\pm}\big(\widehat{\pi}_\pm^{-1}(V)\big)$
on field complexes. Choosing any section of the principal $\bbR$-bundle $\widehat{\pi}_\pm : \widehat{\pi}_\pm^{-1}(V)\to V$,
we obtain an analogue of a Cauchy surface as per Remark \ref{rem:proper-flow-Cauchy-slice}
and can interpret this map as the constant
evolution of initial data in $\Omega_{\partial}^\bullet(V)[1]$ along the fibers of $\widehat{\pi}_\pm^{-1}(V)$.
Hence, the origin of the quasi-isomorphism \eqref{eqn:fiberintegrationcorestriction}
lies in the fact that Chern-Simons theory is locally constant (i.e.\ topological) 
along the fibers of $\widehat{\pi}_\pm:\widehat{\pi}_\pm^{-1}(V)\to V$.
\end{rem}

Using Corollary \ref{cor:principal_R-bundles:total_base_orientations}, 
we choose the orientation of $B_\pm$ to be compatible with the orientation of $M$.
With the induced orientation on $V \subseteq B_\pm$,
the cochain complexes \eqref{eqn:Blinearobservables} can be endowed with
the following unshifted linear Poisson structures
\begin{flalign}\label{eqn:sigmapoisson}
\sigma_{(0)}^{V}\,:\, \Omega^\bullet_{\partial,\cc} (V)[1]\otimes \Omega^\bullet_{\partial,\cc} (V)[1]
\,\longrightarrow\,\bbR~,~~\alpha\otimes\beta\,\longmapsto\,(-1)^{\vert \alpha\vert}\,
\int_{V} \alpha \wedge \beta\quad,
\end{flalign}
which are natural with respect to the extension-by-zero maps 
$\ext_{V}^{V^\prime} :  \Omega^\bullet_{\partial,\cc} (V)[1]\to  \Omega^\bullet_{\partial,\cc} (V^\prime)[1]$.
This defines a functor 
\begin{flalign}\label{eqn:reduced}
\big( \Omega^\bullet_{\partial,\cc}[1],\sigma_{(0)}\big)\,:\, \mathbf{Open}(B_\pm)\,\longrightarrow\,\mathbf{PoCh}_\bbR
\end{flalign}
and we observe that fiber integration 
\eqref{eqn:fiberintegrationcorestriction} defines a natural transformation
\begin{flalign}\label{eqn:fiberintegrationcorestriction-natural}
(\widehat{\pi}_\pm)_\ast \,:\, \big(\FFF_{\LLL^\pm,\cc}[1],\tau_{(0)}\big) \circ \widehat{\pi}_\pm^{-1} 
\,\Longrightarrow\,\big( \Omega^\bullet_{\partial,\cc}[1],\sigma_{(0)}\big)
\end{flalign}
between functors from $\mathbf{Open}(B_\pm)$ to $\mathbf{PoCh}_\bbR$.
Indeed, the component cochain maps $(\widehat{\pi}_\pm)_\ast$ are  $\mathbf{PoCh}_\mathbb{R}$-morphisms by 
Lemma \ref{lem:unshifted_Poisson_on_base}. 
This allows us to obtain our desired weakly equivalent model
for the AQFT $(\widehat{\pi}_\pm^{-1})^\ast(\AAA_\pm)$ that is defined only in terms
of geometric data on the $2$-dimensional base manifold $B_\pm$. 
Explicitly, consider the functor 
\begin{flalign}\label{eqn:BBBpm}
\BBB_\pm\, :=\, \CCR\big( \Omega^\bullet_{\partial,\cc}[1],\sigma_{(0)}\big)\,:\,\mathbf{Open}(B_\pm) \,\longrightarrow\, \dgastAlg 
\end{flalign} 
defined by post-composing the functor \eqref{eqn:reduced} with 
the canonical commutation relations functor
$\CCR : \mathbf{PoCh}_\bbR\to \dgastAlg$ from \cite[Section 5]{LinearYM}.
Note that $\BBB_\pm$ satisfies the Einstein causality axiom 
on $\mathbf{Open}(B_\pm)$ because \eqref{eqn:reduced} satisfies 
a classical analogue of the Einstein causality axiom. 
Indeed, for each orthogonal (i.e.\ disjoint) pair 
$(V_1\subseteq V)\perp (V_2\subseteq V)$ in $\mathbf{Open}(B_\pm)$, 
$\sigma^V_{(0)}\circ\big(\mathrm{ext}_{V_1}^{V}\otimes\mathrm{ext}_{V_2}^V\big)=0$ 
vanishes manifestly, see \eqref{eqn:sigmapoisson}. 
\begin{cor}\label{cor:dimreducedAQFT}
The morphism (i.e.\ natural transformation)
\begin{flalign}
\CCR\big((\widehat{\pi}_\pm)_\ast\big)\,:\,(\widehat{\pi}_\pm^{-1})^\ast(\AAA_\pm)\, \stackrel{\sim}{\Longrightarrow}\, \BBB_\pm
\end{flalign}
of AQFTs on $\mathbf{Open}(B_\pm)$, defined by applying the 
$\CCR$-functor to the natural transformation \eqref{eqn:fiberintegrationcorestriction-natural}, 
is a natural weak equivalence. Hence, the AQFT $\BBB_\pm$ from \eqref{eqn:BBBpm} provides
a weakly equivalent model for $(\widehat{\pi}_\pm^{-1})^\ast(\AAA_\pm)$.
\end{cor}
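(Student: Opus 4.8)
The plan is to reduce the claim to two facts that are already in hand: that every component of the natural transformation \eqref{eqn:fiberintegrationcorestriction-natural} is a quasi-isomorphism compatible with the Poisson structures, and that the functor $\CCR : \mathbf{PoCh}_\bbR \to \dgastAlg$ sends component-wise quasi-isomorphisms of Poisson cochain complexes to component-wise quasi-isomorphisms of dg-algebras.

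First I would recall the relevant notion of weak equivalence. The model structure on AQFTs on the orthogonal category $\mathbf{Open}(B_\pm)$ is transferred from the projective model structure on the functor category into $\dgastAlg$, whose weak equivalences are the natural transformations that are component-wise quasi-isomorphisms of the underlying cochain complexes, see \cite{BSWhomotopy,BSreview}. Hence a morphism of AQFTs is a natural weak equivalence precisely when, for every object $V \in \mathbf{Open}(B_\pm)$, the induced $\dgastAlg$-morphism is a quasi-isomorphism. Since $\CCR$ is a functor and $(\widehat{\pi}_\pm)_\ast$ is a natural transformation of functors $\mathbf{Open}(B_\pm) \to \mathbf{PoCh}_\bbR$, the family $\CCR\big((\widehat{\pi}_\pm)_\ast\big)$ is automatically a natural transformation of AQFTs; the only content to establish is that each of its components is a quasi-isomorphism.

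Now fix an object $V \in \mathbf{Open}(B_\pm)$. By Lemma~\ref{lem:dimregqiso} the cochain map $(\widehat{\pi}_\pm)_\ast : \FFF_{\LLL^\pm,\cc}\big(\widehat{\pi}_\pm^{-1}(V)\big)[1] \to \Omega^\bullet_{\partial,\cc}(V)[1]$ is a quasi-isomorphism, and by Lemma~\ref{lem:unshifted_Poisson_on_base} it intertwines $\tau_{(0)}$ and $\sigma_{(0)}$, so that it is a quasi-isomorphism in $\mathbf{PoCh}_\bbR$. It remains to apply the standard fact that $\CCR$ preserves such quasi-isomorphisms. Concretely, the dg-algebra $\CCR(\mathfrak{V})$ associated with a Poisson cochain complex $\mathfrak{V}$ carries the ascending filtration induced by tensor word-length on the underlying complexified tensor algebra; by a Poincar\'e--Birkhoff--Witt argument its associated graded cochain complex is the complexified symmetric algebra $\Sym_\bbC\big(\mathfrak{V} \otimes_\bbR \bbC\big)$ equipped with the differential induced from $\mathfrak{V}$, the Poisson bracket having been pushed into the next filtration layer. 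Since complexification is exact and the symmetric-algebra functor preserves quasi-isomorphisms over a field of characteristic $0$, a quasi-isomorphism in $\mathbf{PoCh}_\bbR$ induces a quasi-isomorphism on associated gradeds, and a comparison of the resulting (exhaustive and bounded below) filtration spectral sequences shows that it induces a quasi-isomorphism of the CCR dg-algebras. This is precisely the argument of \cite[Lemma 6.18]{LinearYM} (see also \cite[Section 5]{LinearYM}), applied here to the components of $(\widehat{\pi}_\pm)_\ast$. The second assertion of the corollary, that $\BBB_\pm$ is a weakly equivalent model for $(\widehat{\pi}_\pm^{-1})^\ast(\AAA_\pm)$, is then immediate.

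The one place where genuine care is needed is the Poincar\'e--Birkhoff--Witt identification of the underlying cochain complex of $\CCR(\mathfrak{V})$ with the symmetric algebra, together with the attendant spectral-sequence comparison; everything else in the argument is formal. Since this is exactly the input imported from \cite{LinearYM}, it can be cited rather than reproduced here.
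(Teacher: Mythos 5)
Your proof is correct and follows essentially the same route as the paper: the component-wise quasi-isomorphism is Lemma~\ref{lem:dimregqiso} together with the Poisson compatibility from Lemma~\ref{lem:unshifted_Poisson_on_base}, and one then invokes that the $\CCR$-functor preserves weak equivalences (the paper cites \cite[Proposition 5.3]{LinearYM} for this rather than Lemma~6.18, which is used elsewhere for the AQFT axioms). Your sketch of the word-length-filtration/PBW argument for why $\CCR$ preserves quasi-isomorphisms is an accurate expansion of what that reference proves, though the paper simply cites it.
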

\begin{proof}
The statement follows directly from Lemma \ref{lem:dimregqiso} and 
the fact that the $\CCR$-functor preserves weak equivalences, see \cite[Proposition 5.3]{LinearYM}.
\end{proof}

\paragraph{Non-dependence on the choice of chiral flow extension:} In order to 
construct the linear Chern-Simons AQFT $\mathfrak{A}_\pm$ via Green's homotopies,
we have assumed extra data on $M$
in the form of the proper $\bbR$-action $\widehat{\Theta}_\pm : \bbR\times M\to M$.
In the dimensionally reduced model $\mathfrak{B}_\pm$,
this extra data is encoded as the quotient space $B_\pm := M /\!\sim_\pm$
on which the AQFT is defined.
\sk

We now prove,
under an additional topological assumption on $M$,
that the dimensionally reduced theories $\mathfrak{B}_\pm$ and $\mathfrak{B}^\prime_\pm$
associated to two different choices of proper $\bbR$-actions 
$\widehat{\Theta}_\pm$ and $\widehat{\Theta}^\prime_\pm$ on $M$
are equivalent in an appropriate sense.
Since $\mathfrak{A}_\pm$ is fully determined by its dimensionally reduced counterpart,
it follows that our construction is independent of the particular choice of proper $\bbR$-action $\widehat{\Theta}_\pm$ on $M$.
The important content of Assumption \ref{assu:bulkflows} for our construction is therefore only:
(a) a choice of chirality $\pm$ on $\partial M$, and
(b) the \emph{existence} of a proper $\bbR$-action on $M$
which restricts on $\partial M$ to a chiral flow of the chosen chirality.
\sk

Recall that the quotient space $B_\pm$ of $M$ by the proper $\bbR$-action $\widehat{\Theta}_\pm$
is a smooth $2$-manifold with orientation suitably compatible with the orientation of $M$,
see Corollaries \ref{cor:proper_flow_gives_principal_bundle} and \ref{cor:principal_R-bundles:total_base_orientations}.
\begin{propo} \label{prop:diffeomorphic-bases}
Suppose that $\widehat{\Theta}_\pm$ and $\widehat{\Theta}^\prime_\pm$ 
are two proper $\bbR$-actions on the oriented $3$-manifold $M$
which restrict to the boundary $\partial M$ as per \eqref{eqn:bulkflowcompatibility}
to give chiral flows $\Theta_\pm$ and $\Theta^\prime_\pm$ of the same chirality.
If each component of $M$ has more than one distinct topological end,
then there exists an orientation-preserving diffeomorphism $f : B_\pm \to B^\prime_\pm$
between the	quotient spaces of $M$ by $\widehat{\Theta}_\pm$ and $\widehat{\Theta}^\prime_\pm$.
\end{propo}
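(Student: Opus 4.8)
The plan is to exploit the hypothesis on the ends of $M$ to force both quotient surfaces $B_\pm$ and $B^\prime_\pm$ to be \emph{compact}, after which one identifies them by the classification of compact surfaces together with the topological data they inherit from $M$. Throughout I would reduce at once to the case that $M$ is connected: since $\widehat{\pi}_\pm : M \to B_\pm$ and $\widehat{\pi}^\prime_\pm : M \to B^\prime_\pm$ are surjective with connected fibres, they induce bijections $\pi_0(M)\cong\pi_0(B_\pm)$ and $\pi_0(M)\cong\pi_0(B^\prime_\pm)$, so it suffices to construct an orientation-preserving diffeomorphism between the components of $B_\pm$ and $B^\prime_\pm$ lying over each fixed component of $M$ and then take the disjoint union.

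So assume $M$ connected with at least two ends. By Corollary~\ref{cor:proper_flow_gives_principal_bundle} the maps $\widehat{\pi}_\pm$ and $\widehat{\pi}^\prime_\pm$ are smooth principal $\bbR$-bundles, hence trivialisable since $\bbR$ is contractible; thus $M\cong B_\pm\times\bbR\cong B^\prime_\pm\times\bbR$ as smooth $2$-plus-$1$-dimensional manifolds (with boundary). In particular $B_\pm$ and $B^\prime_\pm$, realised as zero sections, are deformation retracts of $M$, so $B_\pm\simeq M\simeq B^\prime_\pm$ are homotopy equivalent, and by Corollary~\ref{cor:principal_R-bundles:total_base_orientations} both are orientable $2$-manifolds carrying orientations induced from that of $M$. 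Moreover $\partial B_\pm = \partial M/\!\sim_\pm$ since $\partial M$ is $\widehat{\Theta}_\pm$-invariant, and likewise $\partial B^\prime_\pm$ is the quotient of $\partial M$ by the orbit relation of $\widehat{\Theta}^\prime_\pm$; as the restricted flows $\Theta_\pm$ and $\Theta^\prime_\pm$ on $\partial M$ have the same chirality, these two quotients of $\partial M$ coincide by Remark~\ref{rem:canonical_flow}, so $\partial B_\pm = \partial B^\prime_\pm$ and in particular $\partial M\neq\emptyset$ precisely when $\partial B_\pm\neq\emptyset$.

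The crux is to show that $B_\pm$ is compact. If it were not, then $M\cong B_\pm\times\bbR$ would have exactly one end: the subsets $K\times[-n,n]$ with $K\subseteq B_\pm$ compact and $n\in\bbN$ are cofinal among the compact subsets of $B_\pm\times\bbR$, and the complement of any one of them is connected — the sets $B_\pm\times(n,\infty)$ and $B_\pm\times(-\infty,-n)$ are connected because $B_\pm$ is, and choosing a point $b_0\in B_\pm\setminus K$ (non-empty as $B_\pm$ is non-compact) the vertical line $\{b_0\}\times\bbR$ meets both of them inside that complement. This contradicts the assumption that $M$ has more than one end, so $B_\pm$ is compact; the same argument applies to $B^\prime_\pm$. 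Consequently $\partial B_\pm$ and $\partial B^\prime_\pm$ are compact $1$-manifolds, i.e.\ finite disjoint unions of circles, and from $\partial B_\pm = \partial B^\prime_\pm$ they have the same number $b$ of components.

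With compactness in hand, I would finish via the classification of compact surfaces. Write $g$, $b$ for the genus and number of boundary circles of $B_\pm$, and $g^\prime$, $b^\prime = b$ for those of $B^\prime_\pm$. Since $B_\pm$ and $B^\prime_\pm$ are homotopy equivalent they have equal first Betti numbers, and the relation between the first Betti number and $(g,b)$ for a compact connected orientable surface — $2g+b-1$ when $b\geq 1$, and $2g$ when $b=0$, the two cases being distinguished by whether $\partial M$ is empty — together with $b=b^\prime$ forces $g=g^\prime$. By the classification of compact surfaces, and using that every compact surface admits an orientation-reversing self-diffeomorphism (so that oriented diffeomorphism type is still pinned down by genus and boundary-component count), $B_\pm$ and $B^\prime_\pm$ are orientation-preservingly diffeomorphic; assembling these diffeomorphisms over the components of $M$ yields the desired $f : B_\pm \to B^\prime_\pm$. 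The one genuine obstacle is the compactness step: it is exactly the ``more than one end'' hypothesis that excludes a non-compact base, and for a non-compact base the surfaces $B_\pm,B^\prime_\pm$ could \emph{a priori} be non-diffeomorphic open surfaces sharing merely a homotopy type, leaving no such clean conclusion.
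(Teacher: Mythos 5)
Your proposal is correct and follows essentially the same route as the paper: reduce to connected $M$, trivialize the principal $\bbR$-bundles to get $M\cong B_\pm\times\bbR\cong B'_\pm\times\bbR$, use the more-than-one-end hypothesis to force the bases to be compact, identify the boundaries via Remark~\ref{rem:canonical_flow}, and conclude with the classification of compact oriented surfaces (together with the existence of an orientation-reversing self-diffeomorphism to get the orientation-preserving statement). The only presentational differences are that you give a direct elementary argument that a non-compact base times $\bbR$ has exactly one end (the paper instead cites Theorem~5.2 of [ILM]), and you spell out the Betti-number bookkeeping that turns ``homotopy equivalent with the same boundary-component count'' into ``same genus'', which the paper states more tersely; neither constitutes a genuinely different approach.
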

\begin{proof}
We describe the case where $M$ is connected,
hence $B_\pm$ and $B^\prime_\pm$ are also connected.
The result extends component-wise for disconnected $M$.
Since any principal $\mathbb{R}$-bundle is trivializable,
there exist diffeomorphisms $M \cong \mathbb{R} \times B_\pm \cong \mathbb{R} \times B^\prime_\pm$.
The hypothesis that $M$ has more than one end implies that that $B_\pm$ is compact.
Indeed, if $B_\pm$ were non-compact then the product $M \cong \mathbb{R} \times B_\pm$
of connected, locally connected, non-compact spaces would have exactly one end, see \cite[Theorem 5.2]{ILM}.
Similarly, $B^\prime_\pm$ is also compact.
\sk

Since $\mathbb{R}$ deformation retracts to a point, it follows
that $B_\pm$ and $B_\pm^\prime$ are homotopy equivalent,
and hence have the same genus. Since the boundary restrictions
$\Theta_\pm$ and $\Theta^\prime_\pm$ of the $\bbR$-actions have the same chirality, 
it follows that $\partial B_\pm = \partial M /\! \sim_\pm \,= \partial M /\! \sim_\pm^\prime \,= \partial B_\pm^\prime$,
see Remark \ref{rem:canonical_flow}. Then by the classification of orientable compact surfaces
(see Remark \ref{rem:classification_of_surfaces} below) there exists an 
orientation-preserving diffeomorphism $f : B_\pm \to B^\prime_\pm$.
\end{proof}

\begin{rem} \label{rem:classification_of_surfaces}
The preceding proof exploits the classification of orientable compact surfaces.
This may be stated as follows \cite[Chapter 9, Theorem 3.7]{HirschDiffTop}:
For a connected, compact, orientable surface $B$ of genus $g$ with $b$ boundary components,
there exists a diffeomorphism $B \to \Sigma_{g,b}$,
where $\Sigma_{g,b}$ is the $g$-holed torus with $b$ disjoint open discs removed.
When $B$ is also oriented, we may upgrade this classification to use diffeomorphisms
that preserve orientations. This is the form of the classification theorem
used in the proof of Proposition \ref{prop:diffeomorphic-bases}.
There always exists an orientation-reversing self-diffeomorphism
$r : \Sigma_{g,b} \to \Sigma_{g,b}$.
It is a trivial adaptation of the $b=0$ case \cite[Theorem A]{Muellner} to construct one.
Specifically, $\Sigma_{g,b}$ may be embedded in $\mathbb{R}^3$ reflection-symmetrically about a plane,
say $z=0$ in Cartesian coordinates.
Then reflection about the plane $(x,y,z) \mapsto (x,y,-z)$ gives $r$.
If the diffeomorphism $B \to  \Sigma_{g,b}$ established above does not preserve orientations,
we post-compose it with $r$ to obtain a new diffeomorphism which does.
\end{rem}

\begin{rem}
In Proposition \ref{prop:diffeomorphic-bases},
the hypothesis on topological ends of $M$ is precisely an $\mathbb{R}$-action-agnostic way
to impose that the quotient spaces $B_\pm$ and $B_\pm^\prime$ are
compact.\footnote{Compactness of the quotient space $B_\pm$ implies that
each component of $M \cong \mathbb{R} \times B_\pm$ has two distinct ends.
To see this in case of connected $M$, take the two decreasing sequences
$U^i_0 \supseteq U^i_1 \supseteq \cdots \supseteq U^i_n \supseteq \cdots$, for $i=1,2$,
of non-empty connected open sets given by $U^1_n := (n,\infty) \times B_\pm$ 
and $U^2_n := (-\infty, -n) \times B_\pm$ respectively.
These sets satisfy $\cap_{n \in \mathbb{N}} \overline{U^i_n} = \emptyset$
and have compact boundaries,
$\partial U^1_n = \{n\} \times B_\pm$ and $\partial U^2_n = \{-n\} \times B_\pm$.
Thus $\{ U^1_n\}_{n \in \mathbb{N}}$ and $\{ U^2_n \}_{n \in \mathbb{N}}$ represent ends of $M$.
They are distinct since $U^1_n \cap U^2_m = \emptyset$ for all $n,m \in \mathbb{N}$.}
This compactness allows us to address the cancellation problem
(i.e.\ show that $\mathbb{R} \times B_\pm \cong \mathbb{R} \times B_\pm^\prime$ implies $B_\pm \cong B_\pm^\prime$)
using the classical classification of compact surfaces.
There exist analogous classification results for non-compact surfaces \cite{BrownMesser}
but their application here would require a detailed analysis
of the ends of $B_\pm$ and $B_\pm^\prime$, which we avoid for simplicity.
\sk

Many spaces $M$ of interest satisfy the hypotheses in Proposition \ref{prop:diffeomorphic-bases},
e.g.\ the filled cylinder $\mathbb{R} \times \mathbb{D}^2$ (Example \ref{ex:cylinder})
and the annular cylinder $\mathbb{R} \times \mathbb{A}^2$ where $\mathbb{A}^2 :=
\{(x,y)\in\bbR^2\,\vert\,\frac{1}{2}\leq x^2+y^2 \leq 1\,\}$ is an annulus.
We note, however, that some important examples do not,
including the half-space (Example \ref{ex:Minkowski3}).
To complete the treatment of such an example,
one should either generalize Proposition \ref{prop:diffeomorphic-bases} appropriately,
or prove an analogous statement specific to the example of interest using ad hoc methods.
\end{rem}

Suppose that $f : B_\pm \to B^\prime_\pm$ is an orientation-preserving diffeomorphism.
Recall that the Poisson structure $\sigma_{(0)}$ of \eqref{eqn:sigmapoisson}
is defined using only wedge products and integration of forms,
and so it is compatible with pullbacks along orientation-preserving diffeomorphisms.
Hence, pulling forms back along $f$ defines a natural isomorphism
\begin{flalign} \label{eqn:diffeomorphic-bases:nat-iso}
f^\ast \, :\,
\big( \Omega^{\bullet\,\prime}_{\partial,c}[1], \sigma^\prime_{(0)} \big) \circ f \,
\stackrel{\cong}{\Longrightarrow} \,
\big( \Omega^\bullet_{\partial,c}[1], \sigma_{(0)} \big)
\end{flalign}
between functors from $\mathbf{Open}(B_\pm)$ to $\mathbf{PoCh}_\mathbb{R}$.
By abuse of notation,
we denote the functor $f : \mathbf{Open}(B_\pm) \to \mathbf{Open}(B_\pm^\prime)$ by
the same symbol as the diffeomorphism which induces it.
\begin{cor}\label{cor:independence}
Suppose that the hypothesis of Proposition \ref{prop:diffeomorphic-bases} holds
and denote by $f : B_\pm \to B^\prime_\pm$ the resulting orientation-preserving diffeomorphism.
There is a natural isomorphism
\begin{flalign}
\CCR(f^\ast) \, :\,
\mathfrak{B}^\prime_\pm \circ f \,
\stackrel{\cong}{\Longrightarrow} \,
\mathfrak{B}_\pm
\end{flalign}
between AQFTs on $\mathbf{Open}(B_\pm)$,
defined by applying the $\CCR$-functor to the natural isomorphism \eqref{eqn:diffeomorphic-bases:nat-iso}.
\end{cor}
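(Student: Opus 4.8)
The plan is to deduce this essentially for free from the functoriality of $\CCR : \mathbf{PoCh}_\bbR \to \dgastAlg$, since all of the geometric content has already been packaged into Proposition \ref{prop:diffeomorphic-bases} and the natural isomorphism \eqref{eqn:diffeomorphic-bases:nat-iso}. I do not expect any genuine obstacle here; the only thing to be careful about is the bookkeeping of which category each functor lives on.

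First I would note that the orientation-preserving diffeomorphism $f : B_\pm \to B^\prime_\pm$ furnished by Proposition \ref{prop:diffeomorphic-bases} induces an orthogonal functor $f : \mathbf{Open}(B_\pm) \to \mathbf{Open}(B^\prime_\pm)$, because $f$ is in particular a homeomorphism and hence $V_1 \cap V_2 = \emptyset$ if and only if $f(V_1) \cap f(V_2) = \emptyset$, so the disjointness orthogonality relations are both preserved and reflected. Consequently the pullback $\BBB^\prime_\pm \circ f$ is again an AQFT on $\mathbf{Open}(B_\pm)$, so the comparison asserted in the corollary is between objects of the expected kind. Next I would spell out, using the definition \eqref{eqn:BBBpm}, that $\BBB_\pm = \CCR \circ \big(\Omega^\bullet_{\partial,\cc}[1], \sigma_{(0)}\big)$ and, after precomposing with the functor $f$, that $\BBB^\prime_\pm \circ f = \CCR \circ \big(\big(\Omega^{\bullet\,\prime}_{\partial,\cc}[1], \sigma^\prime_{(0)}\big) \circ f\big)$, both as functors $\mathbf{Open}(B_\pm) \to \dgastAlg$.

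Then I would apply the $\CCR$-functor to the natural isomorphism $f^\ast$ of \eqref{eqn:diffeomorphic-bases:nat-iso}: since $\CCR$ is a functor, this whiskering produces a natural transformation $\CCR(f^\ast) : \CCR \circ \big(\big(\Omega^{\bullet\,\prime}_{\partial,\cc}[1], \sigma^\prime_{(0)}\big) \circ f\big) \Rightarrow \CCR \circ \big(\Omega^\bullet_{\partial,\cc}[1], \sigma_{(0)}\big)$, i.e.\ $\CCR(f^\ast) : \BBB^\prime_\pm \circ f \Rightarrow \BBB_\pm$ by the identifications just recorded. Finally I would observe that its component at every $V \in \mathbf{Open}(B_\pm)$ is $\CCR(f^\ast_V)$, and that this is an isomorphism in $\dgastAlg$ because $f^\ast_V$ is an isomorphism in $\mathbf{PoCh}_\bbR$ (as \eqref{eqn:diffeomorphic-bases:nat-iso} is a natural isomorphism) and functors preserve isomorphisms; hence $\CCR(f^\ast)$ is a natural isomorphism, as claimed. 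The entire argument is formal once Proposition \ref{prop:diffeomorphic-bases} and \eqref{eqn:diffeomorphic-bases:nat-iso} are in hand, so there is no hard step — at most one should double-check that the orientation-preserving hypothesis on $f$ is precisely what makes \eqref{eqn:diffeomorphic-bases:nat-iso} a well-defined natural isomorphism, which is where the Poisson structure $\sigma_{(0)}$ of \eqref{eqn:sigmapoisson}, built only from wedge products and integration of forms, enters.
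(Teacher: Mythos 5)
Your proof is correct and follows essentially the same route as the paper, which states the corollary without a separate proof precisely because, once the natural isomorphism \eqref{eqn:diffeomorphic-bases:nat-iso} is in hand, the claim is immediate from applying the $\CCR$-functor and the fact that functors preserve isomorphisms. Your extra care about the orthogonal functor $f : \mathbf{Open}(B_\pm) \to \mathbf{Open}(B^\prime_\pm)$ and the orientation-preserving hypothesis is exactly the bookkeeping the paper implicitly relies on in the paragraph preceding the corollary.
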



\section{\label{sec:analysis}Analyzing the dimensionally reduced AQFT}
The aim of this section is to analyze some of the physics encoded in
the dimensionally reduced AQFT $\BBB_\pm : \Open(B_\pm)\to\dgastAlg$ 
from Corollary \ref{cor:dimreducedAQFT}. In particular, we shall show that, restricting this AQFT
to the interior $\mathrm{Int}B_\pm$, one recovers the usual linear Chern-Simons theory on surfaces
without boundary and, restricting to a tubular neighborhood of the boundary $\partial B_\pm$, 
one finds a chiral free boson.

\paragraph{Restriction to the interior $\mathrm{Int}B_\pm$:} Let us denote by 
$\Open(\mathrm{Int}B_\pm)\subseteq \Open(B_\pm)$
the full subcategory of non-empty open subsets $V\subseteq B_\pm$ that do not 
intersect the boundary, i.e.\ $V\cap \partial B_\pm = \emptyset$. On such regions
the boundary condition in $\Omega^\bullet_{\partial,\cc}(V)[1] $ becomes void. Hence, the restricted AQFT
$\BBB_\pm\vert_{\mathrm{Int}B_\pm} : \Open(\mathrm{Int}B_\pm)\to \dgastAlg$
assigns to $(V\subseteq \mathrm{Int}B_\pm)\in \Open(\mathrm{Int}B_\pm)$
the dg-algebra
\begin{flalign}
\BBB_\pm\vert_{\mathrm{Int}B_\pm}(V) \,=\,\CCR\big(\Omega_\cc^\bullet(V)[1],\sigma_{(0)}\big) \,\in\,\dgastAlg
\end{flalign}
given by canonical quantization of the $1$-shifted compactly supported
de Rham complex along the linear Poisson structure
\eqref{eqn:sigmapoisson}. Observe that the latter is (a homological refinement of) the 
Atiyah-Bott Poisson structure \cite{AtiyahBott} for flat connections on surfaces,
hence the restricted AQFT $\BBB_\pm\vert_{\mathrm{Int}B_\pm}$ can be interpreted
in terms of linear Chern-Simons theory on $\mathrm{Int} B_\pm$. A truncation
of this AQFT was previously studied in \cite{DMSChernSimons}.

\paragraph{Restriction near the boundary $\partial B_\pm$:} In order to study 
the induced theory on the boundary $\partial B_\pm$, we consider a smooth embedding 
$\partial B_\pm \times [0,1) \hookrightarrow B_\pm$ as a tubular neighborhood of the boundary 
$\partial B_\pm$, so its restriction to $\partial B_\pm \times \{ 0 \}$ embeds as $\partial B_\pm$.
\sk

Let us denote by $\Open(\partial B_\pm)$ the category 
whose objects are all non-empty open subsets $W\subseteq \partial B_\pm$ of the $1$-dimensional 
boundary manifold $\partial B_\pm$ and whose morphisms $\iota_{W}^{W^\prime} : W\to W^\prime$ are subset 
inclusions $W\subseteq W^\prime\subseteq \partial B_\pm$. We endow this category with the 
disjointness orthogonality relation, i.e.\ $(W_1\subseteq W)\perp (W_2\subseteq W)$ 
if and only if $W_1\cap W_2 =\emptyset$, and introduce the orthogonal functor
$\Open(\partial B_\pm)\to \Open(B_\pm)$ defined
by $ W \mapsto W\times [0,1) $.
We denote by $\BBB_\pm\vert_{\partial B_\pm} : \Open(\partial B_\pm) \to \dgastAlg$ the restriction
of the AQFT $\BBB_\pm$ from Corollary \ref{cor:dimreducedAQFT} along this orthogonal functor.
Explicitly, this theory assigns to $(W\subseteq \partial B_\pm)\in \Open(\partial B_\pm)$
the dg-algebra
\begin{flalign}
\BBB_\pm\vert_{\partial B_\pm}(W)\,=\, \CCR\big(\Omega_{\partial,\cc}^\bullet\big(W\times[0,1)\big)[1],\sigma_{(0)}\big) \,\in\,\dgastAlg\quad.
\end{flalign}
We will now construct a weakly equivalent model for the AQFT $\BBB_\pm\vert_{\partial B_\pm}$
that is determined only from $1$-dimensional data on $\partial B_\pm$.
\sk

For the following construction, we choose a global coordinate $r$ on $[0,1)$ 
and pick any compactly supported $1$-form $\omega \in \Omega^1_\cc\big([0,1)\big)$
with unit integral $\int_0^1 \omega=1$. As before, we simply write 
$\omega := \pr_{[0,1)}^\ast(\omega)\in \Omega^1\big(W\times[0,1)\big)$
for the form obtained via pullback along the projection map, and similar for pullbacks
of forms from $W$. We define, for each non-empty open subset $(W\subseteq \partial B_\pm)\in \Open(\partial B_\pm)$, the map
\begin{flalign}\label{eqn:kappacochainmap}
\kappa \,:\, \Omega^0_\cc(W) \,\longrightarrow\,\Omega_{\partial,\cc}^\bullet\big(W\times[0,1)\big)[1]~~,\quad
\varphi\,\longmapsto\, \omega\wedge\varphi - \int_r^1\omega \wedge \dd \varphi
\quad.
\end{flalign}
Regarding the domain of this map as a cochain complex concentrated in degree $0$ (with trivial differential),
one easily checks that $\kappa$ defines a cochain map. Indeed, 
\begin{flalign}
-\dd \kappa(\varphi) &= -\dd\bigg(\omega\wedge\varphi - \int_r^1\omega \wedge \dd \varphi\bigg)
=\omega\wedge\dd \varphi - \omega\wedge\dd \varphi = 0\quad,
\end{flalign}
for all $\varphi\in \Omega^0_\cc(W)$.
\begin{lem}\label{lem:boundarysimplification}
For each non-empty open subset $(W\subseteq \partial B_\pm)\in \Open(\partial B_\pm)$, 
the cochain map \eqref{eqn:kappacochainmap} is a quasi-isomorphism.
\end{lem}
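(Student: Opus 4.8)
The plan is to compute the cohomology of the target complex $\Omega^\bullet_{\partial,\cc}\big(W\times[0,1)\big)[1]$ by comparing it to the honest compactly supported de Rham complex, and then to recognise the map induced by $\kappa$ as the resulting connecting isomorphism. Write $X:=W\times[0,1)$, an oriented $2$-manifold whose boundary $\partial X=W\times\{0\}$ is canonically $W$. First I would record that the \emph{unconditioned} shifted complex $\Omega^\bullet_\cc(X)[1]$ is acyclic: its cohomology in degree $k$ is $\mathsf{H}^{k+1}_\cc(X)$, and since $[0,1)\cong[0,\infty)$ is of finite type with vanishing compactly supported cohomology, the K\"unneth theorem for compactly supported de Rham cohomology gives $\mathsf{H}^\bullet_\cc(X)=\mathsf{H}^\bullet_\cc(W)\otimes\mathsf{H}^\bullet_\cc([0,1))=0$. (Equivalently, the inclusion $\partial X\hookrightarrow X$ is a deformation retract, so $\mathsf{H}_\bullet(X,\partial X)=0$, and Poincar\'e--Lefschetz duality yields $\mathsf{H}^k_\cc(X)\cong\mathsf{H}_{2-k}(X,\partial X)=0$.)

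Next I would set up the short exact sequence of cochain complexes
\begin{flalign}
0~\longrightarrow~\Omega^\bullet_{\partial,\cc}(X)[1]~\longrightarrow~\Omega^\bullet_\cc(X)[1]~\longrightarrow~D^\bullet~\longrightarrow~0\quad,
\end{flalign}
where the first map is the inclusion (an equality in all cohomological degrees except $-1$, where $\Omega^0_{\partial,\cc}(X)\subseteq\Omega^0_\cc(X)$ is the kernel of restriction to $\partial X$), and $D^\bullet$ is the complex $\Omega^0_\cc(W)$ placed in cohomological degree $-1$ with zero differential. Exactness on the right uses that the restriction map $\Omega^0_\cc(X)\to\Omega^0_\cc(\partial X)=\Omega^0_\cc(W)$ is surjective, via the extension $h\mapsto\pr_W^\ast(h)\cdot\chi$ for a bump function $\chi$ on $[0,1)$ with $\chi(0)=1$. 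By the previous step the middle term is acyclic, so the long exact cohomology sequence shows that the connecting map
\begin{flalign}
\delta\,:\,\Omega^0_\cc(W)=\mathsf{H}^{-1}(D^\bullet)~\stackrel{\cong}{\longrightarrow}~\mathsf{H}^0\big(\Omega^\bullet_{\partial,\cc}(X)[1]\big)
\end{flalign}
is an isomorphism and that $\mathsf{H}^k\big(\Omega^\bullet_{\partial,\cc}(X)[1]\big)=0$ for all $k\neq 0$.

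Finally I would identify $\mathsf{H}^\bullet(\kappa)$ with $\delta$. Given $\varphi\in\Omega^0_\cc(W)$, set $\widetilde\varphi:=\big(\int_r^1\omega\big)\cdot\pr_W^\ast(\varphi)\in\Omega^0_\cc(X)$; this is compactly supported and, since $\int_0^1\omega=1$, restricts on $\partial X$ to $\varphi$, so it is a lift of $\varphi$ of the kind used to compute $\delta$. A short computation, using $\dd\big(\int_r^1\omega\big)=-\omega$ and the projection formula for the restricted fiber integration (exactly as in the proof of \cref{lem:dimregqiso}), gives
\begin{flalign}
\dd\widetilde\varphi=-\omega\wedge\varphi+\int_r^1\omega\wedge\dd\varphi=-\kappa(\varphi)\quad,
\end{flalign}
so that $\kappa(\varphi)$ is precisely the image of $\widetilde\varphi$ under the shifted differential of $\Omega^\bullet_\cc(X)[1]$. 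By the standard recipe for the connecting homomorphism this says $\mathsf{H}^0(\kappa)(\varphi)=[\kappa(\varphi)]=\delta(\varphi)$, i.e.\ $\mathsf{H}^0(\kappa)=\delta$. Since the domain of $\kappa$ is concentrated in cohomological degree $0$ and its codomain has cohomology concentrated in degree $0$, the fact that $\mathsf{H}^0(\kappa)$ is an isomorphism forces $\kappa$ to be a quasi-isomorphism.

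The one point that genuinely requires care — and the main obstacle — is to resist applying K\"unneth directly to the boundary-conditioned complex: the naive guess $\mathsf{H}^\bullet_\cc(W)\otimes\mathsf{H}^\bullet\big(\Omega^\bullet_{\partial,\cc}([0,1))[1]\big)$ already fails for $W=\bbR$, because the completed tensor decomposition of $\Omega^\bullet_{\partial,\cc}(X)$ does not obey the algebraic K\"unneth formula for this modified complex. Routing the computation through the honest de Rham complex, where K\"unneth (equivalently Poincar\'e--Lefschetz duality) does apply, is what makes the argument go through; the remaining verifications are routine sign bookkeeping of the same type as in \cref{lem:dimregqiso}. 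An alternative more in the spirit of that lemma would be to exhibit an explicit quasi-inverse given by fiber integration $(\pr_W)_\ast$ in cohomological degree $0$ together with a contracting homotopy built from restricted fiber integrations along $[0,1)$, but keeping track of the boundary terms in the fiberwise Stokes theorem makes this route noticeably more delicate.
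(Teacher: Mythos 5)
Your proof is correct, but it takes a genuinely different route from the paper's. The paper proves the lemma by exhibiting an explicit quasi-inverse $\lambda$ for $\kappa$ given by fiber integration $\int_0^1$ in cohomological degree $0$, together with an explicit contracting homotopy $K$, and verifying $\lambda\circ\kappa=\id$ and $\id-\kappa\circ\lambda=\partial K$ by direct computation. The route you dismiss in your last paragraph as ``noticeably more delicate'' is in fact exactly what the paper does, and the fiberwise Stokes boundary term is not a genuine obstacle: the boundary condition defining $\Omega^0_{\partial,\cc}$ is precisely what makes $\lambda$ a cochain map, as the paper remarks, and $K$ is compatible with the boundary condition by a one-line check. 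Your argument instead proves acyclicity of the ambient complex $\Omega^\bullet_\cc(W\times[0,1))[1]$ (via Poincar\'e--Lefschetz duality or K\"unneth), runs the long exact sequence of the short exact sequence $0\to\Omega^\bullet_{\partial,\cc}(X)[1]\to\Omega^\bullet_\cc(X)[1]\to\Omega^0_\cc(W)\to 0$ to show that the cohomology of the boundary-conditioned complex is concentrated in degree $0$, and then identifies $\mathsf{H}^0(\kappa)$ with the connecting isomorphism $\delta$ via the explicit lift $\widetilde\varphi=\big(\int_r^1\omega\big)\,\pr_W^\ast\varphi$ and the computation $-\dd\widetilde\varphi=\kappa(\varphi)$. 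Both routes are valid, and each buys something: your route is more conceptual and makes transparent why the cohomology sits in degree $0$, whereas the paper's explicit $\lambda$ (and, implicitly, the homotopy $K$) is needed downstream in Example \ref{ex:holonomy} to transport the bulk holonomy observable to the chiral boson zero mode on the boundary, so producing the quasi-inverse, not just the existence of a quasi-isomorphism, is genuinely useful there.
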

\begin{proof}
The proof is similar to the one of \cite[Theorem 4.1]{GRW}, so we can be relatively brief.
A quasi-inverse for $\kappa$ is given by the following fiber integration
\begin{flalign}\label{eqn:kappaquasiinverse}
\parbox{0.5cm}{\xymatrix{
\Omega_{\partial,\cc}^\bullet\big(W\times[0,1)\big)[1] \ar[d]_-{\lambda}\\
\Omega^0_\cc(W)
}
}\!\!=~ \left(\parbox{2cm}{\xymatrix@C=2em{
\ar[d]^-{0}\Omega^0_{\partial,\cc}\big(W\times[0,1)\big)  \ar[r]^-{-\dd} & \ar[d]^-{\int_0^1} \Omega^1_{\cc}\big(W\times[0,1)\big) \ar[r]^-{-\dd} & \Omega^2_{\cc}\big(W\times[0,1)\big) \ar[d]^-{0}\\
0 \ar[r]_-{0} & \Omega^0_\cc(W) \ar[r]_-{0}&0
}}\right)
\; \; ,
\end{flalign}
which is a cochain map owing to the boundary condition on $\Omega^0_{\partial, \mathrm{c}}\big(W \times [0,1) \big)$.
One checks that $\lambda \circ \kappa = \id$ since $\omega$ integrates to $1$.
The other composition is homotopic to the identity, i.e.\ $\id- \kappa\circ \lambda = \partial K$,
with cochain homotopy $K\in \big[\Omega_{\partial,\cc}^\bullet\big(W\times[0,1)\big)[1] ,
\Omega_{\partial,\cc}^\bullet\big(W\times[0,1)\big)[1] \big]^{-1}$ defined by
\begin{flalign}\label{eqn:Khomotopy}
K(\alpha)
= \begin{cases}
0
&~,~~\text{for }\vert\alpha\vert = -1
\quad,\\[0.5em]
\int_r^1\alpha - \int_r^1\omega\wedge\int_0^1\alpha 
&~,~~\text{for }\vert\alpha\vert = 0
\quad,\\[0.5em]
\int_r^1 \alpha
&~,~~\text{for }\vert\alpha\vert = 1\quad.
\end{cases}
\end{flalign}
This cochain homotopy is compatible with the boundary condition
on $0$-forms since, for all $\alpha\in \Omega_{\partial,\cc}^\bullet\big(W\times[0,1)\big)[1]^0$,
we have that
$K(\alpha)\vert_{r=0}
= \int_0^1 \alpha - \int_0^1\omega\wedge\int_0^1 \alpha
=0$.
\end{proof}

\begin{rem}
Building on Remark \ref{rem:fieldcomment}, let us also describe the 
quasi-isomorphism $\kappa$ in \eqref{eqn:kappacochainmap} using the dual language of 
fields instead of observables. It turns out to be simpler to interpret 
the quasi-inverse cochain map $\lambda$ in \eqref{eqn:kappaquasiinverse}.
The $2$-dimensional fields are described by the boundary conditioned $1$-shifted
de Rham complex $\Omega_{\partial}^\bullet\big(W\times[0,1)\big)[1]$
and the $1$-dimensional fields by the complex $\Omega^1(W)$ that is concentrated
in degree $0$. (One should think of elements $J\in \Omega^1(W)$ as the Abelian 
currents of a chiral free boson.)
The dual of $\lambda$ is then given by the pullback map 
$\pr_{W}^\ast : \Omega^1(W)\to \Omega_{\partial}^\bullet\big(W\times[0,1)\big)[1]$
along the projection $\pr_W : W \times[0,1)\to W$. Thinking of $W$ as the boundary
$W\times\{0\}$ of $W \times[0,1)$, one can interpret this map as the constant 
evolution of boundary values in $\Omega^1(W)$ along the $[0,1)$-factor.
Hence, the origin of the quasi-isomorphism \eqref{eqn:kappaquasiinverse}, and hence of \eqref{eqn:kappacochainmap},
lies in the fact that Chern-Simons theory on $W \times[0,1)$ is locally constant (i.e.\ topological)
normal to the boundary.
\end{rem}

Pulling back the unshifted linear Poisson structures \eqref{eqn:sigmapoisson} along the quasi-isomorphisms
\eqref{eqn:kappacochainmap} defines a family of unshifted linear Poisson structures 
\begin{subequations}\label{eqn:upsilon}
\begin{flalign}
\upsilon_{(0)}^W := \sigma_{(0)}^{W\times[0,1)}\circ (\kappa\otimes\kappa)
\,:\,\Omega_\cc^0(W)\otimes \Omega_\cc^0(W) \,\longrightarrow\,\bbR\quad,
\end{flalign}
for all $(W\subseteq \partial B_\pm)\in \Open(\partial B_\pm)$.
With a short calculation, one computes explicitly the value
of $\upsilon_{(0)}^W$ and finds
\begin{flalign}
\nn \upsilon_{(0)}^W(\varphi\otimes\psi) &= 
-\bigg(\int_W\Big(\varphi\wedge\dd \psi - \psi\wedge\dd \varphi\Big)\bigg)~
\int_0^1 \omega \int_r^1\omega \\[4pt]
&=-\frac{1}{2}\int_W\Big(\varphi\wedge\dd \psi - \psi\wedge\dd \varphi\Big) =
-\int_W\varphi\wedge\dd \psi\quad,
\end{flalign}
\end{subequations}
for all $\varphi,\psi\in \Omega_\cc^0(W)$.
Note that $\big(\Omega^0_\cc(W),\upsilon_{(0)}^W\big)$
is the usual Poisson vector space for the chiral free boson on $W$.
The cochain maps \eqref{eqn:kappacochainmap} then define the components of 
a natural transformation
\begin{flalign}\label{eqn:kappa}
\kappa\,:\, \big(\Omega^0_\cc,\upsilon_{(0)}\big)\,\Longrightarrow\, \big(\Omega^\bullet_{\partial,\cc}[1],\sigma_{(0)}\big)\big\vert_{\partial B_\pm}
\end{flalign}
of functors from $\Open(\partial B_\pm)$ to $\mathbf{PoCh}_\bbR$. This allows us to 
obtain our desired weakly equivalent model for the AQFT $\BBB_\pm\vert_{\partial B_\pm}$. 
Explicitly, consider the functor 
\begin{flalign}\label{eqn:CCCpm}
\CCC_\pm\, :=\, \CCR\big(\Omega^0_\cc,\upsilon_{(0)}\big)\,:\,\mathbf{Open}(\partial B_\pm) \,\longrightarrow\, \dgastAlg 
\end{flalign} 
defined by post-composing the functor $(\Omega^0_\cc,\upsilon_{(0)}) : 
\mathbf{Open}(\partial B_\pm)\to\mathbf{PoCh}_\bbR$ with 
the canonical commutation relations functor
$\CCR : \mathbf{PoCh}_\bbR\to \dgastAlg$ from \cite[Section 5]{LinearYM}.
Note that $\CCC_\pm$ satisfies the Einstein causality axiom 
on $\mathbf{Open}(\partial B_\pm)$ because $(\Omega^0_\cc,\upsilon_{(0)})$ satisfies 
a classical analogue of the Einstein causality axiom.
Indeed, for each orthogonal (i.e.\ disjoint) pair 
$(W_1\subseteq W)\perp (W_2\subseteq W)$ in $\mathbf{Open}(\partial B_\pm)$, 
$\upsilon^W_{(0)}\circ\big(\mathrm{ext}_{W_1}^{W}\otimes\mathrm{ext}_{W_2}^W\big)=0$ 
vanishes manifestly, see \eqref{eqn:upsilon}. 
\begin{cor}\label{cor:chiralbosonQFT}
The morphism (i.e.\ natural transformation)
\begin{flalign}
\CCR(\kappa)\,:\,\CCC_\pm \,\stackrel{\sim}{\Longrightarrow}\, \BBB_\pm\vert_{\partial B_\pm}
\end{flalign}
of AQFTs on $\Open(\partial B_\pm )$, defined by applying the 
$\CCR$-functor to the natural transformation \eqref{eqn:kappa}, is a natural weak equivalence. Hence,
the AQFT $\CCC_\pm$ from \eqref{eqn:CCCpm}, which describes the chiral free boson, provides
a weakly equivalent model for the boundary restriction $\BBB_\pm\vert_{\partial B_\pm}$
of the dimensionally reduced bulk/boundary AQFT from Corollary \ref{cor:dimreducedAQFT}.
\end{cor}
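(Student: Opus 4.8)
The plan is to follow exactly the template of the proof of Corollary~\ref{cor:dimreducedAQFT}: the assertion will be reduced to two facts, namely that $\kappa$ is a natural transformation of $\mathbf{PoCh}_\bbR$-valued functors with every component a quasi-isomorphism, and that the $\CCR$-functor preserves quasi-isomorphisms.

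First I would check that $\kappa$ in \eqref{eqn:kappa} is indeed a morphism of functors $\Open(\partial B_\pm)\to\mathbf{PoCh}_\bbR$. Its components being cochain maps is already recorded just above the corollary. Naturality with respect to an inclusion $W\subseteq W^\prime$, i.e.\ with respect to the extension-by-zero maps $\ext$, follows at once from the coordinate formula \eqref{eqn:kappacochainmap}, since the de Rham differential, the wedge product with the pulled-back form $\omega$, and the restricted fiber integration $\int_r^1$ all commute with extension by zero along the product embeddings $W\times[0,1)\subseteq W^\prime\times[0,1)$. Compatibility of each $\kappa$ with the Poisson structures requires no separate verification: by \eqref{eqn:upsilon} the source Poisson structure $\upsilon_{(0)}^W$ is \emph{defined} as the pullback $\sigma_{(0)}^{W\times[0,1)}\circ(\kappa\otimes\kappa)$, so $\kappa$ is a $\mathbf{PoCh}_\bbR$-morphism by construction. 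Combining this with Lemma~\ref{lem:boundarysimplification}, which supplies the quasi-inverse $\lambda$ of \eqref{eqn:kappaquasiinverse} and the cochain homotopy $K$ of \eqref{eqn:Khomotopy} (both compatible with the boundary condition on $0$-forms), shows that $\kappa$ is a natural weak equivalence between the $\mathbf{PoCh}_\bbR$-valued functors $\big(\Omega^0_\cc,\upsilon_{(0)}\big)$ and $\big(\Omega^\bullet_{\partial,\cc}[1],\sigma_{(0)}\big)\big\vert_{\partial B_\pm}$.

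Finally I would apply the functor $\CCR:\mathbf{PoCh}_\bbR\to\dgastAlg$ component-wise. By \cite[Proposition~5.3]{LinearYM} this functor sends quasi-isomorphisms of Poisson cochain complexes to quasi-isomorphisms of dg-algebras, so $\CCR(\kappa):\CCC_\pm\Rightarrow\BBB_\pm\vert_{\partial B_\pm}$ is a natural transformation whose every component is a quasi-isomorphism, i.e.\ a natural weak equivalence, which is the claim; since $\big(\Omega^0_\cc(W),\upsilon_{(0)}^W\big)$ is the standard Poisson vector space of the chiral free boson, this identifies $\BBB_\pm\vert_{\partial B_\pm}$ up to weak equivalence with the chiral free boson AQFT $\CCC_\pm$. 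I do not expect a genuine obstacle at this stage: the only substantive input is Lemma~\ref{lem:boundarysimplification} (the compactly-supported Poincar\'e-type computation that constructs $\lambda$ and $K$ while respecting the boundary condition), and that lemma is already in hand, so the corollary is a formal consequence of it together with the functoriality of $\CCR$ and its preservation of weak equivalences.
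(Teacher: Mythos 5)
Your proof is correct and follows the same route as the paper's: the paper's proof is the one-line citation of Lemma~\ref{lem:boundarysimplification} plus the fact that $\CCR$ preserves weak equivalences (\cite[Proposition~5.3]{LinearYM}), which is exactly the skeleton you identify. The additional detail you supply --- naturality of $\kappa$ with respect to extension-by-zero and its automatic $\mathbf{PoCh}_\bbR$-compatibility since $\upsilon_{(0)}$ is defined as a pullback --- is consistent with the text preceding the corollary, where the paper asserts these facts without further elaboration.
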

\begin{proof}
This follows directly from Lemma \ref{lem:boundarysimplification} and 
the fact that the $\CCR$-functor preserves weak equivalences, see \cite[Proposition 5.3]{LinearYM}.
\end{proof}

\paragraph{Interplay between bulk and boundary:} The AQFT $\BBB_\pm : \mathbf{Open}(B_\pm)\to\dgastAlg$
from Corollary \ref{cor:dimreducedAQFT} encodes also the interplay between Chern-Simons
observables in the bulk and chiral free boson observables on the boundary.
For instance, given any open subset of the form $W\times [0,1) \subseteq B_\pm$ 
in the tubular neighborhood of the boundary, we obtain a morphism
\begin{flalign}
\iota \,:\, W\times \big(\tfrac{1}{2},1\big) ~\longrightarrow ~ W\times [0,1)
\end{flalign} 
in the category $\mathbf{Open}(B_\pm)$ whose source is an interior region in the bulk and 
whose target is a region that intersects the boundary. (The particular choice of the open interval
$(\tfrac{1}{2},1)\subseteq [0,1)$ is inessential and purely conventional. Since Chern-Simons theory
is a topological QFT in the bulk, we could choose equally well another open interval $(a,b)\in [0,1)$.)
Applying the functor 
$\BBB_\pm : \mathbf{Open}(B_\pm)\to\dgastAlg$ on this morphism and using the natural weak equivalence
from Corollary \ref{cor:chiralbosonQFT} yields the following zig-zag
\begin{flalign}\label{eqn:zigzag}
\xymatrix@C=3em{
\BBB_\pm\big(W\times \big(\tfrac{1}{2},1\big)\big)\ar[r]^-{\BBB_\pm(\iota)}~&~ 
\BBB_\pm\big(W\times [0,1)\big) ~&~\ar[l]^-{\sim}_-{\CCR(\kappa)}\CCC_\pm(W)
}
\end{flalign}
in $\dgastAlg$, where $\sim$ indicates weak equivalence. This zig-zag allows us to relate the
Chern-Simons observables $\BBB_\pm\big(W\times \big(\tfrac{1}{2},1\big)\big)$
in the interior region to the chiral free boson observables $\CCC_\pm(W)$ in the $1$-dimensional
boundary subset $W\subseteq \partial B_\pm$. 
\begin{ex}\label{ex:holonomy}
Let us illustrate this relationship between bulk and boundary observables
through an example. Consider the case where the boundary region $W =\bbS^1$ is a circle.
Linear Chern-Simons theory on $\bbS^1\times (\tfrac{1}{2},1)$ contains observables that
measure the holonomy of flat $\bbR$-connections along non-trivial cycles. A particular choice
of such observable is given by the linear observable
\begin{flalign}\label{eqn:omegaholonomy}
\omega \,:=\, \pr^\ast_{(\frac{1}{2},1)}(\omega) \,\in\,\Omega^1_\cc\big(\bbS^1\times \big(\tfrac{1}{2},1\big)\big) \,\subseteq \,
\BBB_\pm\big(\bbS^1\times \big(\tfrac{1}{2},1\big)\big)
\end{flalign}
that is obtained by pulling back along the projection 
$\pr_{(\frac{1}{2},1)}  : \bbS^1\times(\tfrac{1}{2},1) \to (\tfrac{1}{2},1)$
a compactly supported $1$-form $\omega\in\Omega^1_\cc\big((\tfrac{1}{2},1)\big)$ 
with unit integral $\int_{\frac{1}{2}}^1\omega=1$.
To see that this observable indeed measures the holonomy of flat connections, let us
consider the family of flat $\bbR$-connections 
$A = \alpha\,\dd \phi\in \Omega^1\big(\bbS^1\times (\tfrac{1}{2},1)\big)$, 
where $\alpha =\int_{\bbS^1}A\in\bbR$ characterizes the holonomy 
and $\phi$ is the angle coordinate on $\bbS^1$. Evaluating
the observable $\omega$ on $A$ gives
\begin{flalign}
\int_{\bbS^1\times (\frac{1}{2},1)} \omega\wedge A \,=\, - \bigg(\int_{\bbS^1} A\bigg) ~\bigg(\int_{\frac{1}{2}}^1\omega\bigg)\,=\,  - \int_{\bbS^1} A\,=\,-\alpha\quad,
\end{flalign}
i.e.\ the observable $\omega$ distinguishes between the different flat connections.
Applying the morphism $\BBB_\pm(\iota)$ from \eqref{eqn:zigzag} allows us to consider the observable 
$\omega$ as an element in $\Omega^1_\cc\big(\bbS^1\times [0,1)\big)$. 
Using also the quasi-inverse of $\kappa$, given by the cochain map 
$\lambda$ in \eqref{eqn:kappaquasiinverse}, we can assign to the latter observable
the element
\begin{flalign}
\lambda(\omega) = \int_0^1\omega = 1\,\in\,\Omega_\cc^0(\bbS^1)\subseteq \CCC_\pm(\bbS^1)\quad,
\end{flalign}
which by construction satisfies $\omega - \kappa\lambda(\omega) = -\dd K(\omega)$ 
for $K$ the cochain homotopy displayed in \eqref{eqn:Khomotopy}. In words, this means that
the linear Chern-Simons observable \eqref{eqn:omegaholonomy}, 
which measures the holonomy of flat connections, can be 
presented (up to exact terms) by the linear observable 
$\lambda(\omega) =1\in  \CCC_\pm(\bbS^1)$ of the chiral free boson, 
which measures its zero mode.
\end{ex}


\section*{Acknowledgments}
The work of M.B.\ is fostered by 
the National Group of Mathematical Physics (GNFM-INdAM (IT)). 
A.G-S.\ is supported by a Royal Society Enhancement Grant (RF\textbackslash ERE\textbackslash 210053).
A.S.\ gratefully acknowledges the support of 
the Royal Society (UK) through a Royal Society University 
Research Fellowship (URF\textbackslash R\textbackslash 211015)
and an Enhancement Grant (RF\textbackslash ERE\textbackslash 210053).

%

\appendix

\section{\label{app:appendix}Technical details}

\subsection{\label{app:proper-flows}Proper \texorpdfstring{$\mathbb{R}$}{R}-actions}
\begin{lem} \label{lem:proper-flows-are-free}
Any proper smooth $\mathbb{R}$-action
$\Theta : \mathbb{R} \times N \to N$
on a smooth manifold $N$ is also free.
\end{lem}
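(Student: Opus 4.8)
The plan is to prove freeness by showing that every stabilizer subgroup is trivial. Fix a point $p \in N$ and let $G_p := \{ s \in \mathbb{R} : \Theta(s,p) = p \} \subseteq \mathbb{R}$ be its stabilizer; since $\Theta$ is a group action, $G_p$ is a subgroup of $\mathbb{R}$. The first step is to recall the shear map $\theta : \mathbb{R} \times N \to N \times N$, $(s,q) \mapsto (q, \Theta(s,q))$ from \eqref{eqn:shear_map} and to observe, directly from its definition, that $\theta^{-1}\big( \{ (p,p) \} \big) = G_p \times \{ p \}$.

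The second step uses that $\Theta$ is proper, which by definition means that $\theta$ is a proper map. Since the singleton $\{ (p,p) \} \subseteq N \times N$ is compact, its preimage $\theta^{-1}\big( \{ (p,p) \} \big) = G_p \times \{ p \}$ is compact, and therefore $G_p$ is a compact subset of $\mathbb{R}$. In particular $G_p$ is bounded.

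The final step is a short group-theoretic argument: a bounded subgroup of $\mathbb{R}$ must be trivial, because if it contained some $s \neq 0$ then it would also contain the unbounded subset $\{ n s : n \in \mathbb{Z} \}$, a contradiction. Hence $G_p = \{ 0 \}$. As $p \in N$ was arbitrary, all stabilizers are trivial, i.e.\ the action $\Theta$ is free.

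I do not anticipate any real obstacle; the only point meriting care is the identification $\theta^{-1}(\{(p,p)\}) = G_p \times \{p\}$, which is simply an unwinding of definitions. As an alternative route, one could invoke the general fact that proper topological group actions have compact stabilizers, combined with the classification of closed subgroups of $\mathbb{R}$, but the self-contained argument above is more economical and avoids appealing to that classification.
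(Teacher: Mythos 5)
Your proof is correct and takes essentially the same approach as the paper's: both arguments boil down to the observation that properness forces the stabilizer $G_p$ (which the paper calls $\mathbb{R}_{\{p\}}$) to be compact, while a nontrivial subgroup of $\mathbb{R}$ must be unbounded. The only cosmetic difference is that the paper argues by contrapositive and appeals to uniqueness of integral curves to establish periodicity of $\Theta(\,\cdot\,,p)$, whereas you derive the same fact more directly from the group-action axiom (the subgroup structure of $G_p$), which is a slightly cleaner route to the same conclusion.
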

\begin{proof}
For each $p \in N$, the curve $\Theta(\,\cdot\,, p) : \mathbb{R} \to N$ 
is the integral curve through $p$ of the vector field generating the flow $\Theta$.
Assume the $\mathbb{R}$-action $\Theta$ is not free,
so there exists some $\Delta s \neq 0$ and $p \in N$ with $\Theta(\Delta s, p) = p$.
Then by uniqueness of integral curves, $\Theta(\,\cdot\, , p)$
is periodic, i.e.\ $\Theta(n \Delta s,p) = p$ for all $n \in \mathbb{Z}$.
It follows that $\mathbb{R}_{\{p\}}:=\{s \in \mathbb{R} \,\vert\, \Theta (s, \{p\}) \cap \{p\} \neq \emptyset\}
\supseteq \{ n \Delta s \,vert\, n \in \mathbb{Z} \}$
is not compact in $\mathbb{R}$. Hence, the $\mathbb{R}$-action $\Theta$ is not proper.
\end{proof}

\begin{cor} \label{cor:proper_flow_gives_principal_bundle}
For any proper smooth $\mathbb{R}$-action $\Theta : \mathbb{R} \times N \to N$ on a smooth manifold $N$, 
the quotient $\pi : N \to N / \mathbb{R}$ by $\Theta$ is a 
smooth principal $\mathbb{R}$-bundle with right $\mathbb{R}$-action $\Theta$.
\end{cor}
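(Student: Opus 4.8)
The plan is to deduce this from the classical Quotient Manifold Theorem for free and proper Lie group actions. First I would check that its hypotheses are met in our situation: by Lemma \ref{lem:proper-flows-are-free}, properness of $\Theta$ already forces the $\bbR$-action to be free, and the notion of ``proper'' used throughout this paper---namely that the shear map $\theta:\bbR\times N\to N\times N$, $(s,p)\mapsto(p,\Theta(s,p))$, is a proper map---is exactly the standard definition of a proper action of the Lie group $\bbR$ on $N$. Hence $\Theta$ is a free and proper smooth $\bbR$-action, which puts us squarely in the setting of the theorem.

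The second and main step is to invoke the Quotient Manifold Theorem (sometimes attributed to Godement; see e.g.\ standard references on Lie group actions on smooth manifolds): for a free and proper smooth action of a Lie group $G$ on a smooth manifold $N$, the orbit space $N/G$ carries a unique smooth structure for which it is Hausdorff and second-countable and the quotient map $\pi:N\to N/G$ is a smooth submersion with fibers the $G$-orbits; moreover local triviality (via the slice theorem) upgrades $\pi$ to a smooth principal $G$-bundle. Specializing to $G=\bbR$ yields that $\pi:N\to N/\bbR$ is a smooth principal $\bbR$-bundle with one-dimensional fibers, each orbit being a freely-acted copy of $\bbR$. Since $\bbR$ is abelian, the action $\Theta$ may equivalently be regarded as a right action via $p\cdot s:=\Theta(s,p)$, which is the structure action of this bundle, matching the phrasing of the statement.

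I do not expect a genuine obstacle here: the content is essentially bookkeeping---matching the paper's definition of properness to the hypothesis of a well-known theorem and then quoting it. The one mild point worth a sentence in the write-up is that $N$ is allowed to be a manifold with boundary (we will apply this with $N=M$ and $N=\partial M$); since the complete generating vector field of $\Theta$ is necessarily tangent to $\partial N$, the action preserves both $\mathrm{Int}\,N$ and $\partial N$, so the theorem applies to each separately and the resulting smooth structures on $\mathrm{Int}(N)/\bbR$ and $(\partial N)/\bbR$ assemble into the required principal $\bbR$-bundle structure on $N/\bbR$.
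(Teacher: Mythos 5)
Your proposal is correct and matches the paper's intended argument: the paper states this as a corollary immediately after \cref{lem:proper-flows-are-free} with no further proof, the implicit reasoning being exactly what you give---properness yields freeness by that lemma, and then the Quotient Manifold Theorem for free and proper Lie group actions delivers the principal $\bbR$-bundle structure. Your final remark about $N$ having boundary is a worthwhile bit of care that the paper leaves tacit, though one should be slightly more explicit that the tangency of the generating vector field to $\partial N$ follows from $\widehat{\Theta}_\pm$ restricting to an action on $\partial N$, and that assembling the interior and boundary quotients into a single smooth manifold with boundary can be done via $\Theta$-adapted collar neighborhoods.
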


\begin{lem} \label{lem:principal_R-bundles_are_oriented}
The canonical orientation on $\mathbb{R}$ induces a canonical bundle orientation
on any principal $\mathbb{R}$-bundle $\pi : N \to B$.
\end{lem}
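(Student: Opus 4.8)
The plan is to exhibit the bundle orientation explicitly as the one for which the fundamental vector field of the $\mathbb{R}$-action is positively oriented along the fibres, and then to check that this is well-defined (i.e.\ independent of any choices) and locally trivialisable.

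\textbf{Step 1: the vertical orientation.} Let $\pi : N \to B$ be a principal $\mathbb{R}$-bundle with right $\mathbb{R}$-action $\Theta$. The fibres of $\pi$ are the $\mathbb{R}$-orbits, each diffeomorphic to $\mathbb{R}$ via the action, and the vertical tangent bundle $V\pi := \ker(\dd\pi) \subseteq TN$ is a line bundle. The action provides a canonical nowhere-vanishing section $X \in \Gamma^\infty(V\pi)$, namely the fundamental vector field $X_p := \tfrac{\dd}{\dd s}\big\vert_{s=0}\,\Theta(p,s)$, whose value is tangent to the orbit through $p$ and nonzero (here I use that the $\mathbb{R}$-action is free, which holds for us by Lemma \ref{lem:proper-flows-are-free}). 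A nowhere-vanishing section of a line bundle trivialises it and in particular orients it: declare a basis vector $v$ of $V_p\pi$ to be positively oriented iff $v = \lambda X_p$ with $\lambda > 0$. This defines a smooth fibrewise orientation $\mathfrak{o}^{\mathrm{vert}}$ on $V\pi$, i.e.\ a bundle orientation in the sense used for fibre integration. Since $X$ is canonically determined by the principal $\mathbb{R}$-action (no choices), so is $\mathfrak{o}^{\mathrm{vert}}$.

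\textbf{Step 2: compatibility with local trivialisations and with the orientation of $\mathbb{R}$.} It remains only to observe that this orientation is the expected one, i.e.\ that in any local trivialisation $\pi^{-1}(U) \cong U \times \mathbb{R}$ (which is moreover $\mathbb{R}$-equivariant, the action being translation in the $\mathbb{R}$-factor), the positively oriented direction in the fibre is $+\partial_\tau$ for $\tau$ the standard coordinate on $\mathbb{R}$. This is immediate: under an equivariant trivialisation the fundamental vector field $X$ is carried to $\partial_\tau$, and the canonical orientation of $\mathbb{R}$ is precisely the one for which $\partial_\tau$ is positive. Hence the transition functions of any principal $\mathbb{R}$-bundle, being $\mathbb{R}$-equivariant, preserve this fibrewise orientation, confirming that $\mathfrak{o}^{\mathrm{vert}}$ is a well-defined and consistent bundle orientation on all of $N$.

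Both steps are routine; there is no real obstacle, the only point requiring a moment's care being the appeal to freeness of the action in Step 1 to guarantee that $X$ is nowhere vanishing, together with the (standard) fact that equivariant local trivialisations exist for a principal $\mathbb{R}$-bundle so that the locally defined orientations glue. One concludes that $\pi : N \to B$ carries a canonical bundle orientation, induced from the canonical orientation of $\mathbb{R}$ via the principal $\mathbb{R}$-action, which is exactly the claim.
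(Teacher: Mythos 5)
Your proof is correct and takes essentially the same approach as the paper's: both orient the vertical tangent bundle by declaring the fundamental vector field of the principal $\mathbb{R}$-action to be positively oriented, transporting the canonical orientation of the Lie algebra $\mathbb{R}$ along the resulting fibrewise isomorphisms. One small remark: your appeal to Lemma \ref{lem:proper-flows-are-free} to justify that the fundamental vector field is nowhere vanishing is unnecessary here (and slightly off-target), since the present lemma concerns an arbitrary principal $\mathbb{R}$-bundle, where freeness of the action is part of the definition rather than a consequence of properness.
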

\begin{proof}
Recall that a bundle orientation on $\pi : N \to B$ can be expressed as
an orientation of the vector bundle $VN \subseteq TN$ of vertical vectors with respect to $\pi$,
see e.g.\ \cite[Section 7.4, Proposition II]{GHV}.
For any element $a \in \mathbb{R}$ in the Lie algebra of the structure group $\mathbb{R}$,
there is a fundamental vector field $A \in \Gamma^\infty(VN)$
on the total space $N$ of the principal bundle, see e.g.\ \cite[Section A.1]{Choquet-BruhatEtAl}.
These fundamental vector fields give $\mathbb{R}$-linear isomorphisms 
$\mathbb{R} \to V_pN\,,~a \mapsto A_p$ between the Lie algebra and the space of vertical vectors at any $p \in N$.
\sk
	
The canonical orientation on the Lie group $\mathbb{R}$ identifies 
bases $\{a\}$ of its Lie algebra $\mathbb{R}$ as positively oriented if $a > 0$.
The bundle orientation on $\pi : N \to B$ is then induced by transporting along 
the isomorphisms above, i.e.\ $\{A_p\}$ is positively oriented in $V_pN$ for $a > 0$. 
This is a smooth choice of positively oriented bases since the fundamental vector fields $A$ are smooth.
\end{proof}

Consider any fiber bundle $\pi : N \to B$ and denote by $n := \dim N$ and $b := \dim B$
the dimensions of, respectively, the total and base spaces.
Manifold orientations on $N$ and $B$ and a bundle orientation on $\pi: N \to B$ are called
\emph{compatible} (in the fiber-first convention) if,
for representative forms $\omega_{B} \in \Omega^{b}(B)$ and $\beta_\pi \in \Omega^{n-b}(N)$
of the base space and bundle orientations, the total space
orientation is represented by $\beta_\pi \wedge \pi^\ast \omega_{B}$.
\begin{cor} \label{cor:principal_R-bundles:total_base_orientations}
For any principal $\mathbb{R}$-bundle $\pi : N \to B$, 
compatibility with the canonical bundle orientation of Lemma \ref{lem:principal_R-bundles_are_oriented} 
defines a bijection between the set of orientations on $N$ and the set of orientations on $B$.
\end{cor}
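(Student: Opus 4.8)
The plan is to make the compatibility relation completely explicit in a trivialization, exhibit it there as a bijection, and then observe that the resulting bijection is in fact the intrinsic one. Since $\mathbb{R}$ is contractible, $\pi : N \to B$ admits a trivialization $N \cong \mathbb{R}\times B$; I fix one together with a global coordinate $\tau$ on the $\mathbb{R}$-factor. Trivializations of principal bundles are equivariant, so in these coordinates the structure $\mathbb{R}$-action is $(s,(\tau,p))\mapsto(\tau+s,p)$, whence the fundamental vector field of $1\in\mathbb{R}$ is $\partial_\tau$. By Lemma~\ref{lem:principal_R-bundles_are_oriented} the canonical bundle orientation of $\pi$ is therefore represented by $\dd\tau\in\Omega^1(N)$, so an orientation of $B$ represented by $\omega_B\in\Omega^b(B)$ is compatible (in the fiber-first convention) with precisely the orientation of $N$ represented by $\dd\tau\wedge\pi^\ast\omega_B$.

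First I would record that this prescription is unambiguous: rescaling $\omega_B$ by a positive function multiplies $\dd\tau\wedge\pi^\ast\omega_B$ by its pullback, and any representative of the canonical bundle orientation is a positive multiple of $\dd\tau$; hence there is a well-defined map $\Phi$ from orientations of $B$ to orientations of $N$ sending $[\omega_B]$ to $[\dd\tau\wedge\pi^\ast\omega_B]$, and this is exactly the assignment of the compatible $N$-orientation to a given $B$-orientation. To produce an inverse $\Psi$, given an orientation of $N$ represented by a volume form $\mu\in\Omega^n(N)$ I would contract with $\partial_\tau$ and restrict to a slice, setting $\Psi([\mu]) := \big[\,(\iota_{\partial_\tau}\mu)\vert_{\{0\}\times B}\,\big]$, where $\iota_{\partial_\tau}$ denotes the interior product; the restricted form is a nowhere-vanishing $b$-form on $B\cong\{0\}\times B$. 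Well-definedness needs two observations: independence of the representative $\mu$ is immediate, while independence of the slice follows because $(\iota_{\partial_\tau}\mu)\vert_{\{\tau\}\times B}$ is a continuous family of volume forms on $B$ and $\mathbb{R}$ is connected, so all of them lie in one orientation class.

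Next I would check $\Psi\circ\Phi=\mathrm{id}$ and $\Phi\circ\Psi=\mathrm{id}$. The first is the identity $\iota_{\partial_\tau}(\dd\tau\wedge\pi^\ast\omega_B)=\pi^\ast\omega_B$ (using $\iota_{\partial_\tau}\pi^\ast\omega_B=0$) restricted to $\{0\}\times B$. For the second, any $\mu\in\Omega^n(\mathbb{R}\times B)$ satisfies $\mu=\dd\tau\wedge\iota_{\partial_\tau}\mu$ (the remaining term $\iota_{\partial_\tau}(\dd\tau\wedge\mu)$ vanishes since $\mu$ is already of top degree), and $\beta:=\iota_{\partial_\tau}\mu$ is annihilated by $\iota_{\partial_\tau}$; by the connectedness argument of the previous paragraph, $\beta$ equals a positive smooth function times $\pi^\ast\big(\beta\vert_{\{0\}\times B}\big)$, so $[\mu]=\big[\dd\tau\wedge\pi^\ast(\beta\vert_{\{0\}\times B})\big]=\Phi(\Psi([\mu]))$. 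Thus $\Phi$ is a bijection; since it coincides with the compatibility assignment and the canonical bundle orientation of $\pi$ is intrinsic (Lemma~\ref{lem:principal_R-bundles_are_oriented}), $\Phi$ does not depend on the chosen trivialization, which proves the corollary.

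I expect the main obstacle to be purely organizational: keeping the fiber-first sign conventions straight, and invoking connectedness of the fiber $\mathbb{R}$ at exactly the two places where an orientation must be transported along the fiber direction. A more structural route that avoids trivializations would be to apply, to the short exact sequence of vector bundles $0\to VN\to TN\xrightarrow{\dd\pi}\pi^\ast TB\to 0$ over $N$, the standard fact that an orientation of any two of the three terms determines a compatible orientation of the third (via the isomorphism of determinant line bundles); one then uses the canonical orientation of $VN$ from Lemma~\ref{lem:principal_R-bundles_are_oriented} to match orientations of $TN$ with orientations of $\pi^\ast TB$, and identifies orientations of $\pi^\ast TB$ with orientations of $TB$ using that $\pi$ is surjective with connected fibers.
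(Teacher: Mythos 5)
Your argument is correct and shares the paper's central device, the trivialization $N\cong\mathbb{R}\times B$ induced by contractibility of $\mathbb{R}$, but the execution differs usefully. The paper first disposes of the non-orientable case (via the trivialization, $N$ is orientable iff $B$ is), then in the orientable case produces the inverse indirectly: start from an arbitrary orientation on $B$ and reverse it on those components where compatibility fails. You instead exhibit an explicit inverse formula, $\Psi([\mu]) = [(\iota_{\partial_\tau}\mu)\vert_{\{0\}\times B}]$, and verify both composites equal the identity, invoking connectedness of the fiber $\mathbb{R}$ precisely where orientation must be transported along the fiber direction. This is more constructive and has the pleasant side effect of handling non-orientability implicitly (if $B$ were non-orientable but $N$ orientable, $\Psi$ would produce a nowhere-vanishing top form on $B$, a contradiction, and symmetrically via $\Phi$), whereas the paper treats that case as a separate preliminary step. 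Your remark that the bijection is trivialization-independent because it coincides with the compatibility relation, which only references the intrinsic canonical bundle orientation, is a point the paper leaves tacit. Your alternative sketch via the determinant-line-bundle isomorphism coming from $0\to VN\to TN\to\pi^\ast TB\to 0$ is a genuinely different, more structural route that avoids trivializations entirely; either would serve.
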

\begin{proof}
Consider first the case where either set of orientations is empty,
i.e.\ either $N$ or $B$ is non-orientable. Since any principal $\mathbb{R}$-bundle is trivializable 
$N \cong \mathbb{R} \times B$ and $\mathbb{R}$ is orientable, $N$ is orientable if and only $B$ 
is orientable. Hence, in this case both sets of orientations are empty and one has the claimed bijection.
\sk

Consider now the case where both $N$ and $B$ are orientable.
It is trivial to produce a compatible orientation on $N$ from any orientation on $B$, 
using the definition of compatibility. For an inverse, when given an orientation on $N$, 
choose an arbitrary orientation on $B$ and transform it to a compatible one by reversing 
it on those components where it is incompatible.
\end{proof}

\subsection{\label{app:properties_of_Theta-pasts-futures}Properties of \texorpdfstring{$\Theta$}{Theta}-future/past sets}
The $\Theta$-future/past sets entering property \ref{def:Greenshomotopies:support} 
of Definition \ref{def:Greenshomotopies} may be expressed in terms of a binary 
relation $\leq_\Theta$ on $N$ associated to the $\bbR$-action $\Theta : \mathbb{R} \times N \to N$, defined as
\begin{flalign} \label{eqn:binary_relation_from_flow}
p \leq_\Theta q \quad\text{if}\quad \exists s \geq 0 \text{ such that } q = \Theta(s,p)\quad.
\end{flalign}
The $\Theta$-future/past $J^{\uparrow / \downarrow}_\Theta(S)$ of a subset
$S \subseteq N$ are respectively the successor- and predecessor-sets of $S$ 
with respect to $\leq_\Theta$. Compare this to Remark \ref{rem:ne/sw_relation}.
\begin{lem}
For any proper smooth $\mathbb{R}$-action $\Theta : \mathbb{R} \times N \to N$ on a smooth manifold $N$,
the (graph of the) relation $\leq_\Theta$ defined in \eqref{eqn:binary_relation_from_flow} 
is a closed subset of $N \times N$.
\end{lem}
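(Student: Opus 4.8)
The plan is to exhibit the graph of $\leq_\Theta$ as the image of a closed map, using the shear map $\theta$ from \eqref{eqn:shear_map} restricted to non-negative parameters. Recall that $p \leq_\Theta q$ if and only if $q = \Theta(s,p)$ for some $s \geq 0$, which says precisely that $(p,q) = \theta(s,p)$ for some $s \in \bbR^{\geq 0}$. Hence the graph of $\leq_\Theta$ is exactly $\theta\big(\bbR^{\geq 0}\times N\big) \subseteq N\times N$, where $\theta : \bbR\times N \to N\times N$, $\theta(s,p) = (p,\Theta(s,p))$.

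First I would record that $\Theta$ being a proper $\bbR$-action means by definition that its shear map $\theta$ is a proper map. The restriction $\theta\vert_{\bbR^{\geq 0}\times N} : \bbR^{\geq 0}\times N \to N\times N$ is then also proper: $\bbR^{\geq 0}\times N$ is a closed subset of $\bbR\times N$, and the restriction of a proper map to a closed subset is proper (preimages of compact sets intersected with the closed subset are closed subsets of compact sets, hence compact). Next, since $N\times N$ is locally compact Hausdorff (being a manifold), every proper continuous map into it is closed. Therefore $\theta\vert_{\bbR^{\geq 0}\times N}$ is a closed map, and in particular its image, which equals the graph of $\leq_\Theta$, is a closed subset of $N\times N$.

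The only step that needs a little care is the claim that a proper continuous map into a locally compact Hausdorff space is closed; I would either cite a standard reference or give the one-line argument: if $C$ is closed in the domain and $y$ is in the closure of its image, pick a compact neighbourhood $K$ of $y$; then $f^{-1}(K)\cap C$ is compact, its image is compact hence closed and contains all of the image of $C$ near $y$, forcing $y$ to lie in $f(C)$. I do not expect any real obstacle here — everything is a direct consequence of the definition of a proper $\bbR$-action together with elementary point-set topology, and the identification of the graph with $\theta\big(\bbR^{\geq 0}\times N\big)$ is immediate from the definitions. This lemma is exactly the ingredient needed for Corollary \ref{cor:Theta_future_past_of_compact_is_closed} (that $J^{\uparrow/\downarrow}_\Theta(K)$ is closed for compact $K$), which was invoked in the proof of Proposition \ref{propo:Gexistence}.
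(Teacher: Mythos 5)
Your proof is correct and takes essentially the same approach as the paper: identify the graph of $\leq_\Theta$ as $\theta([0,\infty)\times N)$, use properness of $\theta$ (the definition of the $\bbR$-action being proper) together with the fact that $N\times N$ is locally compact Hausdorff to conclude closedness. The only cosmetic difference is that you restrict $\theta$ to $\bbR^{\geq 0}\times N$ and show that restriction is proper (hence closed, hence its image is closed), whereas the paper simply observes that $\theta$ itself is a closed map and that $[0,\infty)\times N$ is closed in $\bbR\times N$, so its image under $\theta$ is closed; the two routes are equivalent.
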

\begin{proof}
The (graph of the) relation $\leq_\Theta \,\subseteq N \times N$ of 
\eqref{eqn:binary_relation_from_flow} may be expressed as the image
\begin{flalign}
\leq_\Theta  \;= \theta \left([0,\infty) \times N\right) \subseteq N \times N
\end{flalign}
of a closed set under the shear map $\theta : \mathbb{R} \times N \to N \times N$ 
associated to $\mathbb{R}$-action $\Theta$ as per \eqref{eqn:shear_map}.
By hypothesis, $\Theta$ is a proper $\mathbb{R}$-action i.e.\ $\theta$ is a proper map.
Since $N \times N$ is a manifold, so Hausdorff and locally compact,
it follows by \cite[Theorem A.57]{LeeSmooth} that $\theta$ is a closed map.
Hence the relation $\leq_\Theta$ is a closed subset in $N \times N$.
\end{proof}

The following is a general fact about binary relations on topological spaces.
\begin{lem} \label{lem:closed_relations}
Let $X$ be a topological space and $R \subseteq X \times X$ a binary relation on $X$.
For any subset $S \subseteq X$, denote the successor-set of $S$ as
\begin{subequations}
\begin{flalign}
R^\uparrow(S)\, :=\, \left\{ x \in X \;\middle\vert\; (y,x) \in R \text{ for some } y \in S \right\}
\end{flalign}
and the predecessor-set of $S$ as
\begin{flalign}
R^\downarrow(S) \, :=\, \left\{ x \in X \;\middle\vert\; (x,y) \in R \text{ for some } y \in S \right\}\quad.
\end{flalign}
\end{subequations}
If $R$ is a closed subset of the product space $X \times X$, 
then $R^{\uparrow}(K)$ and $R^{\downarrow}(K)$ are closed for $K \subseteq X$ compact.
\end{lem}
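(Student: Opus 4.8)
The plan is to realise the successor- and predecessor-sets as images of closed subsets under projections along a compact factor, and then invoke the classical fact that projection away from a compact space is a closed map.

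First I would fix notation: write $p_1, p_2 : X \times X \to X$ for the two projection maps. Unwinding the definitions, one has
\begin{flalign}
R^{\uparrow}(K) \,=\, p_2\big(R \cap (K \times X)\big)\quad,\qquad R^{\downarrow}(K) \,=\, p_1\big(R \cap (X \times K)\big)\quad.
\end{flalign}
Indeed, $x \in R^{\uparrow}(K)$ precisely when there is some $y \in K$ with $(y,x) \in R$, i.e.\ when $x$ lies in the image under $p_2$ of $R \cap (K \times X)$; the identity for $R^{\downarrow}(K)$ is symmetric. Since $R$ is by hypothesis closed in $X \times X$, the intersection $R \cap (K \times X)$ is a closed subset of the subspace $K \times X$, and likewise $R \cap (X \times K)$ is closed in $X \times K$.

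The key step is then that the restricted projection $p_2\vert_{K \times X} : K \times X \to X$ is a closed map, because $K$ is compact; this is the standard Kuratowski/tube-lemma statement (see e.g.\ \cite[Theorem A.57]{LeeSmooth} or its proof). Applying this closed map to the closed set $R \cap (K \times X)$ shows that $R^{\uparrow}(K)$ is closed in $X$, and the argument for $R^{\downarrow}(K)$ is identical using the closed map $p_1\vert_{X \times K} : X \times K \to X$.

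I do not expect any genuine obstacle here; the only point needing care is the appeal to the closed projection lemma for a compact factor. If a self-contained argument is preferred over a citation, I would recall its one-line proof: given a closed $C \subseteq K \times X$ and a point $x \notin p_2(C)$, for each $y \in K$ choose open neighbourhoods $U_y \ni y$ and $V_y \ni x$ with $(U_y \times V_y) \cap C = \emptyset$; cover $K$ by finitely many $U_{y_1}, \dots, U_{y_n}$ and set $V := \bigcap_{i=1}^n V_{y_i}$, which is an open neighbourhood of $x$ disjoint from $p_2(C)$, proving $p_2(C)$ closed.
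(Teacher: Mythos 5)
Your proof is correct. However, the paper itself does not supply a proof of this lemma; it simply defers to the literature, citing \cite[Chapter 1, Proposition 4]{Nachbin} for partial orders and \cite[Theorem 4.12]{Minguzzi} for general relations in the Lorentzian setting. Your argument therefore provides a self-contained replacement rather than following a proof in the paper. The route you take is the natural one: writing $R^{\uparrow}(K) = p_2\big(R \cap (K \times X)\big)$ and $R^{\downarrow}(K) = p_1\big(R \cap (X \times K)\big)$ and then applying the Kuratowski closed-projection lemma along the compact factor $K$. All the intermediate steps check out — in particular, $R \cap (K \times X)$ is indeed closed in the subspace $K \times X$, and the subspace topology on $K \times X$ inherited from $X \times X$ agrees with the product topology, so the appeal to closedness of $p_2\vert_{K \times X}$ is legitimate. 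Your inlined proof of the closed-projection lemma at the end is also correct and makes the argument fully self-contained; this has the minor pedagogical advantage over the paper's bare citations that the reader need not track down a Lorentzian-geometry reference (Minguzzi) to extract a purely point-set-topological fact.
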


This is proven for partial orders $R$ in \cite[Chapter 1, Proposition 4]{Nachbin}.
A proof applicable to general binary relations is given for 
\cite[Theorem 4.12]{Minguzzi}, which treats the special case of the causal 
relation on a Lorentzian manifold.
\begin{cor} \label{cor:Theta_future_past_of_compact_is_closed}
For any proper smooth $\mathbb{R}$-action $\Theta : \mathbb{R} \times N \to N$ on a smooth manifold $N$,
the $\Theta$-future/past sets $J^{\uparrow / \downarrow}_\Theta(K)$ of 
\eqref{eqn:Theta-past-future-sets} are closed for $K \subseteq N$ compact.
\end{cor}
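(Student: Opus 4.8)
The plan is to assemble the corollary directly from the two lemmas of this subsection. First I would record the purely set-theoretic identification
\[
J^{\uparrow}_\Theta(K) \,=\, (\leq_\Theta)^{\uparrow}(K)\quad,\qquad J^{\downarrow}_\Theta(K) \,=\, (\leq_\Theta)^{\downarrow}(K)\quad,
\]
where $\leq_\Theta$ is the binary relation on $N$ from \eqref{eqn:binary_relation_from_flow} and $(\leq_\Theta)^{\uparrow/\downarrow}$ denote the successor-/predecessor-set operations appearing in Lemma \ref{lem:closed_relations}. This is immediate upon unwinding the definitions together with \eqref{eqn:Theta-past-future-sets}: a point $q$ lies in $J^{\uparrow}_\Theta(K) = \bigcup_{p \in K} J^{\uparrow}_\Theta(p)$ precisely when there exists $p \in K$ with $q = \Theta(s,p)$ for some $s \geq 0$, i.e.\ with $p \leq_\Theta q$, which is exactly the defining condition of $(\leq_\Theta)^{\uparrow}(K)$; the past case is symmetric.

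Second, I would invoke the lemma proved just above, that properness of the $\mathbb{R}$-action $\Theta$ forces the graph of $\leq_\Theta$ to be a closed subset of $N \times N$. Feeding this into Lemma \ref{lem:closed_relations}, applied with the topological space $X = N$, with $R$ the (now closed) relation $\leq_\Theta$, and with the compact subset $K \subseteq N$, yields that both $(\leq_\Theta)^{\uparrow}(K)$ and $(\leq_\Theta)^{\downarrow}(K)$ are closed in $N$. Combined with the identification of the first step, this is precisely the assertion that $J^{\uparrow/\downarrow}_\Theta(K)$ are closed.

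I do not anticipate any genuine obstacle here: the statement is an immediate corollary, and the only place where properness of $\Theta$ is actually used is already encapsulated in the preceding lemma (where it enters through the fact that a proper smooth map into a Hausdorff, locally compact space is closed, so that $\leq_\Theta = \theta([0,\infty) \times N)$ is the image of a closed set under a closed map and hence closed). Consequently, the sole thing requiring care in the proof of the corollary itself is the bookkeeping of the first paragraph, matching our $J^{\uparrow/\downarrow}_\Theta$-notation to the successor/predecessor notation of Lemma \ref{lem:closed_relations}.
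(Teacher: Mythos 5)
Your proof is correct and coincides with the argument the paper intends: it identifies $J^{\uparrow/\downarrow}_\Theta(K)$ with the successor/predecessor sets of $K$ under the relation $\leq_\Theta$ (as the paper states in the text preceding the corollary), uses properness to get closedness of $\leq_\Theta$ from the first lemma, and then applies Lemma~\ref{lem:closed_relations}. The only addition you make is to spell out the small bookkeeping check, which is fine.
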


\begin{lem} \label{lem:intersect-future-compact-past-Cauchy}
Let $N$ be a smooth manifold with a proper $\mathbb{R}$-action $\Theta : \mathbb{R} \times N \to N$.
Let further $K\subseteq N$ be a compact subset and $\sigma$ be a smooth section of the 
quotient $\pi : N \to N / \mathbb{R}$ by $\Theta$. Denote by $\Sigma$ the image of $\sigma$.
Then the subsets $J^\uparrow_\Theta(K) \cap J^\downarrow_\Theta (\Sigma)$ and 
$J^\downarrow_\Theta(K) \cap J^\uparrow_\Theta(\Sigma)$ of $N$ are both compact.
\end{lem}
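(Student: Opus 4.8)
The plan is to exploit the principal bundle structure supplied by Corollary \ref{cor:proper_flow_gives_principal_bundle}, which says that the quotient $\pi : N \to N/\bbR =: B$ is a smooth principal $\bbR$-bundle; since $\bbR$ is contractible this bundle is trivializable, so I would fix a trivialization $N \cong \bbR \times B$ with a global coordinate $\tau$ on the $\bbR$-factor. By equivariance of the trivialization the action becomes translation, $\Theta(s,(\tau,p)) = (\tau+s,p)$, and the image $\Sigma$ of the section $\sigma$ becomes the graph $\Sigma = \{(h(p),p) : p \in B\}$ of the smooth function $h := \pr_\bbR \circ \sigma : B \to \bbR$. Since the action only translates the $\tau$-coordinate, one reads off immediately that $J^\uparrow_\Theta(\Sigma) = \{(\tau,p) : \tau \geq h(p)\}$ and $J^\downarrow_\Theta(\Sigma) = \{(\tau,p) : \tau \leq h(p)\}$. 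The overall strategy is then to show that each of the two sets in the statement is a closed subset of a compact subset of $N$, hence compact.

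For closedness I would argue as follows: $J^\uparrow_\Theta(K)$ and $J^\downarrow_\Theta(K)$ are closed by Corollary \ref{cor:Theta_future_past_of_compact_is_closed} because $K$ is compact, while $J^\uparrow_\Theta(\Sigma)$ and $J^\downarrow_\Theta(\Sigma)$ are closed since, by the computation above, they are the preimages of the closed half-lines $[0,\infty)$ and $(-\infty,0]$ under the continuous map $\bbR \times B \to \bbR$, $(\tau,p) \mapsto \tau - h(p)$. Hence $J^\uparrow_\Theta(K) \cap J^\downarrow_\Theta(\Sigma)$ and $J^\downarrow_\Theta(K) \cap J^\uparrow_\Theta(\Sigma)$ are intersections of closed sets, so closed.

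For the enclosure in a compact set: compactness of $K$ gives that $K^\prime := \pr_B(K)$ is compact (hence closed) in $B$ and that $\pr_\bbR(K) \subseteq \bbR$ is bounded, say $K \subseteq [-R,R] \times B$; also $h$ attains a minimum $L$ and a maximum $H$ on the compact set $K^\prime$. If $(\tau,p) \in J^\uparrow_\Theta(K) \cap J^\downarrow_\Theta(\Sigma)$, then writing $(\tau,p) = \Theta(s,(\tau_0,p))$ with $(\tau_0,p) \in K$ and $s \geq 0$ gives $p \in K^\prime$ and $\tau = \tau_0 + s \geq \tau_0 \geq -R$, while $(\tau,p) \in J^\downarrow_\Theta(\Sigma)$ gives $\tau \leq h(p) \leq H$; thus $J^\uparrow_\Theta(K) \cap J^\downarrow_\Theta(\Sigma) \subseteq [-R,H] \times K^\prime$, which is compact. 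The symmetric computation yields $J^\downarrow_\Theta(K) \cap J^\uparrow_\Theta(\Sigma) \subseteq [L,R] \times K^\prime$, also compact. A closed subset of a compact set is compact, which concludes the argument.

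I do not expect a serious obstacle: once the trivialization is chosen the statement reduces to bookkeeping with sub- and super-level sets of $(\tau,p) \mapsto \tau - h(p)$. The only point that needs a little care is the uniform lower (resp.\ upper) bound on the $\tau$-coordinate, which uses both that $K$ is bounded in the fiber direction and that, because $\sigma$ is a section, $\Sigma$ meets each orbit exactly once so that $J^{\uparrow/\downarrow}_\Theta(\Sigma)$ is a genuine sub-/super-level set of a smooth function. Without passing to the explicit trivialization one would instead have to check lower semicontinuity of $p \mapsto \min\{\tau : (\tau,p) \in K\}$, a minor nuisance I would prefer to avoid.
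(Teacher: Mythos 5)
Your proof is correct and follows essentially the same route as the paper. The paper's only simplification is to trivialize $N$ using the section $\sigma$ itself, via the diffeomorphism $\Theta_\Sigma : \mathbb{R} \times \Sigma \to N$, $(t,p)\mapsto\Theta(t,p)$, so that $\Sigma$ sits at $\{0\}\times\Sigma$ and your graph function $h$ is identically zero; the extremal values $L,H$ of $h$ over $K'$ then become superfluous and the bounding boxes reduce to $\Theta_\Sigma\left([x,0]\times \mathrm{pr}_\Sigma\Theta_\Sigma^{-1}(K)\right)$ and $\Theta_\Sigma\left([0,y]\times\mathrm{pr}_\Sigma\Theta_\Sigma^{-1}(K)\right)$.
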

\begin{proof}
By Corollary \ref{cor:proper_flow_gives_principal_bundle}, 
$\pi : N \to N / \mathbb{R}$ is a smooth principal $\mathbb{R}$-bundle.
Since smooth sections are smooth embeddings, its base $N / \mathbb{R}$ 
is diffeomorphic to $\Sigma$. Thus the trivialization induced by the section $\sigma$ 
gives a diffeomorphism $\Theta_\Sigma : \mathbb{R} \times \Sigma \xrightarrow{\sim} N\,,~(t,p) \mapsto \Theta(t,p)$.
\sk

We see that $J^\uparrow_\Theta(\Sigma) = \Theta_\Sigma \left([0,\infty) \times \Sigma\right)$
and $J^\downarrow_\Theta(\Sigma) = \Theta_\Sigma \left((-\infty, 0] \times \Sigma\right)$
are closed subsets in $N$,
so $J^{\uparrow / \downarrow}_\Theta(K) \cap J^{\downarrow / \uparrow}_\Theta (\Sigma)$ are 
also closed in $N$ because $J^{\uparrow / \downarrow}_\Theta(K)$ are closed by 
Corollary \ref{cor:Theta_future_past_of_compact_is_closed}.
\sk
	
With projections denoted $\pr_\mathbb{R} : \mathbb{R} \times \Sigma \to \mathbb{R}$ 
and $\pr_\Sigma : \mathbb{R} \times \Sigma \to \Sigma$, define 
$x := \inf \pr_\mathbb{R} \Theta_\Sigma^{-1} (K)$ which is finite because $K$ is compact.
Then $J^\uparrow_\Theta (K) \subseteq \Theta_\Sigma \left([x, \infty) \times \pr_\Sigma \Theta_\Sigma^{-1} (K) \right)$,
so that the closed set $J^\uparrow_\Theta(K) \cap J^\downarrow_\Theta(\Sigma)$ is contained in the 
compact set $\Theta_\Sigma \left([x,0] \times \pr_\Sigma \Theta_\Sigma^{-1} (K)\right)$ and is thus also compact.
Similarly, $J^\downarrow_\Theta(K) \cap J^\uparrow_\Theta(\Sigma)$ is compact since it is closed and 
contained in the compact set $\Theta_\Sigma \left([0,y] \times \pr_\Sigma \Theta_\Sigma^{-1} (K)\right)$, 
where $y := \sup \pr_\mathbb{R} \Theta_\Sigma^{-1} (K)$.
\end{proof}


\printbibliography

\end{document}